\pgfplotsset{compat=1.14}
\newtheorem{theorem}{Theorem}
\newtheorem{remark}{Remark}
\newtheorem{proposition}{Proposition}
\newtheorem{example}{Example}
\newtheorem{lemma}{Lemma}
\newtheorem{definition}{Definition}
\DeclareSymbolFont{bbold}{U}{bbold}{m}{n}
\DeclareSymbolFontAlphabet{\mathbbold}{bbold}
\newcommand{\1}{\mathbbold{1}}
\newcommand{\bF}{\mathbb{F}}
\newcommand{\cB}{\mathcal{B}}
\newcommand{\cD}{\mathcal{D}}
\newcommand{\cE}{\mathcal{E}}
\newcommand{\cI}{\mathcal{I}}
\newcommand{\cS}{\mathcal{S}}
\newcommand{\cU}{\mathcal{U}}
\newcommand{\bolda}{\mathbf{a}}
\newcommand{\boldb}{\mathbf{b}}
\newcommand{\boldc}{\mathbf{c}}
\newcommand{\boldd}{\mathbf{d}}
\newcommand{\bolde}{\mathbf{e}}
\newcommand{\boldz}{\mathbf{z}}
\def\namedlabel#1#2{\begingroup
	\def\@currentlabel{#2}%
	\label{#1}\endgroup
}
\begin{document}
\title{Correcting~$k$ Deletions and Insertions\\ in Racetrack Memory}

\author{\textbf{Jin Sima} \IEEEauthorblockN{ and \textbf{Jehoshua Bruck}}\\
	\IEEEauthorblockA{
	Department of Electrical Engineering,
California Institute of Technology, Pasadena 91125, CA, USA\\}}

\maketitle

\begin{abstract}
Racetrack memory is a tape-like structure where data is stored sequentially as a track of single-bit memory cells. The cells are accessed through read/write ports, called heads. When reading/writing the data, the heads stay fixed and the track is shifting. One of the main challenges in developing racetrack memory systems is the limited precision in controlling the track shifts, that in turn affects the reliability of reading and writing the data. A current proposal for combating deletions in racetrack memories is to use redundant heads per-track resulting in multiple copies (potentially erroneous) and recovering the data by solving a specialized version of a sequence reconstruction problem. Using this approach, $k$-deletion correcting codes of length $n$, with $d \geq 2$ heads per-track, with redundancy $\log \log n + 4$ were constructed. However, the known approach requires that $k \leq d$, namely, that the number of heads ($d$) is larger than or equal to the number of correctable deletions ($k$). Here we address the question: What is the best redundancy that can be achieved for a $k$-deletion code ($k$ is a constant) if the number of heads is fixed at $d$ (due to implementation constraints)? One of our key results is an answer to this question, namely, we construct codes that can correct $k$ deletions, for any $k$ beyond the known limit of $d$. The code has $4k \log \log n+o(\log \log n)$ redundancy for $k \leq 2d-1$. In addition, when $k \geq 2d$, our codes have $2 \lfloor k/d\rfloor \log n+o(\log n)$ redundancy, that we prove it is order-wise optimal,  specifically, we prove that the redundancy required for correcting $k$ deletions is at least $\lfloor k/d\rfloor \log n+o(\log n)$. The encoding/decoding complexity of our codes is $O(n\log^{2k}n)$.  Finally, we ask a general question: What is the optimal redundancy for codes correcting a combination of at most $k$ deletions and insertions in a $d$-head racetrack memory? We prove that the redundancy sufficient to correct a combination of $k$ deletion and insertion errors is similar to the case of $k$ deletion errors. 
\end{abstract}

\section{Introduction}
Racetrack memory is a promising non-volatile memory that possesses the advantages of ultra-high storage density and low latency (comparable to SRAM latency) \cite{parkin2008magnetic,sun2013cross}. It has a tape-like structure where the data is stored sequentially as a track of single-bit memory cells. The cells are accessed through read/write ports, called heads. When reading/writing the data, the heads stay fixed and the track is shifting.

One of the main challenges in developing racetrack memory systems is the limited precision in controlling the track shifts, that in turn affects the reliability of reading and writing the data~\cite{hayashi2008current,zhang2015hi}. Specifically, the track may either not shift or shift more steps than expected. When the track does not shift, the same cell is read twice, causing a sticky insertion. When the track shifts more than a single step, cells are skipped, causing deletions in the reads~\cite{chee2018coding}.  

It is natural to use deletion and sticky insertion correcting codes to deal with shift errors. Also, it is known that a code correcting~$k$ deletions is capable of correcting~$s$ deletions and~$r$ insertions when~$s+r\le k$ \cite{levenshtein1966binary}. However, designing redundancy and complexity efficient deletion correcting codes has been an open problem for decades, though there is a significant advance toward the solution recently. In fact, no deletion correcting codes with rate approaching~$1$ were known until~\cite{brakensiek2016efficient} proposed a code with redundancy~$128k^2\log k\log n +o(\log n)$.  
Evidently, for $k$, a constant number of deletions, the redundancy of this code is orders of magnitude away from optimal, known to be in the range~$k\log n +o(\log n)$ to~$2k\log n +o(\log n)$~\cite{levenshtein1966binary}. After \cite{brakensiek2016efficient}, the work of  \cite{CJLW18} and \cite{sima2019deletion} independently proposed $k$-deletion codes with $O(k\log n)$ bits of redundancy, which are order-wise optimal. Following \cite{sima2019deletion},  \cite{sima2020deletion} proposed a systematic deletion code with $4k\log n +o(\log n )$ bits of redundancy and is computationally efficient for constant $k$. The redundancy was later improved in \cite{song2021deletion} to $(4k-1)\log n +o(\log n)$. Despite the recent progress in deletion and insertion correcting codes, it is still tempting to explore constructions of deletion and insertion correcting codes that are specialized for racetrack memories and might provide more efficient redundancy and lower complexity encoding/decoding algorithms.  

There are two approaches for construction of codes for racetrack memories. The first is to leverage the fact that there are multiple parallel tracks with a single head per-track, and the second, is to add redundant heads per-track. For the multiple parallel head structure, the proposed codes in~\cite{vahid2017correcting} can correct up to two deletions per head and the proposed codes in~\cite{chee2018codes} can correct~$l$ bursts of deletions, each of length at most~$b$. The codes in~\cite{chee2018codes} are asymptotically (in the number of heads) rate-optimal.
The second approach for combating deletions in racetrack memories is to use redundant heads per-track~\cite{zhang2015hi,chee2018coding,cheereconstruction}. As shown in Fig.~\ref{figure:racetrackmemory}, a track is read by multiple heads, resulting in multiple copies (potentially erroneous) of the same sequence.  This can be regarded as a sequence reconstruction problem, where a sequence~$\boldc$ needs to be recovered from multiple copies, each obtained after~$k$ deletions in~$\boldc$. We emphasize that the general sequence reconstruction problem~\cite{levenshtein1997recon} is different from the current settings, as here the heads are at fixed and known positions, hence, the set of deletions locations in one head is a shift of that in another head~\cite{chee2018coding}. This is because the heads stay fixed and thus the deletion locations in their reads have fixed relative distances.
Demonstrating the advantage of multiple heads, the paper~\cite{cheereconstruction} proposed an efficient $k$-deletion code of length~$n$ with redundancy~$\log\log n+4$ and a $(k-1)$-deletion code with $O(1)$ redundancy, both using $k$ heads. In contrast, for general $k$-deletion codes the lower bound on the redundancy is $k\log n$. However, the code in~\cite{cheereconstruction} is required to use $d$~heads and is limiting $k$ to be smaller or equal to $d$\footnote{Throughout the paper, it is assumed that $d\ge 2$.}. It is known that the number of heads affects the area overhead of the racetrack memory device~\cite{chee2018coding}, hence, it motivates the following \textbf{natural question}: What is the best redundancy that can be achieved for a $k$-deletion code ($k$ is a constant) if the number of heads is fixed at $d$ (due to area limitations)? 

One of our \textbf{key results} is an answer to this question, namely, we construct codes that can correct $k$ deletions, for any $k$ beyond the known limit of $d$. Our code has~$O(4k\log\log n)$ redundancy for the case when~$d\le k\le 2d-1$. In addition, when~$k\ge 2d$, the code has~$2\lfloor k/d\rfloor \log n+o(\log n)$ redundancy.
Our key result is summarized formally by the following theorem. Notice that the theorem implies that the redundancy of our codes is asymptotically larger than optimal by a factor of at most four.
\begin{theorem}\label{theorem:main}
For a constant integer~$k$, let the distance $t_i$ between the $i$-th and $(i+1)$-th heads be~$t_i\ge\max\{(3k+\lceil \log n \rceil+2)[k(k-1)/2+1]+(7k-k^3)/6,(4k+1)(5k+\lceil \log n \rceil+3)\}$ for $i\in\{1,\ldots,d-1\}$. Then for~$d\le k\le 2d-1$, there exists a length~$N=n+4k \log\log n+o(\log\log n)$~$d$-head~$k$-deletion correcting code with redundancy~$4k \log\log n+o(\log\log n)$.
For~$k\ge 2d$, there exists a length~$N=n+2\lfloor k/d\rfloor \log n+o(\log n)$~$d$-head~$k$-deletion correcting code with redundancy~$2\lfloor k/d\rfloor \log n+o(\log n)$.
The encoding and decoding functions can be computed in~$O(n\log^{2k}n)$ time. 
Moreover, for~$k\ge 2d$ and~$t_i=n^{o(1)}$, the amount of redundancy of~a~$d$-head~$k$-deletion correcting code is lower bounded by~$\lfloor k/2d\rfloor \log n+o(\log n)$.
\end{theorem}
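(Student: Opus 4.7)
My plan is to establish the lower bound via a clique / sphere-packing argument in the \emph{confusability graph} $G$ on the vertex set $\{0,1\}^N$, in which two sequences are adjacent iff they can produce a common $d$-head output under some configurations of $k$ deletion events. A $d$-head $k$-deletion correcting code is precisely an independent set in $G$, so exhibiting a clique of size $M$ combined with a volume argument yields $|\mathcal{C}|\le 2^N/M$. With $\delta_i=\sum_{j<i}t_j \le n^{o(1)}$ denoting the offset of head $i$, a single deletion event at track-time $\tau$ causes head $i$ to skip position $\tau+\delta_i$ of the track; $k$ events $\tau_1<\cdots<\tau_k$ therefore yield head $i$'s output by removing the positions $\{\tau_j+\delta_i\}_{j}$ from its reading window.

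\textbf{Constructing the clique.} Set $m=\lfloor k/(2d)\rfloor$. Partition $[1,N]$ into $m$ disjoint ``slots'' $I_1,\ldots,I_m$ of length $\Theta(N/m)$, separated and bordered by zero-buffers of width $\omega(\delta_d)$. Fix a background bit-pattern on the positions outside the slots and buffers. For each tuple $\mathbf{p}=(p_1,\ldots,p_m)$ with $p_l$ in the interior of $I_l$, define $\boldc_{\mathbf{p}}$ to be the sequence equal to the fixed background outside slots and buffers, zero throughout every buffer, and zero inside each $I_l$ except for a ``$01$''-marker at positions $p_l$ and $p_l+1$. This gives $n^{m-o(1)}$ distinct sequences of length $N$. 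For any $\mathbf{p},\mathbf{p}'$ let $D_{\mathbf{p}}=\bigcup_{l=1}^{m}\{p_l-\delta_i,\ p_l+1-\delta_i:1\le i\le d\}\cup E$, where $E$ consists of $k-2dm$ filler events placed in a common all-zero region of all $\boldc_{\mathbf{p}}$. The event $\tau=p_l-\delta_i$ causes head $i$ to miss position $p_l$ and the event $\tau=p_l+1-\delta_i$ causes it to miss $p_l+1$, erasing both marker bits from every head's output. Remaining events $\tau=p_l-\delta_{i'}$ with $i'\ne i$ cause head $i$ to miss ``collateral'' positions $p_l+(\delta_i-\delta_{i'})$ within an $n^{o(1)}$-neighbourhood of $p_l$, which lie in the zero-buffer. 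Because the rest of the slot content is uniformly zero, removing any $2d$ positions from $I_l$ produces an all-zero string of length $|I_l|-2d$ regardless of the marker position. Hence every $\boldc_{\mathbf{p}}$ produces the same $d$-head output under $D_{\mathbf{p}}$, so the $\{\boldc_{\mathbf{p}}\}$ form a clique of size $n^{m-o(1)}$ in $G$.

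\textbf{From the clique to a global code-size bound.} Each choice of background gives an independent clique of this size; thus any code $\mathcal{C}$ contains at most one codeword per background-template pair. A standard volume / averaging argument over cyclic translations of the slot geometry (using that the $d$-head $k$-deletion structure is invariant under cyclic shifts of the track, modulo negligible boundary effects) lifts the per-template bound to the global estimate $|\mathcal{C}|\le 2^N/n^{m-o(1)}$, which gives redundancy $\ge \lfloor k/(2d)\rfloor \log n + o(\log n)$.

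\textbf{Main obstacle.} The critical verification is the confusability claim: the $2d$ marker-erasing events per marker must cause no spurious mismatches in any head. This hinges on (i) $t_i=n^{o(1)}$, which keeps the $O(\delta_d)$-radius collateral deletions inside the zero-buffers, and (ii) the chosen buffer width $\omega(\delta_d)$, which absorbs the collateral so it only deletes zeros. A secondary difficulty is the volume argument lifting the per-template clique bound to a global code-size bound; it relies on the cyclic translation symmetry of the construction to average over all shifts of the slot geometry, with the loss absorbed into the $o(\log n)$ slack.
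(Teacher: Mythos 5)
Your proposal addresses only the final claim of the theorem (the lower bound for $k\ge 2d$, $t_i=n^{o(1)}$); it says nothing about the constructive parts (the $4k\log\log n$ and $2\lfloor k/d\rfloor\log n$ codes and their complexity), so even if the lower-bound argument were complete it would not prove the statement. More importantly, the lower-bound argument has a genuine gap at the step you yourself flag as "standard": a clique in the confusability graph only shows that the code contains at most one codeword \emph{from that clique}; to conclude $|\mathcal{C}|\le 2^N/M$ you need the cliques to (essentially) cover $\{0,1\}^N$ with small overlap, or a vertex-transitivity/fiber-wise argument. Your cliques consist exclusively of sequences that are all-zero on the slots and buffers except for one ``01'' marker per slot, i.e.\ they are supported on a set of size roughly $2^{N-\Theta(N)}\cdot n^{m}$ — an exponentially small, highly structured fraction of the space. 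Averaging over cyclic translations of the slot geometry does not help: every translate still demands nearly-all-zero slots, so a generic codeword lies in none of your cliques and is simply not constrained. The construction cannot be pushed through an arbitrary background either, because your confusability relies on the surviving slot content being constant (all zeros) so that deleting \emph{different} position sets from the two sequences yields identical reads; with arbitrary surrounding bits, two sequences whose markers sit at different positions produce different outputs.

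The paper avoids exactly this by not asking for identical outputs between structurally special sequences. It fixes a sampling pattern (period $2T_{sum}$, $T_{sum}=\sum_i t_i$), shows that $d$ deletion events can remove one sampled bit from \emph{all} heads, and then invokes a genie that reveals the locations and values of all other (collateral) deleted bits. This turns the problem, for every fixing of the non-sampled coordinates, into correcting $\lfloor k/d\rfloor$ adversarial erasures among the $\approx n/(2T_{sum})$ sampled bits; hence on each fiber the induced code must have Hamming distance at least $\lfloor k/d\rfloor+1$, and the sphere-packing (Hamming) bound with radius $\lfloor k/2d\rfloor$ gives $2^{-\lfloor k/2d\rfloor\log n+o(\log n)}$ per fiber, which sums to the stated bound. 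The genie (equivalently, varying only bits that are deleted from every head while keeping all other bits fixed but arbitrary) is the missing idea that makes the argument apply through every point of the space; without it, or without the adversarial choice of \emph{which} sampled positions are erased (which is where the $\log n$ factor comes from via the Hamming bound rather than a single clique), your per-template clique bound does not lift to a bound on $|\mathcal{C}|$.
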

Since in addition to deletion errors, sticky insertion errors and substitution errors occur in racetrack memory, we are
interested in codes that correct not only deletions, but a combination of deletion, sticky insertion, and substitution errors in a multiple head racetrack memory. However, in contrast to single head cases where a deletion code is also a deletion/insertion code, there is no such equivalence in multiple head racetrack memories. Correcting a combination of at most $k$ deletions and sticky insertions in total turns out to be more difficult than correcting $k$ deletion errors. It is not known whether the $k$-deletion code with $\log\log n +O(1)$ redundancy and the $(k-1)$-deletion code with $O(1)$ redundancy in \cite{chee2018coding} apply to a combination of deletion and sticky insertion errors in a $k$-head racetrack memory. 

Our second result, which is the main result in this paper, provides an answer for such scenarios. We consider a more general problem of correcting a combination of deletions and insertions in a $d$-head racetrack memory, rather than deletions and sticky insertions, and show that the redundancy result for deletion cases extends to cases with a combination of deletions and insertions. Note that this covers the cases with deletion, insertion, and substitution errors, since a substitution is a deletion followed by an insertion.
\begin{theorem}\label{theorem:mainedit}
For a constant integer~$k$, let the distance $t_i$ between the $i$-th and $(i+1)$-th heads be equal and~$t_i=t>(\frac{k^2}{4}+ 3k)(6k+\lceil \log n \rceil+3)+8k+\lceil \log n \rceil+3$ for $i\in\{1,\ldots,d-1\}$. Then for~$k<d$, there exists a length~$N=n+k+1+O(1)$~code correcting a combination of at most $k$ insertions and deletions in a $d$-head racetrack memory with redundancy~$k+1+O(1)$. The encoding and decoding complexity is $poly(n)$.
For~$d\le k\le 2d-1$, there exists a length $N=n+4k\log\log n+o(\log\log n)$ code correcting a combination of at most $k$ insertions and deletions in a $d$-head racetrack memory with redundancy $4k\log\log n+o(\log\log n)$.
 Finally, when $d\ge 2d$,
there exists a length~$N=n+2\lfloor k/d\rfloor \log n+o(\log n)$~code that corrects a combination of at most $k$ insertions and deletions in a $d$-head racetrack memory with redundancy~$2\lfloor k/d\rfloor \log n+o(\log n)$.
The encoding and decoding functions can be computed in~$O(n\log^{2k}n)$ time. 
\end{theorem}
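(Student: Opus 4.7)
The overall plan is to mirror the three-case structure of Theorem~\ref{theorem:main}, replacing every deletion-specific ingredient by one that is also insertion-aware. The main conceptual obstacle is that a code correcting $k$ deletions in a $d$-head memory is \emph{not} automatically a code correcting a combination of $k$ insertions and deletions: inserting bits breaks the ``shift invariance'' of error positions across heads that the deletion-only analysis exploits, so an insertion-aware alignment procedure must be developed before any coding-theoretic argument can be re-used.

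For the easy range $k<d$, I would exploit the fact that with more heads than errors there must exist an adjacent pair of heads whose overlapping read interval is received identically from both reads. I would use that pair as an anchor, splice together the uncorrupted portions of the $d$ reads to recover $\boldc$ up to a single localized insertion/deletion event, and correct that residual event with a $(k{+}1)$-bit Varshamov--Tenengolts-style parity together with a short $O(1)$-redundancy locator. Because the alignment argument pins down the residual error-region to within $O(\log n)$ positions before the parity is consulted, the analysis carries over from the deletion case with only cosmetic changes, giving redundancy $k+1+O(1)$.

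For $d\le k\le 2d{-}1$ and $k\ge 2d$, I would recycle the hash-and-check architecture of Theorem~\ref{theorem:main}: the encoder partitions the message into length-$O(\log n)$ windows, attaches a sparse hash per window so the windows can be uniquely identified in the reads, and appends a residual parity check. The novelty is that the residual check must correct a combined insertion/deletion pattern rather than deletions only; I would plug in a single-sequence $k$-insertion/deletion correcting code of redundancy $O(k\log n)$ in place of the pure deletion code used before. The multiple heads then reduce the number of residual errors the inner code must absorb to a constant for $d\le k\le 2d{-}1$ (yielding $4k\log\log n+o(\log\log n)$ via a short additional identifier layer), and to at most $d$ per one of $\lfloor k/d\rfloor$ concatenated blocks for $k\ge 2d$ (yielding $2\lfloor k/d\rfloor\log n+o(\log n)$).

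The hard step, and the one I would tackle first, is the insertion-aware alignment. On input $d$ noisy reads, the decoder must determine for each head $i$ the net displacement $\delta_i=(\text{\#insertions in head }i)-(\text{\#deletions in head }i)$ and pair up the uncorrupted hash-windows across heads. I expect to prove this using the assumption $t>(k^2/4+3k)(6k+\lceil\log n\rceil+3)+O(k+\log n)$: this separation is large enough that each erroneous region touches at most one head's read-boundary, so the windows used as anchors are error-free, and large enough that the sparse hashes make spurious matches between mis-shifted corrupted windows impossible. Once alignment is established, the remainder of the argument, including the $O(n\log^{2k}n)$ encoding and decoding complexity, is routine and parallels Theorem~\ref{theorem:main}.
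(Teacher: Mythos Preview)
Your proposal misidentifies where the genuine difficulty lies. You say the novelty is to ``plug in a single-sequence $k$-insertion/deletion correcting code in place of the pure deletion code used before,'' but by Levenshtein's equivalence (Lemma~\ref{lemma:deletionequivalenttoinsertion}) the single-read hash $Hash$ already corrects any mix of $k$ insertions and deletions, so nothing needs to change there. The real obstacle---which your alignment sketch does not confront---is that in the multi-head setting a region can suffer, say, one deletion and one insertion per head and produce reads that \emph{all agree with one another}, so the interval-finding procedure that in the deletion-only proof detects every error location by comparing reads can now miss error intervals entirely. Your plan assumes the anchors/hash-windows localize all errors; they do not. The paper's proof hinges on the nontrivial fact (Proposition~\ref{proposition:numberofuncorrectableerrors} and the surrounding analysis) that any such \emph{undetectable} edit-isolated interval must contain at least $2d$ errors, which is what makes the Reed--Solomon distance $2\lfloor k/d\rfloor+1$ suffice despite the decoder not knowing where those intervals are. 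Without this counting lemma, your architecture for $k\ge d$ does not close.

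For $k<d$, your ``find an adjacent pair of heads that agree on their overlap, splice, and mop up one residual event with a VT parity'' is both different from and weaker than what the paper does. First, the premise can fail: with insertions present, \emph{all} heads may agree on a stretch yet still contain errors, so finding an agreeing pair is not evidence of correctness. Second, you have not argued why splicing leaves only a \emph{single} residual event; in general it does not. The paper instead proves (Lemma~\ref{lemma:keditkheads}, via Propositions~\ref{proposition:majoritybit} and~\ref{proposition:numberofheadswitherrors}) that at the first column where the reads disagree one can point to a specific head $w^*$ that has an error within a $T+O(k)$ window, then splice two adjacent reads there to manufacture a new $(d-1)$-row read matrix with at most $k'-1$ errors in the relevant interval; iterating drives the error count to zero using only the $k{+}1$ bits of the period-limiting transform $F$, with no VT layer at all. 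Your $O(1)$-redundancy claim for the VT step is also unsupported: a VT check over an $O(\log n)$ window costs $\Theta(\log\log n)$ bits, not $O(1)$.
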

\begin{remark}
Theorem \ref{theorem:mainedit} improves the head distance in Theorem \ref{theorem:main} when $k\ge 15$ and $n$ is sufficiently large.
\end{remark}
\textbf{Organization:} In Section~\ref{section:preliminaries}, we  present the problem settings and some basic lemmas needed in our proof. Section~\ref{section:lessthan2mdeletions} presents the proof of the main result for the case~$k\le 2d-1$.  Section~\ref{section:synchronization} describes in detail how to synchronize the reads. The case~$k\ge 2d$ is addressed in Section 
~\ref{section:greaterthan2d}. Section \ref{section:correctediterrors} shows how to correct deletion and insertion errors and proves Theorem \ref{theorem:mainedit}. Section~\ref{section:conclusion} concludes the paper.
\begin{figure}
\centering
\definecolor{ffqqqq}{rgb}{1,0,0}
\definecolor{rvwvcq}{rgb}{0.08235294117647059,0.396078431372549,0.7529411764705882}
\begin{tikzpicture}[line cap=round,line join=round,>=triangle 45,x=1cm,y=1cm]
xtick={-7,-6,...,7},
ytick={-7,-6,...,3},]
\clip(-3,3.6) rectangle (7,6);
\fill[line width=0pt,color=rvwvcq,fill=rvwvcq,fill opacity=0.5] (2,5) -- (2.3,5) -- (2.3,5.5) -- (2,5.5) -- cycle;
\fill[line width=0pt,color=rvwvcq,fill=rvwvcq,fill opacity=0.5] (1.8,5) -- (2.5,5) -- (2.15,4.75) -- cycle;

\fill[line width=0pt,color=rvwvcq,fill=rvwvcq,fill opacity=0.5] (0,5) -- (0.3,5) -- (0.3,5.5) -- (0,5.5) -- cycle;
\fill[line width=0pt,color=rvwvcq,fill=rvwvcq,fill opacity=0.5] (-0.2,5) -- (0.5,5) -- (0.15,4.75) -- cycle;

\fill[line width=0pt,color=rvwvcq,fill=rvwvcq,fill opacity=0.5] (-1,5) -- (-0.7,5) -- (-0.7,5.5) -- (-1,5.5) -- cycle;
\fill[line width=0pt,color=rvwvcq,fill=rvwvcq,fill opacity=0.5] (-1.2,5) -- (-0.5,5) -- (-0.85,4.75) -- cycle;

\draw [line width=1pt] (-0.35,4.5)-- (-0.35,3.8);
\draw [line width=1pt] (0.65,4.5)-- (0.65,3.8);
\draw [line width=1pt] (-1.35,4.5)-- (-1.35,3.8);
\draw [line width=1pt] (1.65,4.5)-- (1.65,3.8);
\draw [line width=1pt] (2.65,4.5)-- (2.65,3.8);
\draw [line width=1pt] (4.15,4.5)-- (4.15,3.8);
\draw [line width=2pt] (5.15,4.5)-- (5.15,3.8);
\draw [line width=2pt] (5.15,4.5)-- (-3,4.5);
\draw [line width=2pt] (5.15,3.8)-- (-3,3.8);

\draw (4.65,4.15) node[anchor=center] {$c_1$};
\draw (3.4,4.15) node[anchor=center] {$\ldots$};
\draw (2.15,4.15) node[anchor=center] {$c_6$};
\draw (1.15,4.15) node[anchor=center] {$c_7$};
\draw (0.15,4.15) node[anchor=center] {$c_8$};
\draw (-0.85,4.15) node[anchor=center] {$c_9$};
\draw (-2.1,4.15) node[anchor=center] {$\ldots$};
\fill[line width=0pt,color=rvwvcq,fill=black,fill opacity=0.8] (5.75,4.1) -- (6.25,4.1) -- (6.25,4.2) -- (5.75,4.2) -- cycle;
\fill[line width=0pt,color=rvwvcq,fill=black,fill opacity=0.8] (6.25,4.03) -- (6.25,4.27) -- (6.5,4.15) -- cycle;
\end{tikzpicture}\caption{Racetrack memory with multiple heads.
}
\label{figure:racetrackmemory}
\end{figure}
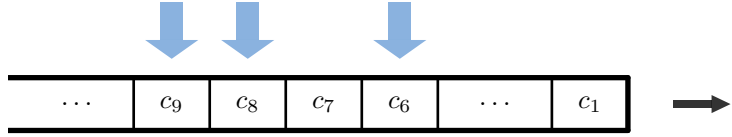
\section{Preliminaries}\label{section:preliminaries}
\subsection{Problem Settings}
We now describe the problem settings and the notations needed.
 For any two integers~$i\le j$, let~$[i,j]=\{i,i+1,\ldots,j-1,j\}$ be an integer interval that contains all integers between~$i$ and~$j$. Let~$[i,j]=\emptyset$ for~$i>j$.
For a length~$n$ sequence~$\boldc=(c_1,\ldots,c_n)$, an index set~$\mathcal{I}\subseteq [1,n]$, let
\begin{align*}
    \boldc_{\mathcal{I}}=(c_i:i\in \mathcal{I})
\end{align*}
be a subsequence of~$\boldc$, obtained by choosing bits with locations in the location set~$\mathcal{I}$. Denote by~$\mathcal{I}^c=[1,n]\backslash \mathcal{I}$ the complement of~$\mathcal{I}$. 

In the channel model of a~$d$-head racetrack memory, the input is a binary sequence $\boldc\in\{0,1\}^n$. The channel output consists of $d$ subsequences of $\boldc$ of length $n-k$, obtained by the $d$ heads after $k$ deletions in the channel input $\boldc$, respectively. Each subsequence is called a \emph{read}. 
Let $\boldsymbol{\delta}_i=\{\delta_{i,1},\ldots,\delta_{i,k}\}\subseteq [1,n]$ be the deletion locations in the $i$-th head such that $\delta_{i,1}<\ldots<\delta_{i,k}$. Then, the read from the $i$-th head is given by $\boldc_{\boldsymbol{\delta}^c_i}$,~$i\in[1,d]$, i.e., bits $c_{\ell}$, $\ell\in \boldsymbol{\delta}_i$ are deleted. 

Note that in a~$d$-head racetrack memory, the heads are placed in fixed positions, and the deletions are caused by "over-shifts" of the track. Hence when a deletion occurs at the $j$-th bit in the read of the $i$-th head, a deletion also occurs at the $(j+t_i)$-th bit in the read of the $(i+1)$-th head, where $t_i$ is the distance between the $i$-th head and the $(i+1)$-th head, $i\in[d-1]$. Then,
the deletion location sets 
$\{\boldsymbol{\delta}_i\}^d_{i=1}$ satisfy
\begin{align*}
    \boldsymbol{\delta}_{i+1}=\boldsymbol{\delta}_i + t_i,
\end{align*}
for positive integers~$t_i$,~$i\in [1,d-1]$, where for an integer set~$\mathcal{S}$ and an integer~$t$,~$\mathcal{S}+t=\{x+t:x\in \mathcal{S}\}$.

To formally define a code for the $d$-head racetrack memory, 
we represent the $d$ reads from the $d$ heads by a $d\times (n-k)$ binary matrix, called the \emph{read matrix}. The $i$-th row of the read matrix is the read from the $i$-th head. Let~$\boldsymbol{D}(\boldc,\boldsymbol{\delta}_1,\ldots,\boldsymbol{\delta}_d)\in\{0,1\}^{d\times (n-k)}$ be the read matrix of a $d$-head racetrack memory, where the input is $\boldc\in\{0,1\}^n$ and the deletion locations in the $i$-th head are given by $\boldsymbol{\delta}_i$, $i\in[1,d]$. By this definition, the~$i$-th row of~$\boldsymbol{D}(\boldc,\boldsymbol{\delta}_1,\ldots,\boldsymbol{\delta}_d)$ is~$\boldc_{\boldsymbol{\delta}^c_i}$. 
\begin{example}\label{example:deletions}
Consider a~$3$ head racetrack memory with head distance~$t_1=1$ and~$t_2=2$. Let the deletion location set~$\boldsymbol{\delta}_1=\{2,5,7\}$. Then, we have that~$\boldsymbol{\delta}_2=\{3,6,8\}$ and~$\boldsymbol{\delta}_3=\{5,8,10\}$. Let $\boldc=(1,1,0,1,0,0,0,1,0,1)$ be a sequence of length~$10$. Then, the read matrix is given by
\begin{align*}
    \boldsymbol{D}(\boldc,\boldsymbol{\delta}_1,\boldsymbol{\delta}_2,\boldsymbol{\delta}_3)=\begin{bmatrix}
    1 & 0 & 1 & 0  & 1 & 0 & 1 \\
    1 & 1 & 1 & 0  & 0 & 0 & 1 \\
    1 & 1 & 0 & 1  & 0 & 0 & 0 
\end{bmatrix}. 
\end{align*}
\end{example}
The deletion ball~$\cD_k(\boldc,t_1,\ldots,t_{d-1})$ of a sequence~$\boldc\in \{0,1\}^n$ is the set of all possible read matrices in a $d$-head racetrack memory with input $\boldc$ and head distance $t_i$, $i\in[1,d-1]$, i.e.,
\begin{align*}
 \cD_k(\boldc,t_1,\ldots,t_{d-1})=\{\boldsymbol{D}(\boldc,\boldsymbol{\delta}_1,\ldots,\boldsymbol{\delta}_d):&\boldsymbol{\delta}_{i+1}=\boldsymbol{\delta}_i + t_i,\boldsymbol{\delta}_i\subseteq [1,n], |\boldsymbol{\delta}_i|=k,i\in [1,d-1]\}.   
\end{align*}
A~$d$-head~$k$-deletion code~$\mathcal{C}$ is the set of all sequences such that the deletion balls of any two do not intersect, i.e., for any~$\boldc,\boldc'\in \mathcal{C}$,~$\cD_k(\boldc,t_1,\ldots,t_{d-1})\cap \cD_k(\boldc',t_1,\ldots,t_{d-1})=\emptyset$.

The following notations will be used throughout the paper.
For a matrix~$\boldsymbol{A}$ and two index sets~$\mathcal{I}_1\subseteq [1,d]$ and~$\mathcal{I}_2\subseteq [1,n-k]$, let~$\boldsymbol{A}_{\mathcal{I}_1,\mathcal{I}_2}$ denote the submatrix of~$\boldsymbol{A}$ obtained by selecting the rows~$i\in\mathcal{I}_1$ and the columns~$j\in\mathcal{I}_2$. For any two integer sets~$\cS_1$ and~$\cS_2$, the set $\cS_1\backslash \cS_2=\{x:x\in \cS_1,x\notin \cS_2\}$ denotes the difference between sets~$\cS_1$ and~$\cS_2$.

A sequence~$\boldc\in\{0,1\}^n$ is said to have period~$\ell$ if~$c_{i}=c_{i+\ell}$ for~$i\in [1,n-\ell]$. We use~$L(\boldc,\ell)$ to denote the length of the longest subsequence of consecutive bits in~$\boldc$ that has period~$\ell$. Furthermore, define
\begin{align*}
    L(\boldc,\le k)\triangleq\max_{\ell\le k} L(\boldc,\ell).
\end{align*}
\begin{example}
Let the sequence~$\boldc$ be~$\boldc=(1,1,0,1,1,0,1,0,0)$. Then we have that $L(\boldc,1)=2$, since $\boldc =(\boldsymbol{1},\boldsymbol{1},0,\boldsymbol{1},\boldsymbol{1},0,1,\boldsymbol{0},\boldsymbol{0})$, that $L(\boldc,2)=4$, since $\boldc=(1,1,0,1,\boldsymbol{1},\boldsymbol{0},\boldsymbol{1},\boldsymbol{0},0)$, and that $L(\boldc,3)=7$, since $\boldc=(\boldsymbol{1},\boldsymbol{1},\boldsymbol{0},\boldsymbol{1},\boldsymbol{1},\boldsymbol{0},\boldsymbol{1},0,0)$. Thus, we have~$L(\boldc,\le 3)=7$.
\end{example}

\subsection{Racetrack Memory with Insertion and Deletion errors}
We now describe the notations and problem settings for $d$-head racetrack memories with a combination of insertion and deletion errors, which is similar to $d$-head racetrack memories with deletion errors only. In addition to the deletion errors described by deletion location sets $\{\boldsymbol{\delta}_i\}^d_{i=1}$ satisfying
\begin{align*}
    \boldsymbol{\gamma}_{i+1}=\boldsymbol{\gamma}_i + t_i,
\end{align*}
$i\in[1,d-1]$, 
and $|\boldsymbol{\delta}_i|=r$, $i\in[1,d]$, we consider insertion errors described by insertion location sets $\{\boldsymbol{\gamma}_i\}^d_{i=1}$ satisfying 
\begin{align*}
    \boldsymbol{\gamma}_{i+1}=\boldsymbol{\gamma}_i + t_i,
\end{align*}
$i\in[1,d-1]$, 
where $\boldsymbol{\gamma}_i=\{\gamma_{i,1},\ldots,\gamma_{i,s}\}$ for $i\in [1,d]$,
and the inserted bits  $\boldb_i=(b_{i,1},b_{i,2},\ldots,b_{i,s})$, $i\in[1,d]$. It is assumed that $\gamma_{i,j}\in[0,n]$ for $i\in[1,d]$ and $j\in[1,s]$.
As a result of the insertion errors, bit $b_{i,j}$ is inserted after the $\gamma_{i,j}$-th bit of $\boldc$ in the $i$-th head, for $i\in [1,d]$ and $j\in[1,s]$. When $\gamma_{i,j}=0$, the insertion occurs before $c_1$ in the $i$-th head. We note that $\boldb_i$ can be different for different $i$'s.

We call a deletion error or an insertion error an \emph{edit error}, or \emph{error} in Section \ref{section:correctediterrors}. For edit errors, define the read matrix $\boldsymbol{E}(\boldc,\boldsymbol{\delta}_1,\ldots,\boldsymbol{\delta}_d,\boldsymbol{\gamma}_1,\ldots,\boldsymbol{\gamma}_d,\boldb_1,\ldots,\boldb_d)\in\{0,1\}^{d\times (n+s-r)}$, for $i\in[1,d]$, as follows. The $i$-th row of $\boldsymbol{E}(\boldc,\boldsymbol{\delta}_1,\ldots,\boldsymbol{\delta}_d,\boldsymbol{\gamma}_1,\ldots,$  $\boldsymbol{\gamma}_d,\boldb_1,\ldots,\boldb_d)\in\{0,1\}^{d\times (n+s-r)}$ is obtained by deleting the bits $c_{\ell:\ell\in \boldsymbol{\delta}_i}$ and insert $b_{i,j}$ after  $c_{\gamma_{i,j}}$, for $i\in [1,d]$ and $j\in[1,s]$. In this paper, we consider $k$ edit errors. Hence, $r+s\le k$.
\begin{example}
(Follow-up of Example \ref{example:deletions}).
Consider a~$3$ head racetrack memory with head distance~$t_1=1$ and~$t_2=2$. Let the deletion location set~$\boldsymbol{\delta}_1=\{2,5,7\}$. Then, we have that~$\boldsymbol{\delta}_2=\{3,6,8\}$ and~$\boldsymbol{\delta}_3=\{5,8,10\}$. 
In addition, the insertion location set is given by $\boldsymbol{\gamma}_1=\{0,2\}$. Then, we have $\boldsymbol{\gamma}_2=\{1,3\}$, and $\boldsymbol{\gamma}_3=\{3,5\}$. Let $\boldb_1=(1,1)$, $\boldb_2=(1,0)$, $\boldb_3=(0,1)$
Let $\boldc=(1,1,0,1,0,0,0,1,0,1)$ be a sequence of length~$10$. Then, the read
matrix is given by
\begin{align*}
    \boldsymbol{E}(\boldc,\boldsymbol{\delta}_1,\boldsymbol{\delta}_2,\boldsymbol{\delta}_3,\boldsymbol{\gamma}_1,\boldsymbol{\gamma}_2,\boldsymbol{\gamma}_3,\boldb_1,\boldb_2,\boldb_3)=\begin{bmatrix}
    1 & 1 & 1 & 0 & 1 & 0 & 1 & 0 & 1 \\
    1 & 1 & 1 & 0 & 1 & 0 & 0 & 0 & 1 \\
    1 & 1 & 0 & 0 & 1 & 1 & 0 & 0 & 0 
\end{bmatrix}. 
\end{align*}
\end{example}
Define 
an edit ball~$\cE_k(\boldc,t_1,\ldots,t_{d-1})$ of a sequence~$\boldc\in \{0,1\}^n$ 
as the set of all possible read matrices in an~$d$-head racetrack memory with input $\boldc$ and head distance $t_i$, $i\in[1,d-1]$, i.e.,
\begin{align*}
 \cE_k(\boldc,t_1,\ldots,t_{d-1})=\{&\boldsymbol{E}(\boldc,\boldc,\boldsymbol{\delta}_1,\ldots,\boldsymbol{\delta}_d,\boldsymbol{\gamma}_1,\ldots,\boldsymbol{\gamma}_d,\boldb_1,\ldots,\boldb_d):\boldsymbol{\delta}_{i+1}=\boldsymbol{\delta}_i + t_i,
 \boldsymbol{\gamma}_{i+1}=\boldsymbol{\gamma}_i + t_i, \text{ for $i\in [1,d]$,}\\
 &
 \text{and }
 \boldsymbol{\delta}_i\subseteq [1,n], |\boldsymbol{\delta}_i|=r,
 \boldsymbol{\gamma}_i\subseteq [0,n], |\boldsymbol{\gamma}_i|=s, \boldb_i\in\{0,1\}^s
 \text{ for }i\in [1,d], r+s\le k, \}.   
\end{align*}
A~$d$-head~$k$ edit correction code~$\mathcal{C}$ is the set of all sequences such that the edit balls of any two do not intersect, i.e., for any~$\boldc,\boldc'\in \mathcal{C}$,~$\cE_k(\boldc,t_1,\ldots,t_{d-1})\cap \cE_k(\boldc',t_1,\ldots,t_{d-1})=\emptyset$.
\subsection{Lemmas}
In this section we present lemmas that will be used throughout the paper. Some of them are existing results. The following lemma describes a systematic Reed-Solomon code that can correct a constant number of erasures and can be efficiently computed (See for example \cite{welch1986error}).  
\begin{lemma}\label{lemma:rs}
Let~$k$, $q$, and~$n$ be integers that satisfy~$n+k\le q$. Then,  there exists a map~$RS_k:\bF^n_{q}\rightarrow \bF^k_q$, computable in $poly(n)$ time, such that~$\{(\boldc,RS_k(\boldc)):\boldc\in\bF^n_q\}$ is a~$k$ erasure correcting code.
\end{lemma}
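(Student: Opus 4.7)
The plan is to instantiate a standard systematic Reed--Solomon code via polynomial evaluation. The hypothesis $n+k\le q$ guarantees the existence of $n+k$ distinct elements $\alpha_1,\ldots,\alpha_{n+k}\in\bF_q$, and I fix such a choice once and for all. For a message $\boldc=(c_1,\ldots,c_n)\in\bF^n_q$, Lagrange interpolation produces a unique polynomial $p_{\boldc}(x)\in\bF_q[x]$ of degree at most $n-1$ with $p_{\boldc}(\alpha_i)=c_i$ for $i\in[1,n]$, and I would define $RS_k(\boldc)\triangleq(p_{\boldc}(\alpha_{n+1}),\ldots,p_{\boldc}(\alpha_{n+k}))$. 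By design, the codeword $(\boldc,RS_k(\boldc))$ is exactly the vector of evaluations of $p_{\boldc}$ at all $n+k$ chosen points, so the resulting code is an $[n+k,n]$ Reed--Solomon code in systematic form, with the $n$ message coordinates appearing verbatim in the first $n$ positions.

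To verify the $k$-erasure correction property, I would appeal to the classical fact that two distinct polynomials of degree at most $n-1$ agree on at most $n-1$ of the $\alpha_i$'s. Consequently, any two distinct codewords differ in at least $(n+k)-(n-1)=k+1$ coordinates, so no pattern of $k$ erasures can be consistent with more than one codeword. Concretely, given any $n$ surviving coordinates indexed by $\mathcal{I}\subseteq[1,n+k]$ with $|\mathcal{I}|=n$, I use Lagrange interpolation through $\{(\alpha_i,y_i):i\in\mathcal{I}\}$ to reconstruct $p_{\boldc}$ uniquely, and then evaluate $p_{\boldc}$ at the erased positions to fill in the missing symbols; in particular, $\boldc$ itself is recovered by reading off $p_{\boldc}(\alpha_1),\ldots,p_{\boldc}(\alpha_n)$.

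For complexity, the encoder performs one Lagrange interpolation on $n$ points (producing $p_{\boldc}$) followed by $k$ polynomial evaluations, which is $O(n^2)$ arithmetic operations over $\bF_q$ in the naive implementation and hence $poly(n)$. The erasure decoder is essentially identical: a single Lagrange interpolation on the $n$ unerased coordinates, again in $poly(n)$ time. There is no real obstacle beyond bookkeeping, since the content is entirely classical; the only condition that needs to be matched against the hypothesis is the availability of $n+k$ distinct evaluation points in $\bF_q$, which is exactly $n+k\le q$.
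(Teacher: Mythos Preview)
Your proof is correct and is the standard systematic Reed--Solomon construction via polynomial interpolation and evaluation. The paper itself does not prove this lemma at all; it states it as a known fact with a citation to \cite{welch1986error}, so your write-up supplies exactly the classical argument that the paper takes for granted.
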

The Reed-Solomon code requires~$O(\log n)$ redundancy for correcting~$k$ erasures.  Correcting a burst of two erasures requires less redundancy when the alphabet size of the code has order~$o(\log n)$. The following code for correcting consecutive two erasures will be used for the case when the number of deletions~$k$ is less than $2d$.
\begin{lemma}\label{lemma:consecutiveerasure}
For any integers~$n$ and~$q$, there exists a map~$ER: \bF^n_{q}\rightarrow \bF^2_q$, computable in~$O(n)$ time, such that the code~$\{(\boldc,ER(\boldc):\boldc\in\bF^n_q\}$ is capable of correcting two consecutive erasures.
\end{lemma}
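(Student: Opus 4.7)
The plan is to use two interleaved single-parity-check equations, one over the odd-indexed positions of the codeword and one over the even-indexed positions. The key observation driving the construction is that two consecutive positions always split into one odd and one even index, so after erasure each parity equation carries exactly one unknown and can be solved independently of the other.

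Concretely, write the codeword as $\boldd=(d_1,\ldots,d_{n+2})$ with $d_j=c_j$ for $j\in[1,n]$ and $(d_{n+1},d_{n+2})=(r_1,r_2)=ER(\boldc)$, and choose $r_1,r_2\in\bF_q$ so that
\[
\sum_{\substack{j\in[1,n+2]\\ j\text{ odd}}} d_j=0 \quad\text{and}\quad \sum_{\substack{j\in[1,n+2]\\ j\text{ even}}} d_j=0
\]
in $\bF_q$. Exactly one of the parity positions $n+1,n+2$ is odd and the other is even (depending on the parity of $n$), so each of $r_1$ and $r_2$ is uniquely determined as the negative of the appropriate partial sum of the entries of $\boldc$. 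Both partial sums can be accumulated in a single pass, giving $O(n)$ encoding time.

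For decoding, assume positions $i$ and $i+1$ have been erased for some $i\in[1,n+1]$. One of $i,i+1$ is odd and the other is even; say $i$ is odd (the other case is symmetric). The odd-indexed parity equation then has $d_i$ as its unique unknown, because every other odd-indexed symbol has been received, so $d_i$ is recovered as the negative sum of the surviving odd-indexed entries. The even-indexed parity equation recovers $d_{i+1}$ in exactly the same way. Each step is a single linear pass, so decoding is also $O(n)$.

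There is essentially no hard step; the only thing to verify carefully is that the argument applies uniformly at the boundary, when the erased pair overlaps the parity region, namely $\{i,i+1\}\in\{\{n,n+1\},\{n+1,n+2\}\}$. Because the two parity equations are defined over the entire codeword including $r_1$ and $r_2$, the odd/even separation argument goes through without modification for every $i\in[1,n+1]$, so no special casing is required. Note that the construction needs no assumption relating $q$ and $n$, which is exactly what distinguishes it from the Reed--Solomon construction in Lemma~\ref{lemma:rs}.
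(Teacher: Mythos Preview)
Your proposal is correct and takes essentially the same approach as the paper: two interleaved single-parity-check equations over the odd-indexed and even-indexed positions, exploiting that any two consecutive indices split as one odd and one even. The paper simply defines $ER(\boldc)$ as the pair of odd-index and even-index sums of $\boldc$; your formulation, which folds the redundancy symbols into the parity equations so that each full-codeword sum vanishes, is an equivalent restatement that has the mild advantage of handling the boundary erasures $\{n,n+1\}$ and $\{n+1,n+2\}$ uniformly without a separate check.
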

\begin{proof}
For a sequence~$\boldc=(c_1,\ldots,c_n)$
Let the code~$ER$ be given by 
\begin{align*}
    ER(\boldc)= (\sum^{\lfloor (n-1)/2 \rfloor}_{i=0}c_{2i+1},\sum^{\lfloor n/2 \rfloor}_{i=0}c_{2i}),
\end{align*}
which are the sums of symbols with odd and even indices respectively over field~$\bF_q$. Note that  two consecutive erasures are reduced to two single erasures, one in the even symbols and one in the odd symbols, which can be recovered with the help of~$ER(\boldc)$. Hence,~$(\boldc,ER(\boldc))$ can be recovered from two consecutive erasures.  
\end{proof}
Our construction is based on a systematic deletion code for a single read $d=1$, which was presented in \cite{sima2020deletion}.
\begin{lemma}\label{lemma:hash}
Let~$k$ be a fixed integer. For integers~$m$ and~$n$. There exists a hash function
\begin{align*}
  Hash:\{0,1\}^m\rightarrow \{0,1\}^{\lceil 4k\log m  +o(\log m)\rceil}  
\end{align*}
computable in~$O(m^{2k+1})$ time, such that any sequence~$\boldc\in\{0,1\}^{m}$ can be recovered from its 
length~$m-k$ subsequence with the help of~$Hash(\boldc)$. 
\end{lemma}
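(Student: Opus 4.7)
The plan is to adapt the construction of the systematic $k$-deletion code from \cite{sima2020deletion} and repackage it as a hash function. At the top level, $Hash(\boldc)$ is designed as the concatenation of two sketches: a \textbf{regularity sketch} that forces $\boldc$ to have a structure in which a deleted subsequence can be re-aligned to $\boldc$ up to at most $k$ short ambiguous windows, and a \textbf{VT-type syndrome sketch} that uniquely resolves the deletions inside those windows. The decoder uses the regularity sketch to localize the deletion positions to a few windows of length $O(\log m)$ and then uses the syndromes to identify the exact deleted symbols.

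First I would set a parameter $P = \Theta(\log m)$ and call a sequence $P$\emph{-strongly regular} if no length-$P$ substring of $\boldc$ reappears within a window of length $P^{2}$ following its first occurrence. A standard combinatorial argument shows that only a $o(1)$ fraction of strings in $\{0,1\}^m$ fail this property at any given position, so the set of ``irregular anchors'' can be recorded by a hash of length $o(\log m)$ using, say, a Reed--Solomon-type fingerprint over $\bF_{m}$ (invoking Lemma~\ref{lemma:rs}). The key combinatorial lemma I would then prove is: if $\boldc$ is $P$-strongly regular and $\boldc'$ is obtained from $\boldc$ by $k$ deletions, then aligning matching length-$P$ substrings of $\boldc'$ against $\boldc$ partitions the sequence into $k+1$ agreeing runs separated by at most $k$ ``confusable windows,'' each of length $O(\log m)$ and containing all the deletions. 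Restoring $\boldc$ therefore reduces to filling in these windows.

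Second, for the local step I would store, for the concatenation of the windows, a higher-order VT-type syndrome of the form $\bigl(\sum_i i^{\,j} c_i \bmod N_j\bigr)_{j=0,\ldots,k-1}$, with each $N_j$ chosen of order $m^{O(1)}$. The analysis of $k$-deletion VT codes guarantees that these $k$ moments distinguish every pair of sequences that differ by at most $k$ deletions within an interval of length polylog in $m$, so together with the alignment data the deletions can be recovered. To make the \emph{entire hash} polynomially computable in the encoding direction, I would include additional $o(\log m)$-bit ``pointer fingerprints'' that specify which run of $\boldc$ each confusable window belongs to, encoded via Lemma~\ref{lemma:rs}; on the decoding side one simply enumerates the $O(m^{\,k})$ possible deletion patterns inside the localized windows and tests the syndromes, giving $O(m^{2k+1})$ time overall.

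Adding up the pieces the syndrome contributes $\le 2k\log m + o(\log m)$ bits, the regularity/alignment fingerprints contribute another $\le 2k\log m + o(\log m)$ bits, and the lower-order overhead is $o(\log m)$, matching the claimed length $4k\log m+o(\log m)$. The \textbf{main obstacle} is the combinatorial step: proving that $P$-strong regularity really does confine every $k$-deletion ambiguity to $k$ windows of length $O(\log m)$ and that the VT moments have no spurious solutions inside those windows. This is where the sharpness of the constant $4$ comes from; loose bookkeeping on either the length of the windows or the size of the VT moduli inflates the redundancy to $(4k+c)\log m$, and matching the stated $4k\log m$ requires the tight analysis of $k$-dense substrings developed in \cite{sima2020deletion}.
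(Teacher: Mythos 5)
The paper does not prove this lemma at all: it is imported directly from the systematic $k$-deletion code of \cite{sima2020deletion} (the statement appears with a citation, and the $4k\log m+o(\log m)$ figure is exactly the redundancy of that construction). So the benchmark is whether your sketch stands on its own, and as written it does not. Two steps are genuinely gapped. First, the regularity step: the hash must work for \emph{every} $\boldc\in\{0,1\}^m$, and you cannot dispose of irregular strings by ``recording the irregular anchors'' in $o(\log m)$ bits. An adversarial string (e.g.\ a highly periodic one) violates $P$-strong regularity at $\Theta(m)$ positions, and even one anchor costs $\Theta(\log m)$ bits to name; Lemma~\ref{lemma:rs} gives erasure redundancy, not a cheap description of an arbitrary exception set. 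The constructions you are imitating do not record exceptions: they either transform the string into a constrained one at a per-string cost (as this paper does separately in Lemma~\ref{lemma:periodfree}, for a different, short-period constraint) or build the hash around a block/marker decomposition valid for arbitrary inputs.

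Second, the syndrome step: it is not true, as stated, that the $k$ moments $\sum_i i^{\,j} c_i \bmod N_j$, $j=0,\ldots,k-1$, separate every pair of sequences at deletion distance $k$ inside a window, even a polylogarithmic one. Higher-order VT-type syndromes resolve $k$ deletions only when the underlying sequence satisfies additional structural constraints, and proving that — and tightening the count to $4k\log m$ rather than $c\,k\log m$ — is precisely the technical content of \cite{sima2019deletion} and \cite{sima2020deletion}. Your final paragraph concedes that the ``main obstacle'' is exactly this and defers it to the cited work; at that point the honest proof is the paper's one-line citation, and the intermediate claims in the sketch (the $o(\log m)$ exception list, the unconditional moment separation, and the $2k\log m+2k\log m$ bookkeeping) are unestablished and in part false as stated.
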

We also use the following fact, proved in \cite{levenshtein1966binary}, which implies that a deletion correcting code can be used to correct a combination of deletions and insertions.
\begin{lemma}\label{lemma:deletionequivalenttoinsertion}
    A $k$-deletion correcting code is capable of correcting a combination of $r$ deletions and $s$ insertions, where $r+s\le k$.
\end{lemma}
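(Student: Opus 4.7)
The plan is to prove the contrapositive in terms of balls: show that whenever two codewords $\boldc_1,\boldc_2$ of length $n$ have disjoint $k$-deletion balls, then no common sequence $\boldy$ can be obtained from both by combinations of at most $k$ deletions and insertions. So I would suppose, for contradiction, that some $\boldy$ is obtained from $\boldc_1$ by $r_1$ deletions and $s_1$ insertions and from $\boldc_2$ by $r_2$ deletions and $s_2$ insertions, with $r_i+s_i\le k$, and drive out a common length-$(n-k)$ subsequence of $\boldc_1$ and $\boldc_2$.

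The first observation I would make is a length-matching identity: since $|\boldy|=n-r_1+s_1=n-r_2+s_2$, we get $r_1-s_1=r_2-s_2$, equivalently $r_1+s_2=r_2+s_1$. The second observation is that for each $i$, if we delete from $\boldy$ the $s_i$ positions that were inserted in the $i$-th edit process, we recover a subsequence of $\boldy$ of length $n-r_i$ that is at the same time a subsequence of $\boldc_i$ (namely, the $n-r_i$ symbols of $\boldc_i$ that were not deleted).

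Now I would apply both pruning operations simultaneously inside $\boldy$: remove from $\boldy$ both the $s_1$ positions inserted from $\boldc_1$ and the $s_2$ positions inserted from $\boldc_2$. The resulting sequence has length at least
\begin{equation*}
|\boldy|-s_1-s_2 \;=\; n-r_1-s_2 \;=\; n-r_2-s_1,
\end{equation*}
and by construction it is a common subsequence of $\boldc_1$ and $\boldc_2$. Using the length-matching identity, $r_1+s_2=\tfrac{1}{2}((r_1+s_1)+(r_2+s_2))\le k$, so this common subsequence has length at least $n-k$. Deleting further symbols if necessary yields a length-$(n-k)$ sequence lying in both $\cD_k(\boldc_1,\varnothing)$ and $\cD_k(\boldc_2,\varnothing)$ (the ordinary single-head deletion balls of radius $k$), contradicting the hypothesis that $\boldc_1,\boldc_2$ belong to a $k$-deletion correcting code.

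I do not expect any real obstacle here; the only subtle point is the bookkeeping that turns the identity $r_1+s_2=r_2+s_1$ into the clean bound $r_1+s_2\le k$, which is immediate from averaging the two constraints $r_i+s_i\le k$. Everything else reduces to the standard fact that deletions and insertions on $\boldy$ can be reversed to exhibit common subsequences with the original codewords.
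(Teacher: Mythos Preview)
Your argument is correct and is essentially the classical proof of this fact. The paper itself does not give a proof of this lemma; it simply cites Levenshtein's 1966 paper \cite{levenshtein1966binary}. Your route---taking a putative common output $\boldy$, stripping out the inserted positions from each process, and using the averaging identity $r_1+s_2=\tfrac{1}{2}\bigl((r_1+s_1)+(r_2+s_2)\bigr)\le k$ to bound the length of the resulting common subsequence---is exactly the standard argument that underlies Levenshtein's result, so there is nothing to contrast.
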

\begin{remark}
Note that the lemma does not hold in general in a multiple head racetrack memory considered in this paper.
\end{remark}
In addition, in order to synchronize the sequence~$\boldc$ in the presence of deletions, we need to transform~$\boldc$ to a sequence that has a limited length constraint on its periodic subsequences. Such constraint was used in~\cite{chee2018coding}, where it was proved that the redundancy of the code~$\{\boldc:L(\boldc,\le k)\le \lceil \log n \rceil+k+1\}$ is at most~$1$ bit. 
In the following lemma we present a method to transform any sequence to one that satisfies this constraint. The redundancy of our construction is~$k+1$ bits. However, it is small compared to the redundancy of the~$d$-head~$k$-deletion code.  
\begin{lemma}\label{lemma:periodfree}
For any integers~$k$ and~$n$, there exists an injective function~$F:\{0,1\}^n\rightarrow\{0,1\}^{n+k+1}$, computable in~$O_k(n^3\log n)$ time, such that for any sequence~$\{0,1\}^n$, we have that~$L(F(\boldc),\le k )\le 3k+2+\lceil \log n \rceil$. 
\end{lemma}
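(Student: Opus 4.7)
My plan is to construct $F$ via an iterative \emph{sequence replacement} (compression) technique. The intuition is that a period-$\ell$ run of length $L$ in a binary string is entirely determined by the triple $(\ell, s, L)$, where $s$ is the $\ell$-bit period pattern, and therefore admits a compressed description using only $\lceil\log k\rceil + k + \lceil\log n\rceil$ bits plus a short marker. Since this is strictly less than $T+1 = 3k+3+\lceil\log n\rceil$ for $k$ large enough, each time we compress a long run we free up $\Omega(k)$ bits, which is more than enough to absorb the $k+1$ bits of redundancy prescribed by the lemma.

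Concretely, I would fix a distinctive marker $M$ of length $O(k)$, for example $M = 1\cdot 0^{k+1}\cdot 1$, whose internal $0^{k+1}$-pattern cannot occur as a substring of any period-$\ell\le k$ run that is not all-zeros, while the flanking $1$'s disambiguate the all-zeros case. The compressed block encoding a period-$\ell$ run of length $L$ with pattern $s$ is then $M\cdot\mathrm{bin}_{\lceil\log k\rceil}(\ell)\cdot s\cdot\mathrm{bin}_{\lceil\log n\rceil}(L)$, of total length at most $2k+3+\lceil\log k\rceil+\lceil\log n\rceil$. The algorithm for $F$ initializes $\boldx := \boldc$ and iterates: while $L(\boldx,\le k) > T$, locate the leftmost period-$\le k$ run of length strictly greater than $T$ and replace it in place by its compressed block. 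Each iteration strictly shortens $\boldx$ by at least $k-\lceil\log k\rceil$ bits, so the loop terminates in $O(n/k)$ steps, and each step takes $O(n^2\log n)$ work to find the next long run, giving the claimed $O_k(n^3\log n)$ total time. After termination, $|\boldx|\le n$; I would pad to length exactly $n+k+1$ by appending a suffix of the form $M'\cdot(\text{filler})$, where $M'$ is a second short marker delimiting the end of the data, and the filler is a repetition of a length-$(k+2)$ pattern such as $0^{k+1}1$ (which has period $k+2 > k$), so that no period-$\le k$ run of length exceeding $T$ is created in the padding. Decoding inverts these steps: scan from the right to locate $M'$ and strip the padding, then scan the remaining prefix for occurrences of $M$, expanding each compressed block back into its full periodic run to recover $\boldc$; injectivity follows from this reversibility.

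\textbf{Main obstacle.} The most delicate technical point is \emph{unambiguous decoding}: the marker $M$ (or $M'$) could accidentally match a substring of the content that was never placed there by the compression loop, causing the decoder to hallucinate a compressed block. I would address this by either (a) arguing that such a spurious match leads to format inconsistencies (e.g., the subsequent $\mathrm{bin}(\ell)$, $s$, $\mathrm{bin}(L)$ fields would not expand to a periodic run consistent with the surrounding bits and the target length $n$), letting the decoder reject the match, or (b) pre-processing $\boldc$ by a bit-stuffing step that eliminates naturally occurring copies of $M$ at $O(1)$ amortized redundancy. A secondary subtlety is that replacing a long run by a compressed block can create a new long periodic run across the boundary between the block and the surrounding bits; this is handled by rolling back the scan by $O(k)$ positions after each replacement and by tracking a monotone potential (for instance, the total number of bits contained in period-$\le k$ runs of length $>T$) that strictly decreases. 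Once these two points are settled, the redundancy bookkeeping is straightforward: each compression saves $\Omega(k)$ bits, the end-marker $M'$ and padding together cost $O(k)$ bits, and all of these stay within the $k+1$-bit budget, yielding the stated output length $n+k+1$ and the periodic-run bound $L(F(\boldc),\le k)\le 3k+2+\lceil\log n\rceil$.
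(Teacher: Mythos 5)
Your high-level idea (replace long period-$\le k$ windows by short descriptors, pad, and invert by re-expanding) is the same compression philosophy the paper uses, but the two points you yourself flag as obstacles are exactly the crux of the lemma, and your proposed fixes do not close them. First, unambiguous parsing: with in-place replacement and a marker $M=1\,0^{k+1}\,1$, a spurious occurrence of $M$ (or of your end-marker $M'$) in untouched data is perfectly possible, since $0^{k+1}$ has period $1$ and length $k+1\le T$ and so survives your compression loop; your fix (a), rejecting spurious matches via ``format inconsistencies,'' is not argued and is not true in general, and your fix (b), bit-stuffing, is incompatible with the redundancy budget: eliminating a length-$(k+3)$ pattern by stuffing costs $\Theta(n/k)$ bits in the worst case (e.g.\ on $(1\,0^{k+1}\,1)^{n/(k+3)}$), while the lemma allows only $k+1$ bits total. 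Second, in-place replacement lets a \emph{later} long periodic run overlap a previously written descriptor (only the full marker, not its fields, is excluded from a period-$\le k$ run), and compressing such a run mangles the earlier block; a single left-to-right scan-and-expand decoder then misparses, so injectivity is genuinely open in your scheme. There are also smaller unchecked items: the data/padding boundary can itself create a period-$1$ run longer than $T$ (data ending in up to $T$ zeros followed by your $0^{k+1}1$ filler), and in the no-compression case your $M'$ plus filler must fit in exactly $k+1$ bits, which your length-$(k+3)$-style marker does not.

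For comparison, the paper sidesteps all of this with a different bookkeeping mechanism: it deletes a \emph{fixed-length} witness window of length $2k+\lceil\log n\rceil+2$ (not the whole maximal run) and appends to the \emph{end} of the string an equal-length descriptor $(\boldsymbol{1}^{k-p_{min}},0,\text{period pattern},i,\boldsymbol{0}^{k+1})$, so the total length is invariant at $n+k+1$ and no savings/padding accounting is needed; the descriptors form a stack on the right, decoding is LIFO (check whether the last $k+1$ bits equal $\boldsymbol{1}^k0$; if not, pop the tail descriptor and reinsert the expanded window at the recorded position $i$), and injectivity follows because each decoding step exactly inverts the last encoding step. If you want to salvage your route, you would need to (i) replace the scan-based decoder by a last-in-first-out inversion so that overlaps cannot arise, and (ii) prove a separator property strong enough to make every descriptor boundary recoverable without any per-occurrence stuffing cost — at which point you have essentially reconstructed the paper's construction.
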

\begin{proof}
Let~$\boldsymbol{1}^{x}$ and~$\boldsymbol{0}^y$ denote consecutive~$x$~$1$'s and consecutive~$y$~$0$'s respectively.
The encoding procedure for computing~$F(\boldc)$ is as follows.
\begin{enumerate}
    \item \textbf{Initialization:} Let~$F(\boldc)=\boldc$.
    Append~$(\boldsymbol{1}^{k},0)$ to the end of the sequence~$F(\boldc)$. Let~$i=1$ and~$n'=n$. Go to Step~$1$.
    \item \textbf{Step 1:} If~$i\le n'-2k-\lceil \log n \rceil-1$ and~$F(\boldc)_{[i,i+2k+\lceil \log n \rceil+1]}$ has period~$p\le k$, let~$p_{min}$ be the smallest period of~$F(\boldc)_{[i,i+2k+\lceil \log n \rceil+1]}$.
    Delete~$F(\boldc)_{[i,i+2k+\lceil \log n \rceil+1]}$ from~$F(\boldc)$ and  append~$(\boldsymbol{1}^{k-p_{min}},0,F(\boldc)_{[i,i+p_{min}-1]},i,\boldsymbol{0}^{k+1})$ to the end of~$F(\boldc)$, i.e., set~$F(\boldc)_{[i,n-k-\lceil \log n \rceil-1]}=F(\boldc)_{[i+2k+\lceil \log n \rceil+2,n+k+1]}$ and $F(\boldc)_{[n-k-\lceil \log n \rceil,n+k+1]}=(\boldsymbol{1}^{k-p_{min}},0,$ $F(\boldc)_{[i,i+p_{min}-1]},i,\boldsymbol{0}^{k+1})$. 
    Let~$n'=n'-2k-\lceil \log n \rceil-2$ and~$i=1$. Repeat.
    Else go to Step~$2$.
    \item \textbf{Step 2:}
    If~$i\le n'-2k-\lceil \log n \rceil-1$, let~$i=i+1$ and go to Step~$1$.
    Else output~$F(\boldc)$. 
\end{enumerate}
It can be verified that the length of the sequence~$F(\boldc)$ remains to be~$n+k+1$ during the procedure. 
The number~$n'$ in the procedure denotes the number such that~$F(\boldc)_{[n'+1,n+k+1]}$ are appended bits and~$F(\boldc)_{[1,n']}$ are the remaining bits in~$\boldc$ after deletions.
Since either~$i$ increases to~$n'$ or~$n'$ decreases in Step~$1$.  The algorithm terminates within~$O(n^2)$ times of Step~$1$ and Step~$2$. Since it takes~$O(k(3k+2+\log n)n)$ time to check the periodicity in Step~$1$. The total complexity is~$O_k(n^3\log n)$.

We now prove that~$L(F(\boldc),\le k)\le 3k+2+\lceil \log n \rceil$. Let~$n'$ be the number computed in the encoding procedure.
According to the encoding procedure, we have that~$L(F(\boldc)_{[j,j+2k+1+\lceil \log n \rceil]},\le k)\le 2k+1+\lceil \log n \rceil$ for~$j\le n'-2k-\lceil \log n \rceil-1$, since any subsequence~$F(\boldc)_{[j,j+2k+1+\lceil \log n \rceil]}$ with period not greater than~$k$ is deleted. Therefore~$L(F(\boldc)_{[j,j+3k+1+\lceil \log n \rceil]},\le k)\le 3k+2+\lceil \log n \rceil$ for~$j\le n'-2k-\lceil \log n \rceil-1$.  For~$n'-2k-\lceil \log n \rceil\le j\le n'$, the sequence~$F(\boldc)_{[j,j+2k+1+\lceil \log n \rceil]}$ contains~$F(\boldc)_{[n'+1,n'+k+1]}=(\boldsymbol{1}^k,0)$, which does not have period not greater than~$k$. 
Hence we have that~$L(F(\boldc)_{[j,j+3k+1+\lceil \log n \rceil]},\le k)\le 3k+2+\lceil \log n \rceil$.
For~$j>n'$, the sequence~$F(\boldc)_{[j,j+3k+1+\lceil \log n \rceil]}$ contains~$\boldsymbol{0}^{k+1}$ as~$k+1$ consecutive bits. Hence, if~$F(\boldc)_{[j,j+3k+1+\lceil \log n \rceil]}= 3k+2+\lceil \log n \rceil$, we have that~$F(\boldc)_{[j,j+3k+1+\lceil \log n \rceil]}=\boldsymbol{0}^{3k+2+\lceil \log n \rceil}$. However, this is impossible since $F(\boldc)_{[j,j+3k+1+\lceil \log n \rceil]}$ contains either the location index~$i$ to the left of~$\boldsymbol{0}^{k+1}$ or the bits~$(\boldsymbol{1}^{k-p_{min}},0,F(\boldc)_{[i,i+p_{min}-1]})$ to the right of~$\boldsymbol{0}^{k+1}$, both of which can not be all zero. Therefore, we conclude that~$L(\boldc,\le k)\le 3k+2+\lceil \log n \rceil$.
Given~$F(\boldc)$, the decoding procedure for computing~$\boldc$ is as follows.
\begin{enumerate}
    \item \textbf{Initialization:} Let~$\boldc=F(\boldc)$ and go to Step~$1$.
    \item \textbf{Step 1:} If~$\boldc_{[n+1,n+k+1]}\ne (\boldsymbol{1}^{k},0)$, let~$j$ be the length of the first~$1$ run in~$\boldc_{[n-k-\lceil \log n \rceil,n+k+1]}$ and let~$p$ be the decimal representation of~$\boldc_{n-\lceil \log n \rceil+1,n}$.
    Let~$\bolda$ be a sequence of length~$2k+\lceil \log n \rceil+2$ and period~$k-j$. The first~$k-j$ bits of~$\bolda$ is given by~$\boldc_{[n-k-\lceil \log n \rceil+j+1,n-\lceil \log n \rceil]}$. 
    Delete~$\boldc_{[n-k-\lceil \log n \rceil,n+k+1]}$ from~$\boldc$ and insert~$\bolda$ at location~$p$ of~$\boldc$, i.e., let~$\boldc_{[p+2k+\lceil \log n \rceil+2,n+k+1]}=\boldc_{[p,n-k-\lceil \log n \rceil-1]}$ and~$\boldc_{[p,p+2k+\lceil \log n \rceil+1]}=\bolda$.
    Repeat. 
    Else output~$\boldc$
\end{enumerate}
Note that the encoding procedure consists of a series of deleting and appending operations. The decoding procedure consists of a series of deletion and inserting operations.
Let~$F_i(\boldc)$,~$i\in [0,R]$ be the sequence~$F(\boldc)$ obtained after the~$i$-th deleting and appending operation in the encoding procedure, where~$R$ is the number of deleting and appending operations in total in the encoding procedure. We have that~$F_0(\boldc)=\boldc$ and~$F_R(\boldc)$ is the final output~$F(\boldc)$. It can be seen that the decoding procedure obtains~$F_{R-i}(\boldc)$,~$i\in [0,R]$ after the~$i$-th deleting and inserting operation. Hence the function~$F(\boldc)$ is injective.
\end{proof}
Finally, we restate one of the main results in~\cite{chee2018coding} that will be used in our construction. The result guarantees a procedure to correct~$d-1$ deletions in a~$d$-head racetrack memory, given that the distance between consecutive heads are large enough. 
\begin{lemma}\label{lemma:kdeletionkheads}
Let~$d\le k$ be two integers and~$\mathcal{C}$ be a~$(k-d+1)$-deletion code, then~$\mathcal{C}\cap\{\boldc:L(\boldc,\le k)\le T\}$ is a~$d$-head~$k$-deletion correcting code, given that the distance between consecutive heads~$t\ge T[k(k-1)/2+1]+(7k-k^3)/6$ for~$i\in [1,d-1]$. 
\end{lemma}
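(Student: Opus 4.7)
The plan is to prove Lemma~\ref{lemma:kdeletionkheads} by reducing the $d$-head $k$-deletion channel to a single-head $(k-d+1)$-deletion channel that $\mathcal{C}$ already handles by hypothesis. The key observation is that the $d$ reads are $d$ views of the same codeword $\boldc$, with deletion patterns $\boldsymbol{\delta}_i$ that are rigid translates of each other by the head distance $t$. Combined with the aperiodicity $L(\boldc,\le k)\le T$, this redundancy should allow us to localize at least $d-1$ of the $k$ deletion positions directly from the raw reads, so that the residual problem falls within the reach of $\mathcal{C}$.

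Operationally, I would pair up consecutive heads and, for each pair $(i,i+1)$, look for the leftmost position where the reads $\boldc_{\boldsymbol{\delta}_i^c}$ and $\boldc_{\boldsymbol{\delta}_{i+1}^c}$ diverge after being aligned through a long common prefix. Because $\boldsymbol{\delta}_{i+1}=\boldsymbol{\delta}_i+t$, this leftmost divergence must encode an element of $\boldsymbol{\delta}_i$ lying in the range $[1,t]$. The constraint $L(\boldc,\le k)\le T$ makes this localization unambiguous: if the divergence suggested a deletion at some incorrect position $p$, then the two reads would have to coincide under a nontrivial shift of at most $k$ for a stretch longer than $T$, contradicting the period bound. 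Iterating across all $d-1$ pairs of adjacent heads yields $d-1$ identified deletion positions. Holding these fixed as known erasure locations in one of the reads, the remaining problem is to recover $\boldc$ from a single read with at most $k-d+1$ unknown deletions, which $\mathcal{C}$ does by hypothesis.

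The main obstacle is verifying that the prescribed bound $t\ge T[k(k-1)/2+1]+(7k-k^3)/6$ really carries the alignment step through all $d-1$ iterations. The factor $k(k-1)/2+1$ should be read as a worst-case allowance for up to $k(k-1)/2$ pairs of mutually clustered deletions, each one capable of extending an ambiguity window by up to $T$ bits, plus one extra $T$-block needed to seed a fresh clean alignment; the residual $(7k-k^3)/6$ absorbs boundary corrections when few deletions are present. I would pin down the bound via an induction on the number of already-localized deletions, tracking how much of the original head distance has been ``spent'' on ambiguity windows and checking that a clean anchor for the next alignment is always available until all $d-1$ deletions have been fixed.
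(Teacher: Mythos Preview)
The paper does not prove this lemma: it is explicitly introduced as a restatement of one of the main results of \cite{chee2018coding}, so there is no in-paper proof to compare against. That said, your proposal has a genuine gap worth flagging, independently of what the cited paper does.

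Your plan hinges on the sentence ``Iterating across all $d-1$ pairs of adjacent heads yields $d-1$ identified deletion positions.'' As written, this does not work. For each $i$, the leftmost divergence between the $i$-th and $(i+1)$-th reads is governed by the smallest element of $\boldsymbol{\delta}_i$, and since $\boldsymbol{\delta}_{i+1}=\boldsymbol{\delta}_i+t$, the smallest elements of $\boldsymbol{\delta}_1,\dots,\boldsymbol{\delta}_{d-1}$ are all shifts of the \emph{same} underlying deletion $\delta_{1,1}$. Running your localization over the $d-1$ adjacent pairs therefore detects the same deletion $d-1$ times, not $d-1$ distinct ones. You have not said how the information from pair $(i,i+1)$ interacts with what was learned from pair $(i-1,i)$ so as to push past $\delta_{1,1}$ and expose a new deletion index; without that, the reduction to $k-d+1$ residual deletions is unsupported.

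The argument that does go through---and that the present paper reproduces in the harder deletion/insertion setting in Section~\ref{section:correctediterrors} (see Lemma~\ref{lemma:keditkheads} and Proposition~\ref{proposition:numberofuncorrectableerrors})---is not ``identify $d-1$ positions, then decode once.'' It is an iterative \emph{read-splicing} scheme: from the $d$ reads one manufactures $d-1$ new reads, each a concatenation of a prefix of the $(w+1)$-th read with a suffix of the $w$-th read spliced at a certified clean window, so that each new read has at most $k-1$ deletions with the same shift structure. Repeating $d-1$ times yields a single sequence with at most $k-d+1$ deletions, to which $\mathcal{C}$ applies. The role of the head-distance bound is to guarantee, at every stage, a length-$T$ window free of deletions in both heads being spliced; the counting that produces $T[k(k-1)/2+1]+(7k-k^3)/6$ tracks how many such windows can be blocked across all $k$ deletions and all $d-1$ splicing rounds. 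Your informal reading of the constants is in the right spirit, but the induction you outline would have to be carried out on this splicing construction, not on the pairwise-localization scheme you describe.
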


\section{Correcting up to~$2d-1$ deletions with~$d$ heads}\label{section:lessthan2mdeletions}
In this section we construct a $d$-head~$k$-deletion~code for cases when~$k\le 2d-1$. To this end, we first present a lemma that is crucial in our code construction. The lemma states that the range of deletion locations can be narrowed down to a list of short intervals. Moreover, the number of deletions within each interval can be determined. The proof of the lemma will be given in Section \ref{section:synchronization}. 
Before presenting the lemma, we give the following definition, which describes a property of the intervals we look for.
\begin{definition}\label{definition:interval}
Let~$\boldsymbol{\delta}_i=\{\delta_{i,1},\ldots,\delta_{i,k}\}$ be the set of deletion locations in the~$i$-th head of a~$d$-head racetrack memory, i.e. $\boldsymbol{\delta}_{i+1}=\boldsymbol{\delta}_{i}+t_i$, for~$i\in [1,d-1]$. 
An interval~$\mathcal{I}$ is \emph{deletion isolated} if
\begin{align*}
\boldsymbol{\delta}_{i+1}\cap \mathcal{I}= t_i + \boldsymbol{\delta}_{i}\cap \mathcal{I},
\end{align*}
for~$i\in [1,d-1]$. 
\end{definition}
\begin{example}
Consider a~$3$-head racetrack memory with head distances~$t_1=1$ and~$t_2=2$. Let the length of the sequence~$\boldc$ be~$n=22$ and the deletion positions in three heads be given by
\begin{align*}
    &\boldsymbol{\delta}_1=\{1,2,4,8,14,17\},\\
    &\boldsymbol{\delta}_2=\{2,3,5,9,15,18\},~\mbox{and}\\
    &\boldsymbol{\delta}_3=\{4,5,7,11,17,20\},
\end{align*}
Then the intervals~$[1,7]$,~$[8,12]$, and~$[14,22]$ are all deletion isolated.
\end{example}
Intuitively, an interval~$\mathcal{I}$ is deletion isolated when the subsequences~$\boldc_{\mathcal{I}\cap \boldsymbol{\delta}^c_i}$ for~$i\in [1,d]$ can be regarded as the~$d$ reads of the sequence~$\boldc_{\mathcal{I}}$ in a~$d$-head racetrack memory after~$|\boldsymbol{\delta}_1\cap \mathcal{I}|$ deletions in each head. 
\begin{figure}
\centering
\definecolor{ffqqqq}{rgb}{1,0,0}
\definecolor{rvwvcq}{rgb}{0.08235294117647059,0.396078431372549,0.7529411764705882}
\begin{tikzpicture}[line cap=round,line join=round,>=triangle 45,x=1cm,y=1cm]
xtick={-7,-6,...,7},
ytick={-7,-6,...,3},]
\clip(-9.5,-1) rectangle (7.2,5);





\draw (-3.75,2.8) node[anchor=center] {$\mathcal{I}_1$};
\draw (3.75,2.8) node[anchor=center] {$\mathcal{I}_J$};

\draw [line width=1pt] (-7,2.5)-- (7,2.5);
\draw [line width=1pt] (-7,1.5)-- (7,1.5);
\draw [line width=1pt] (-7,0.5)-- (7,0.5);
\draw [line width=1pt] (-7,-0.5)-- (7,-0.5);
\draw [line width=1pt] (-7,2.5)-- (-7,-0.5);
\draw [line width=1pt] (7,2.5)-- (7,-0.5);

\draw [line width=1pt] (0.5,2.5)-- (0.5,-0.5);
\draw [line width=1pt] (-0.5,2.5)-- (-0.5,-0.5);

\draw (-8.2,2) node[anchor=center] {$D_{1,[1,n+R-k]}$};
\draw (-8.2,1) node[anchor=center] {$D_{2,[1,n+R-k]}$};
\draw (-8.2,0) node[anchor=center] {$D_{3,[1,n+R-k]}$};

\draw (-6.5,2) node[anchor=center] {$=$};
\draw (-6.5,1) node[anchor=center] {$=$};
\draw (-6.5,0) node[anchor=center] {$=$};

\draw (-1,2) node[anchor=center] {$=$};
\draw (-1,1) node[anchor=center] {$=$};
\draw (-1,0) node[anchor=center] {$=$};

\draw (0,2.8) node[anchor=center] {$\ldots$};
\draw (0,2) node[anchor=center] {$\ldots$};
\draw (0,1) node[anchor=center] {$\ldots$};
\draw (0,0) node[anchor=center] {$\ldots$};

\draw (1,2) node[anchor=center] {$=$};
\draw (1,1) node[anchor=center] {$=$};
\draw (1,0) node[anchor=center] {$=$};

\draw (6.5,2) node[anchor=center] {$=$};
\draw (6.5,1) node[anchor=center] {$=$};
\draw (6.5,0) node[anchor=center] {$=$};

\draw (-5.25,2) node[anchor=center] {$*$};
\draw (-3.75,2) node[anchor=center] {$*$};
\draw (-4.25,1) node[anchor=center] {$*$};
\draw (-2.75,1) node[anchor=center] {$*$};
\draw (-3.75,0) node[anchor=center] {$*$};
\draw (-2.25,0) node[anchor=center] {$*$};

\draw (1.75,2) node[anchor=center] {$*$};
\draw (3.75,2) node[anchor=center] {$*$};
\draw (2.75,1) node[anchor=center] {$*$};
\draw (4.75,1) node[anchor=center] {$*$};
\draw (3.25,0) node[anchor=center] {$*$};
\draw (5.25,0) node[anchor=center] {$*$};

\draw [line width=1pt,dash pattern=on 2pt off 2pt] (-6,3.2)-- (-6,-0.5);
\draw [line width=1pt,dash pattern=on 2pt off 2pt] (-1.5,3.2)-- (-1.5,-0.5);

\draw [line width=1pt] (-5.5,2.5)-- (-5.5,1.5);
\draw [line width=1pt] (-5,2.5)-- (-5,1.5);
\draw [line width=1pt] (-4,2.5)-- (-4,1.5);
\draw [line width=1pt] (-3.5,2.5)-- (-3.5,1.5);

\draw [line width=1pt] (-4.5,1.5)-- (-4.5,0.5);
\draw [line width=1pt] (-4,1.5)-- (-4,0.5);
\draw [line width=1pt] (-3,1.5)-- (-3,0.5);
\draw [line width=1pt] (-2.5,1.5)-- (-2.5,0.5);

\draw [line width=1pt] (-4,0.5)-- (-4,-0.5);
\draw [line width=1pt] (-3.5,0.5)-- (-3.5,-0.5);
\draw [line width=1pt] (-2.5,0.5)-- (-2.5,-0.5);
\draw [line width=1pt] (-2,0.5)-- (-2,-0.5);

\draw [line width=1pt,dash pattern=on 2pt off 2pt] (1.5,3.2)-- (1.5,-0.5);
\draw [line width=1pt,dash pattern=on 2pt off 2pt] (6,3.2)-- (6,-0.5);

\draw [line width=1pt] (1.5,2.5)-- (1.5,1.5);
\draw [line width=1pt] (2,2.5)-- (2,1.5);
\draw [line width=1pt] (3.5,2.5)-- (3.5,1.5);
\draw [line width=1pt] (4,2.5)-- (4,1.5);

\draw [line width=1pt] (2.5,1.5)-- (2.5,0.5);
\draw [line width=1pt] (3,1.5)-- (3,0.5);
\draw [line width=1pt] (4.5,1.5)-- (4.5,0.5);
\draw [line width=1pt] (5,1.5)-- (5,0.5);

\draw [line width=1pt] (3,0.5)-- (3,-0.5);
\draw [line width=1pt] (3.5,0.5)-- (3.5,-0.5);
\draw [line width=1pt] (5,0.5)-- (5,-0.5);
\draw [line width=1pt] (5.5,0.5)-- (5.5,-0.5);

\fill[line width=2pt,fill=black,fill opacity=0.1] (-6,2.5) -- (-6,-0.5) -- (-1.5,-0.5) -- (-1.5,2.5) -- cycle;
\fill[line width=2pt,fill=black,fill opacity=0.1] (6,2.5) -- (6,-0.5) -- (1.5,-0.5) -- (1.5,2.5) -- cycle;
\end{tikzpicture}\caption{An illustration of Lemma~\ref{lemma:synchronization}. The~$*$ entries denote deletion in the heads. The read~$D_{i,[1,n+R]}$ in each head is obtained after deleting the~$*$ entries from~$\boldc$.}
\label{figure:synchronizationlemma}
\end{figure}
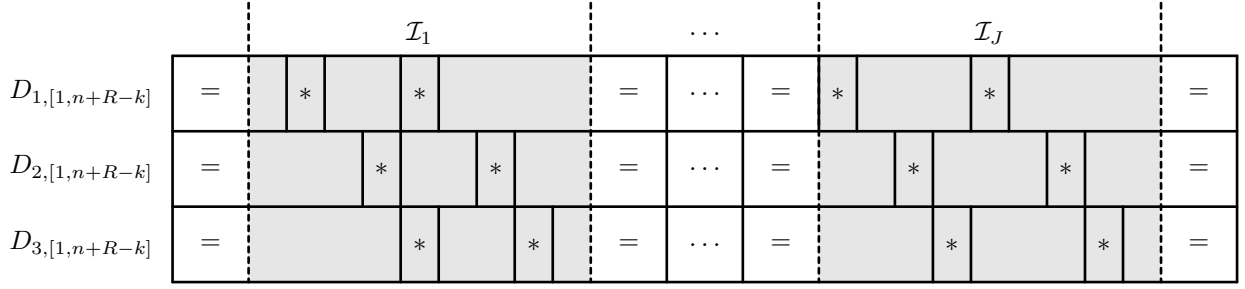
\begin{lemma}\label{lemma:synchronization}
(Proofs appear in Section \ref{section:synchronization}.) For any positive integers~$n$ and~$R\ge k+1$, let~$\boldc\in\{0,1\}^{n+R}$ be a sequence such that~$L(\boldc_{[1,n+k+1]},\le k)\le 3k+\lceil \log n \rceil +2\triangleq T$. Let the distance~$t_i$ between head~$i$ and head $i+1$ satisfy~$t_i\ge (4k+1)(T+2k+1)\triangleq T_{min}$, $i\in[1,d-1]$. Let $t_{max}=\max_{i\in\{1,\ldots,d-1\}} t_i$ be the largest distance between two consecutive heads. Then given~$\boldsymbol{D}\in \cD_k(\boldc,t_1,\ldots,t_{d-1})$, it is possible to find a set of $J\le k$ disjoint and deletion isolated  intervals~$\mathcal{I}_j\subseteq [1,n+R]$,~$j\in [i,J]$ such that~$\boldsymbol{\delta}_w\subset \cup^J_{j=1}\mathcal{I}_j$ for~$w\in [1,d]$ and
\begin{align*}
|\mathcal{I}_j\cap [1,n+k+1]|\le (2\lfloor (2t_{max}+T+1)/2\rfloor +1) kd+\lfloor (2t_{max}+T+1)/2\rfloor+k\triangleq B,    
\end{align*}
for~$j\in [1,J]$. Moreover,~$|\boldsymbol{\delta}_1\cap\mathcal{I}_j|$ can be determined for~$j\in [1,J]$.  
\end{lemma}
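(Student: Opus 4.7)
The plan is to detect deletion events by cross-comparing reads from consecutive heads, cluster these events into intervals whose size is controlled by the periodicity constraint, and verify the deletion-isolated property together with recoverability of $|\boldsymbol{\delta}_1\cap\mathcal{I}_j|$.

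First, I would compare the read of head $i$ against that of head $i+1$. Since $\boldsymbol{\delta}_{i+1}=\boldsymbol{\delta}_i+t_i$, before either head has deleted anything the two reads agree bit-for-bit with a prefix of $\boldc$. After the first deletion $\delta_{i,1}$ in head $i$, head $i+1$ keeps reading $\boldc$ in order up to index $\delta_{i,1}+t_i-1$ while head $i$'s read is already off by one; once head $i+1$ performs its matching deletion at $\delta_{i,1}+t_i$, the two reads re-synchronize. I would declare a \emph{synchronization window} whenever, across some $T+2k+1$ consecutive original-sequence positions, all $d$ reads mutually agree under the appropriate shifts by partial sums of the $t_i$'s. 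The complement of these windows in $[1,n+R]$ forms a collection of candidate \emph{disagreement intervals}.

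Second, I would bound the length of each disagreement interval using the periodicity hypothesis. The key step is the following claim: within a disagreement interval, an agreement between two shifted reads of length exceeding $T$ would force a substring of $\boldc$ of length greater than $T$ with period at most $k$, contradicting $L(\boldc_{[1,n+k+1]},\le k)\le T$. Hence each disagreement interval, after pulling back to original-sequence coordinates, has size at most roughly $2t_{max}+T+1$ per ``event.'' Since each cluster must contain at least one position of $\boldsymbol{\delta}_1$, merging overlaps across all $d$ heads produces at most $J\le k$ disjoint intervals $\mathcal{I}_j$, and a union bound over the $kd$ total deletions from all heads accounts for the factor $kd$ in the definition of $B$.

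Third, I would verify the deletion-isolated property by spacing: because consecutive disagreement clusters are separated by a synchronization window of width at least $t_{max}$, no deletion of head $i$ falls within $t_i$ of any boundary of $\mathcal{I}_j$, which yields $\boldsymbol{\delta}_{i+1}\cap\mathcal{I}_j=t_i+(\boldsymbol{\delta}_i\cap \mathcal{I}_j)$, precisely the condition of Definition~\ref{definition:interval}. Finally, $|\boldsymbol{\delta}_1\cap\mathcal{I}_j|$ equals the length difference between $\mathcal{I}_j$ and the subinterval of head $1$'s read that aligns with it, a quantity the synchronization procedure already exposes. The main obstacle is the periodicity argument in Step~2: formally showing that a long within-cluster agreement between shifted reads of different heads implies a long periodic segment of $\boldc$ with period at most $k$. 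This requires carefully tracing how the shifts $t_i$ and the interleaved deletions compose to produce an explicit period, and handling the case where several deletions overlap inside one disagreement interval and thereby change the effective shift between successive head pairs.
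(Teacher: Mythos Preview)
Your outline for locating the intervals is in the right spirit and close to the paper's marking procedure, though note that the paper compares the $d$ reads at the \emph{same} column index (no shifts by $t_i$), not under shifted alignment; the point is that when every head has suffered the same number of deletions so far and none occur in a window, all rows of $\boldsymbol{D}$ literally coincide there. The periodicity argument you sketch in Step~2 is indeed the core of the paper's Lemma~\ref{lemma:identifyintervals}, and the $kd$ factor in $B$ arises exactly as you say.

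The genuine gap is your final step. You claim that $|\boldsymbol{\delta}_1\cap\mathcal{I}_j|$ equals the length difference between $\mathcal{I}_j$ and the aligned read subinterval, and that this quantity ``is already exposed'' by the synchronization. But the synchronization procedure only outputs $\mathcal{I}^*_j$ in \emph{read} coordinates; the original-coordinate interval $\mathcal{I}_j$ satisfies $|\mathcal{I}_j|=|\mathcal{I}^*_j|+|\boldsymbol{\delta}_1\cap\mathcal{I}_j|$, so the length difference you want \emph{is} the unknown you are trying to compute. The paper devotes all of Section~\ref{subsection:determinenumbers} to breaking this circularity: it partitions $\mathcal{I}^*_j$ into $4k+1$ families of length-$(T+k+1)$ sub-blocks spaced $t_1$ apart, computes in each the unique shift $x_{i,m}\in[0,k]$ for which $\boldsymbol{D}_{1,[p',q'-x]}=\boldsymbol{D}_{2,[p'+x,q']}$, sums these shifts (which telescope because $\boldsymbol{\delta}_2=\boldsymbol{\delta}_1+t_1$) to get $|\boldsymbol{\delta}_1\cap\mathcal{I}_j|$, and uses majority over the $4k+1$ families since at most $2k$ of them are corrupted by a deletion in head~1 or~2. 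The lower bound $t_1\ge(4k+1)(T+2k+1)$ in the hypothesis is there precisely to make room for these $4k+1$ disjoint families; your proposal never uses this bound, which is a signal that the deletion-counting step is missing real content.
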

An illustration of Lemma~\ref{lemma:synchronization} is shown in Fig.~\ref{figure:synchronizationlemma}. Since the interval~$\mathcal{I}_j$ is deletion isolated for~$j\in [1,J]$, all rows of~$\boldsymbol{D}$ are aligned in locations $[1,n+R]\backslash (\cup^J_{j=1}\mathcal{I}_j)$, i.e., the entries in the $i$-th column of~$A$ correspond to the same bit in $\boldc$ for $i\in [1,n+R]\backslash (\cup^J_{j=1}\mathcal{I}_j)$. Let~$\boldc\in\{0,1\}^{n+R}$ be a sequence satisfying~$L(\boldc_{[1,n+k+1]},\le k)\le T$.
By virtue of Lemma~\ref{lemma:synchronization}, 
 the bit~$c_i$  can be determined by 
 \begin{align}\label{equation:recoverblocks}
 c_i = \boldsymbol{D}_{1,i-\sum_{j: \mathcal{I}_j\subseteq [1,i-1]}|\boldsymbol{\delta}_1\cap\mathcal{I}_j|}
 \end{align}
 for~$i\in [1,n+k+1]\backslash (\cup^J_{j=1}\mathcal{I}_j)$. In addition, let~$\mathcal{I}_j=[b^{min}_j,b^{max}_j]$ for~$j\in [1,J]$ such that $b^{max}_{j-1}<b^{min}_{j}$ for $j\in [2,J]$. Since~$\mathcal{I}_j$ is deletion isolated for~$j\in [1,J]$, the submatrix
\begin{align*}
\boldsymbol{D}_{[1,d],[b^{min}_j-\sum^{j-1}_{i=1}|\boldsymbol{\delta_1}\cap \mathcal{I}_i|,b^{max}_j-\sum^j_{i=1}|\boldsymbol{\delta_1}\cap \mathcal{I}_i|]}\in\cD_{|\boldsymbol{\delta_1}\cap[b^{min}_j,b^{max}_j]|}(\boldc_{\mathcal{I}_j},t_1,\ldots,t_{d-1})
\end{align*}
can be regarded as the~$d$ reads of the subsequence~$\boldc_{\mathcal{I}_j}$ in a $d$-head racetrack memory.
According to Lemma~\ref{lemma:kdeletionkheads}, the bits~$\boldc_{\mathcal{I}_j}$ with~$|\boldsymbol{\delta}_1\cap\mathcal{I}_j|< d$ can be recovered from~
\begin{align*}
\boldsymbol{D}_{[1,d],[b^{min}_j-\sum^J_{i=j+1}|\boldsymbol{\delta_1}\cap \mathcal{I}_i|,b^{max}_j-\sum^J_{i=j}|\boldsymbol{\delta_1}\cap \mathcal{I}_i|]}
\end{align*}
if the head distance satisfies~$t_i\ge T[k(k-1)/2+1]+(7k-k^3)/6$ for~$i\in\{1,\ldots,d-1\}$. Note that there is at most a single interval~$\mathcal{I}_{j_1}$ satisfying~$|\boldsymbol{\delta}_1\cap\mathcal{I}_{j_1}|\ge d$ when~$k\le 2d-1$. Hence, we are left to recover interval~$\mathcal{I}_{j_1}$.
 
Split $\boldc_{[1,n+k+1]}$ into blocks
 \begin{align}\label{equation:block}
 \bolda_i = \boldc_{[(i-1)B+1,\min \{iB,n+k+1\}]},~i\in [1,\lceil (n+k+1)/B \rceil]
 \end{align}
of length~$B$ except that~$\bolda_{\lceil (n+k+1)/B \rceil}$ may have length shorter than~$B$. 
Since~$|\mathcal{I}_{j_1}\cap [1,n+k+1]|\le B$, the interval~$\mathcal{I}_{j_1}$ spans over at most two blocks~$\bolda_{j'_1}$ and~$\bolda_{j'_1+1}$. 
It then follows that there are at most two consecutive blocks, where~$\mathcal{I}_{j_1}$ lies in, that remain to be recovered. Moreover, at most~$k$ deletions occur in  interval~$\mathcal{I}_{j_1}$, and hence in blocks~$\bolda_{j'_1}$ and~$\bolda_{j'+_1+1}$.  

For an integer~$n$ and a sequence~$\boldc\in\{0,1\}^{n+k+1}$ of length~$n+k+1$, let the function~$S:\{0,1\}^{n+k+1}\rightarrow \bF^{\lceil (n+k+1)/B \rceil}_{4k\log B+o(\log B)}$ be defined by
\begin{align}\label{equation:split}
    S(\boldc) = (Hash(\bolda_1),Hash(\bolda_2),\ldots,Hash(\bolda_{\lceil (n+k+1)/B \rceil})),
\end{align}
where~$\bolda_{i}$,~$i\in[1,\lceil (n+k+1)/B \rceil]$ are the blocks of~$\boldc$ defined in Eq.~\eqref{equation:block}. The function~$Hash(\bolda_{\lceil (n+k+1)/B \rceil})$, defined in Lemma \ref{lemma:hash}, takes~$\bolda_{\lceil (n+k+1)/B \rceil}$ as input of length at most~$B$.
The sequence~$S(\boldc)$ is a concatenation of the hashes~$Hash$ of blocks of~$\boldc$. 
\begin{lemma}\label{lemma:splithash}
If~$B>k$,
there exists a function~$DecS: \{0,1\}^{n+1}\times \{0,1\}^{\lceil (n+k+1)/B \rceil (4k\log B+o(\log B))}\rightarrow \{0,1\}^{n+k+1}$, such that for any sequence~$\boldc\in\{0,1\}^{n+k+1}$ and its length~$n+1$ subsequence~$\boldd\in\{0,1\}^{n+1}$, we have that~$DecS(\boldd,S(\boldc))=\boldc$, i.e., the sequence~$\boldc$ can be recovered from~$k$ deletions with the help of~$S(\boldc)$.
\end{lemma}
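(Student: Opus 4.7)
My plan is to establish Lemma~\ref{lemma:splithash} by proving the stronger uniqueness statement: for any $\boldd\in\{0,1\}^{n+1}$ and any candidate hash string, there is at most one length-$(n+k+1)$ sequence $\boldc$ that has $\boldd$ as a subsequence and whose block hashes match the given string. The decoder $DecS$ is then any function returning this unique $\boldc$; concretely one can iterate over all $\binom{N+k-1}{k}=O(n^k)$ compositions $(d_1,\ldots,d_N)$ of $k$ into $N=\lceil (n+k+1)/B\rceil$ nonnegative parts and, for each composition, use Lemma~\ref{lemma:hash} to attempt to recover block $\bolda_j$ from the window $\boldd_{[(j-1)B+1-D_{j-1},\,jB-D_j]}$ together with the given $Hash(\bolda_j)$ (where $D_j=\sum_{\ell\le j}d_\ell$), outputting the $\boldc$ assembled from the composition for which every block decode succeeds.

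To prove the uniqueness claim I would argue by contradiction. Suppose $\boldc\neq\boldc'$ both contain $\boldd$ as a length-$(n+1)$ subsequence and satisfy $S(\boldc)=S(\boldc')$. Writing $\boldc=\bolda_1\cdots\bolda_N$ and $\boldc'=\bolda'_1\cdots\bolda'_N$, we have $Hash(\bolda_j)=Hash(\bolda'_j)$ for every $j$, so by Lemma~\ref{lemma:hash} any two distinct blocks with a common hash cannot share a subsequence of length at least $B-k$. Fix (say, leftmost) embeddings of $\boldd$ into $\boldc$ and into $\boldc'$, inducing cumulative deletion counts $0=D_0\le\cdots\le D_N=k$ and $0=D'_0\le\cdots\le D'_N=k$ respectively. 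Let $i^*$ be the smallest index with $\bolda_{i^*}\neq\bolda'_{i^*}$; after possibly swapping $\boldc$ and $\boldc'$, assume $D_{i^*-1}\le D'_{i^*-1}$.

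The crux is to extract from $\boldd$ a long common subsequence of $\bolda_{i^*}$ and $\bolda'_{i^*}$. The bits of $\boldd$ that map into $\bolda_{i^*}$ under the first embedding occupy the window $I=[(i^*-1)B+1-D_{i^*-1},\,i^*B-D_{i^*}]$, and those that map into $\bolda'_{i^*}$ under the second embedding occupy $I'=[(i^*-1)B+1-D'_{i^*-1},\,i^*B-D'_{i^*}]$. Using $D_{i^*-1}\le D'_{i^*-1}$, their intersection is
\begin{align*}
I\cap I'=\bigl[(i^*-1)B+1-D_{i^*-1},\,i^*B-\max(D_{i^*},D'_{i^*})\bigr],
\end{align*}
of length $B-\max(D_{i^*},D'_{i^*})+D_{i^*-1}\ge B-k$, where the bound uses $D_{i^*},D'_{i^*}\le k$, $D_{i^*-1}\ge0$, and the hypothesis $B>k$ to keep the interval nonempty. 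Restricting the two embeddings to $I\cap I'$ therefore exhibits $\boldd_{I\cap I'}$ as a common subsequence of $\bolda_{i^*}$ and $\bolda'_{i^*}$ of length at least $B-k$, and Lemma~\ref{lemma:hash} then forces $\bolda_{i^*}=\bolda'_{i^*}$, contradicting the choice of $i^*$.

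The main obstacle I expect is precisely this misalignment: the two embeddings shift the block-$i^*$ window by $D'_{i^*-1}-D_{i^*-1}$ on the left and by $D_{i^*}-D'_{i^*}$ on the right, so both shifts must be controlled simultaneously using only the global bound $D_j,D'_j\in[0,k]$, and $B>k$ is exactly what guarantees that the overlap still covers at least $B-k$ positions so that the per-block hash property can be invoked. The same argument applies to the (possibly shorter) last block with $B$ replaced by $|\bolda_N|$, and the complexity claim follows by enumerating $O(n^k)$ compositions and applying the $O(B^{2k+1})$-time block decoder of Lemma~\ref{lemma:hash} to each.
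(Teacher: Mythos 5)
Your proof is correct, but it takes a noticeably different route from the paper's. The paper's proof is a direct, one-pass decoding argument: since at most $k$ bits are deleted in total, the cumulative deletion counts satisfy $D_{i-1}\ge 0$ and $D_i\le k$, so the \emph{fixed} window $\boldd_{[(i-1)B+1,\min\{iB,n+k+1\}-k]}$ is always a length-$(B-k)$ subsequence of the block $\bolda_i$, no matter how the $k$ deletions are distributed; Lemma~\ref{lemma:hash} then recovers every block directly from this window together with $Hash(\bolda_i)$, with no search over deletion patterns and no uniqueness argument. You instead prove a non-confusability statement for any two sequences consistent with $\boldd$ and the given hash string, via the window-overlap bound $B-\max(D_{i^*},D'_{i^*})+D_{i^*-1}\ge B-k$, and then define $DecS$ abstractly (or concretely by enumerating the $O(n^k)$ compositions of $k$ deletions over blocks). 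The underlying inequality is the same in both arguments, but the paper deploys it constructively, which also gives the efficient single-pass block decoder used later in the complexity accounting, whereas your enumerative decoder is polynomially more expensive; since the lemma only asserts existence of $DecS$, your version suffices. One small point in your concrete decoder: a per-block decode can ``succeed'' spuriously under a wrong composition, so you should add a final verification that the assembled candidate contains $\boldd$ as a subsequence and reproduces the given hashes before invoking your uniqueness claim; with that check, correctness follows exactly as you argue.
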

\begin{proof}
Note that~$\boldd_{[(i-1)B+1,\min \{iB,n+k+1\}-k]}$ is a length~$B-k$ subsequence of the block~$\bolda_i$ for~$i\in\{1,\ldots,\lceil (n+k+1)/B \rceil\}$. 
According to Lemma~\ref{lemma:hash}, the block~$\bolda_i$ can be recovered from~$\boldd_{(i-1)B+1,\max \{iB,n+k+1\}-k}$ with the help of~$Hash(\bolda_i)$. Thus the  sequence~$\boldc$ can be recovered.
\end{proof}
We are now ready to present the code construction.   For any sequence~$\boldc\in\{0,1\}^n$, define the following encoding function: 
\begin{align}\label{equation:encoding}
    Enc_1(\boldc)=(F(\boldc),R^{'}_1(\boldc),R^{''}_1(\boldc))
\end{align}
where~
\begin{align}
R^{'}_1(\boldc)&=ER(S(F(\boldc))),\nonumber\\
R^{''}_1(\boldc)&=Rep_{k+1}(Hash(R'_1(\boldc))),
\end{align}
and the function~$Rep_{k+1}$ is a $k+1$-fold repetition function that repeats each bit~$k+1$ times.
Note that we use~$F(\boldc)\in\{0,1\}^{n+k+1}$, where $F$ is defined in Lemma \ref{lemma:periodfree}, to obtain a sequence satisfying~$L(F(\boldc),\le k)\le T$ so that Lemma~\ref{lemma:synchronization} can be applied. The redundancy consists of two layers. The function~$R^{'}_1(\boldc)$ can be regarded as the first layer redundancy, with the help of which~$F(\boldc)$ can be recovered from~$k$ deletions. It
computes the redundancy of a code that corrects two consecutive symbol erasures in~$S(F(\boldc))$. The function~$R^{''}_1(\boldc)$ can be seen as the second layer redundancy that helps recover itself and~$R^{'}_1(\boldc)$ from~$k$ deletions.
 
When the head distance~$t_i$ satisfies~$t_i=\max\{(3k+\lceil \log n \rceil+2)[k(k-1)/2+1]+(7k-k^3)/6,(4k+1)(5k+\lceil \log n \rceil+3)\}$ for~$i\in [1,d-1]$, the length of~$R^{'}_1(\boldc)$ is given by~$N_1=4k\log B+o(\log B)=4k\log \log n+o(\log\log n)$. The length of~$R^{''}_1(\boldc)$ is~$N_2=4k(k+1)\log N_1+O(1)=o(\log\log n)$. 
The length of the codeword~$Enc_1(\boldc)$ is given by $N=n+k+1+N_1+N_2=4k\log\log n+o(\log \log n)$. 
The next theorem proves Theorem~\ref{theorem:main} for cases when~$d\le k\le 2d-1$.
\begin{theorem}\label{theorem:lessthan2d}
The set~$\mathcal{C}_1=\{Enc_1(\boldc):\boldc\in\{0,1\}^n\}$ is a $d$-head~$k$-deletion~correcting code for~$d\le k\le 2d-1$, if the distance between any two consecutive heads satisfies~$t_i=\max\{(3k+\lceil \log n \rceil+2)[k(k-1)/2+1]+(7k-k^3)/6,(4k+1)(5k+\lceil \log n \rceil+3)\}$,~$i\in [1,d-1]$. The code~$\mathcal{C}_1$ can be constructed, encoded, and decoded in~$O(n^{2k+2})$ time. The redundancy of~$\mathcal{C}_1$ is~$N-n=4k\log \log n +o(\log\log n)$.
\end{theorem}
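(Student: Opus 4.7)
The plan is to exhibit a decoder for $\mathcal{C}_1$ that recovers $\boldc$ from any read matrix $\boldsymbol{D}\in\cD_k(Enc_1(\boldc),t_1,\ldots,t_{d-1})$ and then to tally the complexity. The decoder will peel the three layers of the codeword in reverse order: first $R''_1(\boldc)$, then $R'_1(\boldc)$, then $F(\boldc)$, and finally invert $F$ via Lemma~\ref{lemma:periodfree} to obtain $\boldc$. Injectivity of $Enc_1$ follows once this procedure is shown to be deterministic.

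First I would locate and decode the $R''_1$-suffix of a single read. Since $R''_1(\boldc)=Rep_{k+1}(Hash(R'_1(\boldc)))$ consists of $(k+1)$-fold repeated blocks and at most $k$ deletions fall in this segment, no block is erased entirely, so the run pattern in the tail of the read determines $Hash(R'_1(\boldc))$ uniquely by a standard repetition-code argument. Feeding this hash to Lemma~\ref{lemma:hash} together with the middle portion of the same read that corresponds to $R'_1$ recovers $R'_1(\boldc)=ER(S(F(\boldc)))$, since that portion differs from $R'_1$ by at most $k$ deletions.

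Next I would synchronize the reads of $F(\boldc)$. By Lemma~\ref{lemma:periodfree}, $L(F(\boldc),\le k)\le T$, so Lemma~\ref{lemma:synchronization} applies and returns $J\le k$ disjoint deletion isolated intervals $\mathcal{I}_1,\ldots,\mathcal{I}_J$ with $|\mathcal{I}_j\cap[1,n+k+1]|\le B$, together with the counts $|\boldsymbol{\delta}_1\cap\mathcal{I}_j|$. Bits of $F(\boldc)$ at indices outside $\cup_j\mathcal{I}_j$ are read off via Eq.~\eqref{equation:recoverblocks}. For each interval $\mathcal{I}_j$ with $|\boldsymbol{\delta}_1\cap\mathcal{I}_j|<d$, the corresponding submatrix of $\boldsymbol{D}$ is the output of a $d$-head channel on $F(\boldc)_{\mathcal{I}_j}$ with fewer than $d$ deletions, so Lemma~\ref{lemma:kdeletionkheads}—whose head-distance requirement is subsumed by the theorem's hypothesis on $t_i$—reconstructs those bits.

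Since $k\le 2d-1$, at most one interval $\mathcal{I}_{j^*}$ can have $|\boldsymbol{\delta}_1\cap\mathcal{I}_{j^*}|\ge d$, and because $|\mathcal{I}_{j^*}\cap[1,n+k+1]|\le B$ it overlaps at most two consecutive blocks $\bolda_{j'},\bolda_{j'+1}$ from Eq.~\eqref{equation:block}. Every other block is already recovered, so its hash can be recomputed; the missing entries of $S(F(\boldc))$ are then at most two consecutive symbol erasures, which the already-recovered $R'_1(\boldc)=ER(S(F(\boldc)))$ fills in via Lemma~\ref{lemma:consecutiveerasure}. Lemma~\ref{lemma:splithash} then rebuilds $\bolda_{j'}$ and $\bolda_{j'+1}$, completing $F(\boldc)$, after which Lemma~\ref{lemma:periodfree} yields $\boldc$. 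Summing the costs of $F$ and $F^{-1}$, of the per-block hash computations, and of the synchronization and Lemma~\ref{lemma:kdeletionkheads} calls, the total fits comfortably within $O(n^{2k+2})$. The main obstacle I anticipate is cleanly unpeeling the three layers when deletions straddle segment boundaries, since the split of the $k$ deletions among $F$, $R'_1$, and $R''_1$ can differ across heads; this is overcome by the $(k+1)$-fold repetition of $R''_1$ (which makes its suffix identifiable despite uncertain per-segment counts) and by the fact that $N_1,N_2$ are $o(\log n)$, so the periodicity control from Lemma~\ref{lemma:periodfree} on the $F$-prefix is exactly what Lemma~\ref{lemma:synchronization} requires.
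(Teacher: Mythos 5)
Your proposal is correct and follows essentially the same route as the paper's proof: peel off $R''_1$ via the $(k+1)$-fold repetition, recover $R'_1$ with the hash of Lemma~\ref{lemma:hash}, apply Lemma~\ref{lemma:synchronization} (enabled by Lemma~\ref{lemma:periodfree}) together with Lemma~\ref{lemma:kdeletionkheads} so that only one interval with at least $d$ deletions, spanning at most two consecutive blocks, remains, and then repair the two missing symbols of $S(F(\boldc))$ with the $ER$ code and Lemma~\ref{lemma:splithash} before inverting $F$. The complexity accounting also matches the paper's.
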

\begin{proof}
For any~$\boldsymbol{D}\in \cD_k(\boldc)$, let~$\boldd=\boldsymbol{D}_{1,[1,N-k]}$ be the first row of~$\boldsymbol{D}$, i.e., the first read. The sequence~$\boldd$ is a length~$N-k$ subsequence of~$Enc_1(\boldc)$.
We first show how to recover~$R^{'}_1(\boldc)$ from~$\boldd$. 
Note that~$\boldd_{[N-N_2+1,N-k]}$ is a length~$N_2-k$ subsequence of~$R^{''}_1(\boldc)$, the~$k+1$-fold repetition of~$Hash(R^{'}_1(\boldc))$. Since a~$k+1$-fold repetition code is a~$k$-deletion code, the hash function~$Hash(R^{'}_1(\boldc))$ can be recovered. Furthermore, we have that~$\boldd_{[n+k+2,n+k+1+N_1-k]}$ is a length~$N_1-k$ subsequence of~$R^{'}_1(\boldc)$. Hence according to Lemma~\ref{lemma:hash}, we can obtain~$R^{'}_1(\boldc)$ from~$\boldd_{[n+k+2,n+k+1+N_1-k]}$, with the help of~$Hash(R^{'}_1(\boldc))$.

Next, we show how to use~$R^{'}_1(\boldc)$ to recover~$F(\boldc)$. 
Note the fact that~$L(F(\boldc),\le k)\le T$.
From Lemma~\ref{lemma:synchronization}
and the discussion that follows, we can separate~$F(\boldc)$ into blocks~$\bolda_{i}$,~$i\in [1,\lceil (n+K+1)/B \rceil ]$, of length~$B$. 
and recover all but at most two consecutive blocks~$\bolda_{j_1}$ and~$\bolda_{j_1+1}$. 
This implies that $S(F(\boldc))$ can be retrieved with consecutive at most two symbol errors, the position of which can be identified, by looking for the unique interval $\cI_j$ such that $|\cI_j\cap\boldsymbol{\delta}_1|\ge d$. Hence we can use~$R^{'}_1(\boldc)$ to recover~$S(F(\boldc))$ and find the hashes~$Hash(\bolda_{j_1})$ and~$Hash(\bolda_{j_1+1})$. Note that~$D_{1,[1,n+1]}$ is a length~$n+1$ subsequences of~$F(\boldc)$. Hence from Lemma~\ref{lemma:splithash} the sequence~$F(\boldc)$ and thus~$\boldc$ can be recovered given~$S(F(\boldc))$. The computation of~$S(F(\boldc))$, which computes ~$O(n/B)$ times the hashes~$Hash(\bolda_i)$,~$[1,\lceil (n+k+1)/B \rceil]$, constitutes the main part of the computation complexity of~$Enc_1(\boldc)$.
Since the computation of~$Hash(\bolda_i)$ takes~$O(\log^{2k+1}n )$ time for each $i\in[1,\lceil (n+K+1)/B \rceil ]$. It takes~$O(n\log^{2k}n)$ time to compute~$Enc_1(\boldc)$.
\end{proof}

\section{Proof of Lemma \ref{lemma:synchronization} }
\label{section:synchronization}
Let~$\boldsymbol{D}\in \cD_k(\boldc,t_1,\ldots,t_{d-1})$ be the~$d$ reads from all heads, where~$\boldc\in\{0,1\}^{n+R}$ satisfies~$L(\boldc_{[1,n+k+1]},\le k)\le T$. Then~$\boldsymbol{D}$ is a~$d$ by~$n+R-k$ matrix.
The proof of Lemma~\ref{lemma:synchronization} consists of two steps. The first step is to identify a set of disjoint intervals~$\mathcal{I}^*_j$,~$j\in [1,J]$ that satisfy
\begin{enumerate}
	\item[\bf{(P1)}]There exist a set of disjoint and 
	deletion isolated intervals~$\mathcal{I}_j$,~$j\in [1,J]$, such that~$\boldsymbol{D}_{w,\mathcal{I}^*_j} = \boldc_{\mathcal{I}_j\cap \boldsymbol{\delta}^c_w}$ for~$w\in [1,d]$ and~$j\in [1,J]$, i.e., the subsequence~$\boldsymbol{D}_{w,\mathcal{I}^*_j}$ comes from~$\boldc_{\mathcal{I}_j}$ in the~$w$-th read after deleting $\boldc_{\mathcal{I}_j\cap \boldsymbol{\delta}_w}$.
	\item[\bf{(P2)}]$J\le k$ and~$\boldsymbol{\delta}_w\subseteq\cup^J_{j=1}\mathcal{I}_j$ for $w\in[1,d]$.
	\item[\bf{(P3)}]$|\mathcal{I}^*_j\cap [1,n+1]|\le B-k$
\end{enumerate}
The second step is to determine the number of deletions~$|\boldsymbol{\delta}_{w}\cap \mathcal{I}_j|$  for~$w\in [1,d]$ and~$j\in [1,J]$, that happen in each interval in each head, based on~$\boldsymbol{D}_{[1,d],\mathcal{I}^*_j}$. 
Then we have that
\begin{align*}
\mathcal{I}_j = [i_{2j-1}+\sum^{j-1}_{\ell=1}|\boldsymbol{\delta}_{1}\cap \mathcal{I}_{\ell}|,i_{2j}+\sum^{j}_{\ell=1}|\boldsymbol{\delta}_{1}\cap \mathcal{I}_{\ell}|],
\end{align*}where~$i_{2j-1}$ and~$i_{2j}$ are the starting and ending points of the interval~$\mathcal{I}^*_j$. It is assumed that~$i_j>i_l$ for~$j>l$. The disjointness of~$\mathcal{I}_j$,~$j\in [1,J]$ follows from the fact that~$\mathcal{I}^*_j$,~$j\in [1,J]$ are disjoint.
The two steps will be made explicit in the following two subsections respectively. 


\subsection{Identifying Intervals~$\mathcal{I}^*_j$}\label{subsection:a}
The procedure for identifying intervals~$\mathcal{I}^*_j$,~$j\in [1,J]$,  is as follows. 
\begin{enumerate}
\item \textbf{Initialization:}
Set all integers~$m\in [1,n+R-k]$ unmarked.
Let~$i=1$. Find the largest positive integer~$L$ such that the sequences~$\boldsymbol{D}_{w,[i,i+L-1]}$ are equal for all~$w\in [1,d]$. If such~$L$ exists and satisfies $L>t_{max}$, mark the integers~$m\in[1,L-t_{max}]$ and go to Step~$1$. Otherwise, go to Step~$1$.
\item \textbf{Step 1:} Find the largest positive integer~$L$ such that the sequences~$\boldsymbol{D}_{w,[i,i+L-1]}$ are equal for all~$w\in [1,d]$. Go to Step~$2$. If no such~$L$ is found, set~$L=0$ and go to Step $2$. 
\item \textbf{Step 2:} If~$L\ge 2{t_{max}}+T+1$, mark the integers $m\in[i+t_{max},\min\{i+L-1,n+1\}-t_{max}]$. Set~$i=i+L$ and go to Step $3$. Else~$i=i+1$ and go to Step $3$.
\item \textbf{Step 3:} If~$i\le n+1$, go to Step 1. Else go to Step 4.
\item \textbf{Step 4:} If the number of unmarked intervals\footnote{An unmarked interval~$[i,j]$ means that~$m\in[i,j]$ are not marked and~$i-1$ and~$j+1$ are marked. It is assumed that~$0$ and~$n+R-k+1$ are marked.} within~$[1,n+1]$ is not greater than~$k$, output all unmarked intervals. Else output the first~$k$ intervals, i.e., the intervals with the minimum~$k$ starting indices.
\end{enumerate}
We prove that the output intervals 
satisfy the above constraints \textbf{(P1)}, \textbf{(P2)}, and \textbf{(P3)}. The following lemma will be used. 
\begin{lemma}\label{lemma:identifyintervals}
Let~$\boldsymbol{D}\in \cD_k(\boldc)$ for some sequence~$\boldc$ satisfying~$L(\boldc_{[1,n+k+1]},\le k)\le T$.
Let~$t_{max}=\max_{i\in [1,d-1]}t_i$ such that~$t_w\ge k(T+1)+1$ for $w\in[1,d-1]$.
If the sequences~$\boldsymbol{D}_{w,[i_1,i_2]}$ are equal for all $w\in[1,d]$ in some interval~$[i_1,i_2]\subseteq [1,n+1]$ with length~$i_2-i_1+1\ge 2t_{max}+T+1$, then no deletions occur within bits~$\boldsymbol{D}_{w,[i_1+t_{max},i_2-t_{max}]}$ for all~$w$, i.e., there exists integers~$i'_1=i_1+t_{max}+|\boldsymbol{\delta}_j\cap[1,i'_1-1]|$ and~$i'_2=i_2-t_{max}+|\boldsymbol{\delta}_j\cap[1,i'_2-1]|$, such that~$\boldc_{[i'_1,i'_2]}=\boldsymbol{D}_{w,[i_1+t_{max},i_2-t_{max}]}$ and~$[i'_1,i'_2]\cap \boldsymbol{\delta}_w =\emptyset$ for~$w\in [1,d]$. In addition, both intervals~$[1,i'_1-1]$ and~$[i'_2+1,n+R]$ are deletion isolated.
\end{lemma}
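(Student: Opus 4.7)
The plan is to track, for each head $w$, the cumulative-deletion step function $\Delta_w(j) = p_w(j) - j$, where $p_w(j)$ denotes the position in $\boldc$ read at read-position $j$ of head $w$. Each $\Delta_w$ is non-decreasing, takes values in $[0,k]$, and has at most $k$ unit jumps. The row-agreement hypothesis rewrites as $c_{j + \Delta_w(j)} = c_{j + \Delta_{w'}(j)}$ for every $w, w'$ and $j \in [i_1, i_2]$, and the shift constraint $\boldsymbol{\delta}_{w+1} = \boldsymbol{\delta}_w + t_w$ becomes the identity $\Delta_{w+1}(j) = \Delta_w(j - t_w)$, so successive $\Delta_w$'s are horizontal translates of one another.

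The key claim is that on the inner range $[i_1 + t_{max}, i_2 - t_{max}]$ the value $\Delta_w(j)$ takes a single common value $\Delta^*$, independent of both $w$ and $j$. Fix a pair $(w, w+1)$. The set $\{j : \Delta_w(j) \ne \Delta_{w+1}(j)\}$ is precisely $\bigcup_{j_0}(j_0, j_0 + t_w]$ where $j_0$ ranges over the jump points of $\Delta_w$, so it is a union of at most $k$ intervals of length at most $t_w$ each. On any maximal sub-interval where both step functions are constant but unequal, the identity $c_{j + \Delta_w} = c_{j + \Delta_{w+1}}$ produces a periodic run in $\boldc_{[1, n+k+1]}$ of period $|\Delta_w - \Delta_{w+1}| \le k$ and length equal to that sub-interval; the hypothesis $L(\boldc_{[1, n+k+1]}, \le k) \le T$ therefore caps every such sub-interval at length $\le T$. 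Combining the geometric bound (at most $k$ bad components of length $\le t_w$) with the periodicity cap and using $t_w \ge k(T+1)+1$, the buffer $t_{max}$ on each side is wide enough to absorb every bad sub-interval, pinning $\Delta_w = \Delta_{w+1}$ throughout the inner range. Ruling out a jump of $\Delta_w$ itself inside the inner range reuses the same periodicity argument applied to the adjacent sub-interval, and iterating over all $d-1$ adjacent pairs chains the equality into the common value $\Delta^*$.

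With $\Delta^*$ fixed, set $i'_1 = i_1 + t_{max} + \Delta^*$ and $i'_2 = i_2 - t_{max} + \Delta^*$; then $p_w(j) = j + \Delta^*$ throughout the inner read range, giving both $\boldc_{[i'_1, i'_2]} = \boldsymbol{D}_{w, [i_1 + t_{max}, i_2 - t_{max}]}$ and $\boldsymbol{\delta}_w \cap [i'_1, i'_2] = \emptyset$ for every $w$. For the deletion-isolation of $[1, i'_1 - 1]$, write $k_w(p) = |\boldsymbol{\delta}_w \cap [1, p]|$; the inner no-deletion zone gives $k_w(p) = \Delta^*$ on $[i'_1, i'_2]$, and the identity $k_{w+1}(p) = k_w(p - t_w)$ transfers this to $k_w(p) = \Delta^*$ on $[i'_1 - t_w, i'_2 - t_w]$. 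By monotonicity $k_w$ is constant on the union $[i'_1 - t_w, i'_2]$, so $\boldsymbol{\delta}_w \cap [i'_1 - t_w + 1, i'_2] = \emptyset$; the one remaining position $i'_1 - t_w$ cannot be in $\boldsymbol{\delta}_w$ because that would place $i'_1 \in \boldsymbol{\delta}_{w+1}$, contradicting the no-deletion zone. The symmetric argument using $k_w(p) = k_{w-1}(p - t_{w-1})$ in reverse yields $\boldsymbol{\delta}_w \cap [i'_2 + 1, i'_2 + t_{w-1}] = \emptyset$ for $w \in [2, d]$, which is exactly what deletion isolation of $[i'_2 + 1, n+R]$ requires.

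The main obstacle is the length bookkeeping in the second paragraph: controlling the length of the bad set so that it is strictly smaller than the buffer $2 t_{max}$ requires combining the periodicity cap with the shift identity, and becomes delicate when several jumps of $\Delta_w$ fall inside a single $t_w$-window (producing bad sub-intervals of $\Delta$-discrepancy greater than one). The bound $t_w \ge k(T+1)+1$ is exactly what makes this bookkeeping go through.
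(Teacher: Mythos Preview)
Your overall framework---tracking the cumulative-deletion functions $\Delta_w$ in read coordinates and exploiting the translate identity $\Delta_{w+1}(j)=\Delta_w(j-t_w)$---is sound and is essentially the paper's argument transported from original-sequence coordinates to read coordinates. However, the central bookkeeping step is not actually carried out, and the way you phrase it is misleading. The sentence ``the buffer $t_{max}$ on each side is wide enough to absorb every bad sub-interval'' does not describe a valid argument: nothing forces the bad set $\{j:\Delta_w(j)\ne\Delta_{w+1}(j)\}$ to lie near the boundary of $[i_1,i_2]$, and your parenthetical ``at most $k$ bad components of length $\le t_w$'' is false if read literally, since overlapping $t_w$-windows can merge into one long component. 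What actually makes the argument work (and what the paper does, in original coordinates) is a length contradiction \emph{inside a single $t_w$-window}: if $\Delta_w$ has a jump at some $j^\ast$ in the inner range, then $[j^\ast,j^\ast+t_w-1]\subseteq[i_1,i_2]$ is entirely bad, and the jumps of $\Delta_w$ and $\Delta_{w+1}$ lying in this window correspond bijectively to jumps of $\Delta_w$ in $[j^\ast-t_w,j^\ast+t_w-1]$, hence there are at most $k$ of them; they cut $[j^\ast,j^\ast+t_w-1]$ into at most $k$ constant pieces, each of length $\le T-1$ by the periodicity cap, so $t_w\le k(T-1)<k(T+1)+1\le t_w$, a contradiction. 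The symmetric window $[j^\ast-t_{w-1},j^\ast-1]$ handles head $d$. Once this is in place, the common value $\Delta^\ast$ follows because a constant-unequal piece on the full inner range would itself exceed length $T$.

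Your deletion-isolation argument, on the other hand, is genuinely different from the paper's and cleaner. The paper re-invokes the periodicity contradiction (a nonzero discrepancy $x=k_{w}(i'_1-1)-k_{w+1}(i'_1-1)$ forces $\boldc_{[i'_1,i'_2]}$ to have period $x$). You instead use the identity $k_{w+1}(p)=k_w(p-t_w)$ together with monotonicity to push the equality $k_w\equiv\Delta^\ast$ from $[i'_1,i'_2]$ down to $[i'_1-t_w,i'_2]$, directly yielding $\boldsymbol{\delta}_w\cap[i'_1-t_w,i'_1-1]=\emptyset$, which is exactly what isolation of $[1,i'_1-1]$ requires. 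This avoids a second appeal to the $L(\boldc,\le k)\le T$ hypothesis and is a nicer way to finish.
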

\begin{proof}
Let~$c_{i'_0}$,~$c_{i'_1}$,~$c_{i'_2}$, and~$c_{i'_3}$ be the bits that become~$\boldsymbol{D}_{1,i_1}$,~$\boldsymbol{D}_{1,i_1+t_{max}}$,~$\boldsymbol{D}_{1,i_2-t_{max}}$, and~$\boldsymbol{D}_{1,i_2}$ respectively after deletions, i.e.,~$i'_0-|\boldsymbol{\delta}_1\cap[1,i'_0-1]|=i_1$,~$i'_1-|\boldsymbol{\delta}_1\cap[1,i'_1-1]|=i_1+t_{max}$,~$i'_2-|\boldsymbol{\delta}_1\cap[1,i'_2-1]|=i_2-t_{max}$, and~$i'_3-|\boldsymbol{\delta}_1\cap[1,i'_3-1]|=i_2$. We show that no deletions occur within~$\boldsymbol{D}_{w,[i_1,i_2-t_{max}]}$ for~$w\in [1,d-1]$ or within~$\boldsymbol{D}_{w,[i_1+t_{max},i_2]}$ for~$w\in [2,d]$, i.e.,~$\boldsymbol{\delta}_w\cap [i'_0,i'_2]=\emptyset$ for~$w\in [1,d-1]$, and~$\boldsymbol{\delta}_w\cap [i'_1,i'_3]=\emptyset$ for~$w\in [2,d]$.

Suppose on the contrary, there are deletions within~$\boldsymbol{D}_{w,[i_1,i_2-t_{max}]}$ for~$w\in [1,d-1]$. Then there exist some~$w_1\in[1,d-1]$ and~$k_1\in[1,k]$, such that~$\delta_{w_1,k_1}\in [i'_0,i'_2]$ (recall that $\delta_{w_1,k_1}$ is the location of the $k_1$-th deletion in the  $w_1$-th read). Then we have that~$\delta_{w_1+1,k_1}=\delta_{w_1,k_1}+t_{w_1}\in [i'_0,i'_3]$. Note that there are~$k-k_1$ deletions~$\{\delta_{w_1,k_1+1},\ldots,\delta_{w_1,k}\}$ to the right of~$\delta_{w_1,k_1}$ and~$k_1-1$ deletions~$\{\delta_{w_1+1,1},\ldots,\delta_{w_1+1,k_1-1}\}$ to the left of~$\delta_{w_1+1,k_1}$. 
Hence we have that
\begin{align*}
&|(\boldsymbol{\delta}_{w_1}\cup\boldsymbol{\delta}_{w_1+1})\cap [\delta_{w_1,k_1}+1,\delta_{w_1,k_1}+t_{w_1}-1]|\\
\le &|(\boldsymbol{\delta}_{w_1}\cup\boldsymbol{\delta}_{w_1+1})\cap [\delta_{w_1,k_1}+1,\delta_{w_1+1,k_1}-1]|\\
\le& k-k_1+k_1-1\\
=&k-1,
\end{align*}
meaning that there are at most~$k-1$ deletions in the $w_1$-th or $(w_1+1)$-th heads that lie in interval~$[\delta_{w_1,k_1}+1,\delta_{w_1,k_1}+t_{w_1}-1]$. Since~$t_{w_1}\ge k(T+1)+1$, there are at least~$k$ disjoint intervals of length~$T+1$ that lie in interval~$[\delta_{w_1,k_1}+1,\delta_{w_1,k_1}+t_{w_1}-1]$.
It then follows that there exists an interval~$[i',i'+T]\subset [\delta_{w_1,k_1}+1,\delta_{w_1,k_1}+t_{w_1}-1]$ such that~$[i',i'+T]\cap (\boldsymbol{\delta}_{w_1}\cup \boldsymbol{\delta}_{w_1+1}) = \emptyset$. Let~$l'_1=|\boldsymbol{\delta}_{w_1}\cap [1,i'-1]|$ and~$l'_2=|\boldsymbol{\delta}_{w_1+1}\cap [1,i'-1]|$ be the number of deletions in heads~$w_1$ and~$w_1+1$, respectively that is to the left of~$i'$. We have that~$l'_1>l'_2$ since~$\delta_{w_1,k_1}<i'$ and~$\delta_{w_1+1,k_1}>i'+T$.
Since~$[i',i'+T]\cap (\boldsymbol{\delta}_{w_1}\cup \boldsymbol{\delta}_{w_1+1}) = \emptyset$ and $l'_1-l'_2\le k<T$, we have that
\begin{align*}
l'_1=&|\boldsymbol{\delta}_{w_1}\cap [1,i'-1]|\\
=&|\boldsymbol{\delta}_{w_1}\cap [1,i'+l'_1-l'_2-1]|\\
=&|\boldsymbol{\delta}_{w_1}\cap [1,i'+T-1]|,~\mbox{and}\\
l'_2=&|\boldsymbol{\delta}_{w_1+1}\cap [1,i'-1]|\\
=&|\boldsymbol{\delta}_{w_1+1}\cap [1,i'+T+l'_2-l'_1-1]|.
\end{align*}
Therefore,
\begin{align*}
&\boldc_{[i'+l'_1-l'_2,i'+T]}\\
=&\boldsymbol{D}_{w_1,[i'+l'_1-l'_2-|\boldsymbol{\delta}_{w_1}\cap [1,i'+l'_1-l'_2-1]|,i'+T-|\boldsymbol{\delta}_{w_1}\cap [1,i'+T-1]|]}\\
=&\boldsymbol{D}_{w_1,[i'-l'_2,i'+T-l'_1]}\\
=&\boldsymbol{D}_{w_1+1,[i'-l'_2,i'+T-l'_1]}\\
=&\boldsymbol{D}_{w_1+1,[i'-|\boldsymbol{\delta}_{w_1+1}\cap [1,i'-1]|,i'+T+l'_2-l'_1-|\boldsymbol{\delta}_{w_1+1}\cap [1,i'+T+l'_2-l'_1-1]|]}\\
=&\boldc_{[i',i'+T+l'_2-l'_1]},    
\end{align*}
which implies that~$L(\boldc_{[i',i'+T]},l'_1-l'_2)=T+1>T$. Since~$[i',i'+T]\subset [i'_0,i'_3]\subset [1,n+k+1]$, this is a contradiction to the assumption that~$L(\boldc_{[1,n+k+1]},l'_1-l'_2)\le T$. Therefore, there are no deletions within~$\boldsymbol{D}_{w,[i_1,i_2-t_{max}]}$ for~$w\in [1,d-1]$, i.e.,~$\boldsymbol{\delta}_w\cap [i'_0,i'_2]=\emptyset$ for~$w\in [1,d-1]$. Similarly, we have that~$\boldsymbol{\delta}_w\cap [i'_1,i'_3]=\emptyset$ for~$w\in [2,d]$. Since $[i'_1,i'_2]\subset[i'_0,i'_2]$ and  $[i'_1,i'_2]\subset[i'_1,i'_3]$, it follows that
\begin{align}\label{equation:emptyi1i2}
    [i'_1,i'_2]\cap \boldsymbol{\delta}_w =\emptyset
\end{align}
and hence
$\boldc_{[i'_1,i'_2]}=\boldsymbol{D}_{w,[i_1+t_{max},i_2-t_{max}]}$ 
for~$w\in [1,d]$.

Next we show that the intervals~$[1,i'_1-1]$ and~$[i'_2+1,n+R]$ are deletion isolated. Suppose on the contrary, there exists some~$w_2\in[1,d]$ for which~$(\boldsymbol{\delta}_{w_2}\cap [1,i'_1-1])+t_{w_2}\ne (\boldsymbol{\delta}_{w_2+1}\cap [1,i'_1-1])$. Then we have that $|\boldsymbol{\delta}_{w_2}\cap [1,i'_1-1]|> |\boldsymbol{\delta}_{w_2+1}\cap [1,i'_1-1]|$. Let~$x = |\boldsymbol{\delta}_{w_2}\cap [1,i'_1-1]|-|\boldsymbol{\delta}_{w_2+1}\cap [1,i'_1-1]|$, then,
\begin{align}
    &\boldc_{[i'_1,i'_2-x]}\nonumber\\
    \overset{(a)}{=}&\boldsymbol{D}_{w_2,[i_1+t_{max}+|\boldsymbol{\delta}_{w_2}\cap [1,i'_1-1]|,i_2-t_{max}-x+|\boldsymbol{\delta}_{w_2}\cap [1,i'_2-x-1]|]}\nonumber\\
    =&\boldsymbol{D}_{w_2+1,[i_1+t_{max}+|\boldsymbol{\delta}_{w_2}\cap [1,i'_1-1]|,i_2-t_{max}-x+|\boldsymbol{\delta}_{w_2}\cap [1,i'_2-x-1]|]}\nonumber\\
    =&\boldsymbol{D}_{w_2+1,[i_1+t_{max}+x+|\boldsymbol{\delta}_{w_2+1}\cap [1,i'_1-1]|,i_2-t_{max}+|\boldsymbol{\delta}_{w_2+1}\cap [1,i'_2-1]|]}\nonumber\\
    \overset{(b)}{=}&\boldc_{[i'_1+x,i'_2]},\label{equation:shift}
\end{align}
where~$(a)$ and~$(b)$ hold since we have E.q.~\eqref{equation:emptyi1i2}.
This implies that~
\begin{align*}
   L(\boldc_{i'_1,i'_2},x)=&i'_2-i'_1+1\\
   \overset{(a)}{\ge} & i_2-i_1-2t_{max}+1\\
   \ge &T+1,    
\end{align*}
where $(a)$ holds since $\boldc_{[i'_1,i'_2]}=\boldsymbol{D}_{w,[i_1+t_{max},i_2-t_{max}]}$.
 This contradicts to the fact that~$L(\boldc_{[1,n+k-1],\le k})\le T$. Therefore, the interval~$[1,i'_1-1]$ is deletion islolated. Similarly,~$[i'_2+1,n+R]$ is deletion isolated.
\end{proof}
In the following, we show that the output intervals satisfy \textbf{(P1)}, \textbf{(P2)}, and \textbf{(P3)}, respectively. Let~$[p_{2j-1},p_{2j}]$,~$j\in [1,J']$ be the marked intervals in the algorithm, where~$p_1<\ldots<p_{2J'}$. Let~$p_0=0$ and~$p_{2J'+1}=n+R+1-k$,
then the output intervals are the leftmost up to~$k$ nonempty intervals among
~$\{[p_{2j}+1,p_{2j+1}-1]\}^{J'}_{j=0}$. 
Note that from the marking operation in the \textbf{Initialization} step and \textbf{Step 2}, the interval~$[n+1-t_{max},n+R-k]$ is not marked. In addition, for any~$j\in [1,J']$, sequences~$\boldsymbol{D}_{w,[p_{2j-1},p_{2j}]}$ are equal for all~$w\in [1,d]$.
Hence, according to Lemma~\ref{lemma:identifyintervals}, there exist intervals~$[p'_{2j-1},p'_{2j}]$,~$j\in [1,J']$, where
\begin{align}\label{equation:shift}
    &p'_{j}=p_j+|\boldsymbol{\delta}_w\cap [1,p'_{j}-1]|\mbox{, and}\nonumber\\
    &[p'_{2\ell-1},p'_{2\ell}]\cap\boldsymbol{\delta}_w=\emptyset,
\end{align}
for all~$j\in [1,2J']$,~$\ell\in [1,J']$, and~$w\in [1,d]$. In addition, intervals~$[1,p'_{2j-1}-1]$ are deletion isolated\footnote{The interval~$[p_1,p_2]$ may be marked in the \textbf{Initialization} step and have length less than~$T+2t_{max}+1$. In that case, apply Lemma~\ref{lemma:identifyintervals} by considering an interval~$[-t_{max}+T+1,0]$ where~$\boldsymbol{D}_{w,[-t_{max}+T+1,0]}$ are equal for ~$w\in [1,d]$.} for~$j\in [1,J']$. It follows that~$[p'_{2j-1},p'_{2j+1}-1]$ is deletion isolated for~$j\in [1,J']$, where $p'_{2J'+1}=n+R+1$. Since~$[p'_{2j-1},p'_{2j}]\cap\boldsymbol{\delta}_w=\emptyset$ for~$j\in [1,J']$ and~$w\in [1,d]$,
then we have that the intervals~$[p'_{2j}+1,p'_{2j+1}-1]$,~$j\in [0,J']$, where~$p'_0=0$ and~$p'_{2J+1}=n+R+1$, are deletion isolated. From~\eqref{equation:shift} we have that~$\boldsymbol{D}_{w,[p_{2j}+1,p_{2j+1}-1]}=\boldc_{[p'_{2j}+1,p'_{2j+1}-1]\cap \boldsymbol{\delta}^c_w}$. In addition, the intervals~$\{[p'_{2j}+1,p'_{2j+1}-1]\}^{J'}_{j=0}$ are disjoint since
\begin{align*}
    &(p'_{2(j+1)}+1)-(p'_{2j+1}-1)\\\
    =& p_{2(j+1)}+|\boldsymbol{\delta}_w\cap [1,p'_{2(j+1)}-1]|+2 -p_{2j+1}-|\boldsymbol{\delta}_w\cap [1,p'_{2j+1}-1]|\\
    \overset{(a)}{\ge} &T +|\boldsymbol{\delta}_w\cap [1,p'_{2(j+1)}-1]|-|\boldsymbol{\delta}_w\cap [1,p'_{2j+1}-1]|\\
    \ge &T-k>0,
\end{align*}
for~$j\in [0,J'-1]$, where $(a)$ follows from the fact that marked intervals have length at least $T$. Therefore, the output intervals~$\{[p_{2j}+1,p_{2j+1}-1]\}^{J'}_{j=0}$ satisfy \textbf{(P1)}.

Next, we show that the output intervals satisfy \textbf{(P2)}. 
For any output interval~$[p_{2j}+1,p_{2j+1}-1]$ with~$[p_{2j}+1,p_{2j+1}-1]\subseteq [1,n+1-t_{max}]$, the corresponding interval~$[p'_{2j}+1,p'_{2j+1}-1]$ contains at least one deletion in~$\boldsymbol{\delta}_w$, i.e.,~$[p'_{2j}+1,p'_{2j+1}-1]\cap \boldsymbol{\delta}_w\ne \emptyset$, for some~$w\in [1,d]$. Otherwise, we have that~$[p'_{2j'}+1,p'_{2j'+1}-1]\cap \boldsymbol{\delta}_w= \emptyset$ for~$w\in [1,d]$ for some $j'$. Combining with \eqref{equation:shift} and the fact that intervals~$[1,p'_{2j-1}-1]$ are deletion isolate for~$j\in [1,J']$,
 it follows that
the sequences~$\boldsymbol{D}_{w,[p_{2j'}+1,p_{2j'+1}-1]}$ are equal for~$w\in [1,d]$. This implies that the interval~$[p_{2j'}+1,p_{2j'+1}-1]$ is marked during the procedure, which is a contradiction to the fact that~$[p_{2j'}+1,p_{2j'+1}-1]$ is not marked. 
Therefore, there are at most~$k$ unmarked intervals that lie within the interval~$[1,n+1]$. 
Note that there is one unmarked interval containing $[n+1-t_{max},n+R-k]$ that does not lie in~$[1,n+1]$. It follows that there are at most~$k+1$ unmarked intervals in total. When there are~$k+1$ unmarked intervals, the deletions~$\boldsymbol{\delta}_w$ are contained in the~$k$ output intervals since each output interval within~$[1,n+1]$ contains at least one deletion. When there are no more than~$k$ intervals, the deletions are contained in the unmarked output intervals since the marked intervals do not contain deletions. Therefore we have that~$\boldsymbol{\delta}_w\subseteq \{[p_{2j}+1,p_{2j+1}-1]\}^{J}_{j=1}$, where~$\{[p_{2j}+1,p_{2j+1}-1]\}^{J}_{j=1}$ are the output intervals and~$J\le k$. 

Finally, we show that~$|\mathcal{I}^*_j\cap [1,n+1]|\le B-k$ for~$j\in [1,J]$, which is \textbf{(P3)}. We first prove that for any unmarked index~$i\in [1,n+1-\lfloor t_{max}+(T+1)/2 \rfloor]$,
there exist some~$w\in [1,d]$ and~$k_1\in [1,k]$, such that a deletion at~$\delta_{w,k_1}$ occurs within distance~$\lfloor t_{max}+(T+1)/2 \rfloor$ to the  bit~$\boldc_{i'=i+|\boldsymbol{\delta}_w\cap [1,i'-1]|}$
that becomes~$\boldsymbol{D}_{w,i}$, i.e.,~$\delta_{w,k_1}\in [i'-\lfloor t_{max}+(T+1)/2 \rfloor,i'+\lfloor t_{max}+(T+1)/2 \rfloor]$\footnote{When~$i'-\lfloor t_{max}+(T+1)/2 \rfloor<0$, consider bits~$\boldsymbol{D}_{w,[i'-\lfloor t_{max}+(T+1)/2 \rfloor,0]}$ that are equal for~$w\in [1,d]$}. Otherwise, we have that~$[i'-\lfloor t_{max}+(T+1)/2 \rfloor,i'+\lfloor t_{max}+(T+1)/2 \rfloor]\cap \delta_w=\emptyset$ for~$w\in [1,d]$. Since ~$[i'-\lfloor  t_{max}+(T+1)/2 \rfloor,i'+\lfloor t_{max}+(T+1)/2 \rfloor]$ has length more than~$t_{w}$ for~$w\in [1,d]$, we have that $\delta_{w+1,j}=\delta_{w,j}+t_w\in [1,i'-\lfloor t_{max}+(T+1)/2 \rfloor-1]$ for every $\delta_{w,j}+t_w\in [1,i'-\lfloor t_{max}+(T+1)/2 \rfloor-1]$. It follows that
~$[1,i'-\lfloor t_{max}+(T+1)/2 \rfloor-1]$ is deletion isolated. Therefore, we have that
\begin{align*}
&\boldsymbol{D}_{w,[i-\lfloor t_{max}+(T+1)/2 \rfloor,i+\lfloor t_{max}+(T+1)/2 \rfloor]}\\
=&\boldc_{[i-\lfloor t_{max}+(T+1)/2 \rfloor+|\boldsymbol{\delta}_w\cap [i'-1]|,i+\lfloor t_{max}+(T+1)/2 \rfloor+|\boldsymbol{\delta}_w\cap [i'-1]|]}\\
=&\boldc_{[i'-\lfloor t_{max}+(T+1)/2 \rfloor,i'+\lfloor t_{max}+(T+1)/2 \rfloor]}
\end{align*}
are equal for all~$w\in [1,d]$, which means that the interval~$[i-\lfloor t_{max}+(T+1)/2 \rfloor,i+\lfloor t_{max}+(T+1)/2 \rfloor]$ and thus the index~$i$ should be marked. Therefore, every unmarked index~$i\in [1,n+1-\lfloor t_{max}+(T+1)/2 \rfloor]$ is associated with a deletion index~$\delta_{w,k_1}$ that is within distance~$\lfloor t_{max}+(T+1)/2 \rfloor$ to~$i'=i+|\boldsymbol{\delta}_w\cap [1,i'-1]|$.
On the other hand, any deletion~$\delta_{w,k_1}$ is associated with at most~$2\lfloor (2t_{max}+T+1)/2\rfloor +1$ unmarked indices. 
Therefore, the number of unmarked bits within~$[1,n+1-\lfloor t_{max}+(T+1)/2 \rfloor]$ is at most~$(2\lfloor (2t_{max}+T+1)/2\rfloor +1) kd$. The number of unmarked bits within~$[1,n+1]$ is at most~$(2\lfloor (2t_{max}+T+1)/2\rfloor +1) kd+\lfloor (2t_{max}+T+1)/2\rfloor=B-k$. 

\subsection{Determining the Number of Deletions}\label{subsection:determinenumbers}
In this subsection we present the algorithm for determining the number of deletions~$|\boldsymbol{\delta}_w\cap \mathcal{I}_j|$,~$w\in [1,d]$, for any deletion isolated interval~$\mathcal{I}_j\subseteq [1,n+k+1]$. Fix $j$. The input for this algorithm are the reads~$\boldsymbol{D}_{[1,d],\mathcal{I}^*_j}$ obtained by deleting~$\boldc_{\boldsymbol{\delta}_w\cap \mathcal{I}_j}$,~$w\in [1,d]$ from~$\boldc_{\mathcal{I}_j}$. The interval~$\mathcal{I}^*_j$ is the $j$-th output interval obtained from the procedure in Subsection~\ref{subsection:a}.
Note that~$\mathcal{I}_j$ is not known at this point.
In the algorithm only the first two reads~$\boldsymbol{D}_{[1,2],\mathcal{I}^*_j}$ are used.
Let~$\mathcal{I}_j=[b_{min},b_{max}]$ for some integers~$b_{min}$ and~$b_{max}$.
Consider the following intervals,
\begin{align*}
\mathcal{B}_{i,m} &= \begin{cases}
&[b_{min}+(i-1)t_{1}+(m-1)(T+2k+1),\min\{b_{min}+(i-1)t_{1}+m(T+2k+1)-1,b_{max}\}],\\&\text{for $i\in [1,\lceil (b_{max}-b_{min}+1)/t_1 \rceil]$
and~$m\in [1,\min\{4k+1,\lceil ((b_{max}-b_{min}+1)\bmod{t_1})/(T+2k+1) \rceil\}]$}\\
\end{cases}.
\end{align*}
Recall that here~$t_1$ is the distance between head~$1$ and head~$2$.
The intervals~$\mathcal{B}_{i,m}$ are disjoint and have length~$T+2k+1$ except when~$i=\lceil (b_{max}-b_{min}+1)/t_1 \rceil$ and~$m=\min\lceil ((b_{max}-b_{min}+1)\bmod{t_1})/(T+2k+1) \rceil$ the length might be less. Let~$\mathcal{U}_m=\cup_{i}\mathcal{B}_{i,m}$ be the union of intervals~$\mathcal{B}_{i,m}$ with the same~$m$ for~$m\in [1,4k+1]$. Then the unions~$U_m$ are disjoint since~$t_1\ge (4k+1)(T+2k+1)$. Since the deletions occur in at most~$2k$ positions in the first two heads,
at least~$2k+1$ unions~$\{\mathcal{U}_{m_1},\ldots,\mathcal{U}_{m_{2k+1}}\}$ satisfy~$\mathcal{U}_{m_{l}}\cap (\boldsymbol{\delta}_1\cup \boldsymbol{\delta}_2)=\emptyset$ for~$l\in [1,2k+1]$.

Similarly, let~$\mathcal{I}^*_j=[b'_{min},b'_{max}]$ for some integers~$b'_{min}$ and~$b'_{max}$. Define the intervals
\begin{align*}
\mathcal{B}'_{i,m} &= \begin{cases}
&[b'_{min}+(i-1)t_{1}+(m-1)(T+2k+1),\min\{b'_{min}+(i-1)t_{1}+m(T+2k+1)-k-1,b'_{max}\}],\\ 
&\text{for $i\in [1,\lceil (b'_{max}-b'_{min}+1)/t_1 \rceil]$
and~$m\in [1,\min\{4k+1,\lceil ((b'_{max}-b'_{min}+1)\bmod{t_1})/(T+2k+1) \rceil\}]$}\\
\end{cases}.
\end{align*}
Then~$\mathcal{B}_{i,m}$ are disjoint length~$T+k+1$ intervals except when~$i=\lceil (b'_{max}-b'_{min}+1)/t_1 \rceil$ and~$m=\min\{4k+1,\lceil ((b'_{max}-b'_{min}+1)\bmod{t_1})/(T+2k+1) \rceil\}$ the length might be less.
Let~
\begin{align*}
\mathcal{IM}'=&\{(i,m):|\mathcal{B}'_{i,m}| =T+k+1\}\\
\end{align*}
be the set of~$(i,m)$ pairs for which~$\mathcal{B}'_{i,m}$ has length~$T+k+1$. Since~$|\mathcal{I}^*_j|=|\mathcal{I}_j|-|\mathcal{I}_j\cap \boldsymbol{\delta}_w|$ for~$w\in [1,d]$, we have that
\begin{align*}
&b'_{max}-b'_{min}+1=|\mathcal{I}^*_j|\\
\le &|\mathcal{I}_j|=
    b_{max}-b_{min}+1
\end{align*}
It follows that~$\mathcal{B}_{i,m}\ne \emptyset$ when~$(i,m)\in \mathcal{IM}'$.
For notation convenience, let~$p_{i,m}$ and~$q_{i,m}$ be the beginning and end points of interval~$\mathcal{B}_{i,m}$, i.e.,~$\mathcal{B}_{i,m}=[p_{i,m},q_{i,m}]$ for~$(i,m)\in\mathcal{IM}'$. Similarly, let~$\mathcal{B}'_{i,m}=[p'_{i,m},q'_{i,m}]$ for $(i,m)\in\mathcal{IM}'$.


The algorithm is given as follows.
\begin{enumerate}
\item \textbf{Step 1:} For all~$(i,m)\in\mathcal{IM}'$, find a unique integer~$0\le x_{i,m}\le k$ such that~$\boldsymbol{D}_{1,[p'_{i,m},q'_{i,m}-x_{i,m}]}=\boldsymbol{D}_{2,[p'_{i,m}+x_{i,m},q'_{i,m}]}$. If no or more than one such integers exist, let~$x_{i,m}=0$. Go to Step 2.
\item \textbf{Step 2:} For all~$m\in [1,4k+1]$, compute the sum~$s_{m}=\sum_{i:(i,m)\in \mathcal{IJ}'}x_{i,m}$. Go to step 3.
\item \textbf{Step 3:} Output the majority among~$\{s_m\}^{4k+1}_{m=1}$.
\end{enumerate}
Note that the set~$\mathcal{IM}'$ and the intervals~$\mathcal{B}'_{i,m}=[p'_{i,m},q'_{i,m}]$ can be determined from Lemma \ref{lemma:synchronization} and the definition of $\cB'_{i,m}$.
We now show that the algorithm outputs~$|\mathcal{I}_j\cap \boldsymbol{\delta}_1|$. It suffices to show that~$s_{m_l}=|\mathcal{I}_j\cap\boldsymbol{\delta}_1|$ for~$l\in [1,2k+1]$.    
First, we show that the unique integer~$x_{i,m_l}$ satisfying~$\boldsymbol{D}_{1,[p'_{i,m_l},q'_{i,m_l}-x_{i,m_l}]}=\boldsymbol{D}_{2,[p'_{i,m_l}+x_{i,m_l},q'_{i,m_l}]}$ exists for~$l\in [1,2k+1]$ and~$i$ such that~$(i,m_l)\in\mathcal{IM}'$. Moreover, the integer~$x_{i,m_l}$ equals~$|\boldsymbol{\delta}_1\cap [p_{1,1},p_{i,m_l}-1]|-|\boldsymbol{\delta}_2\cap [p_{1,1},p_{i,m_l}-1]|$, the difference between the number of deletions in the first two heads that happen before the interval~$\mathcal{B}_{i,m_l}$.
Recall that~$m_l$ satisfies~$\mathcal{U}_{m_l}\cap \boldsymbol{\delta}_w=\emptyset$ for~$w\in \{1,2\}$ and that 
$\boldsymbol{D}_{w,p'_{1,1}}=\boldsymbol{D}_{w,b'_{min}}$ comes from $\boldc_{b_{min}}=\boldc_{p_{1,1}}$ after deletions for $w\in\{1,2\}$. Hence, the bit $\boldsymbol{D}_{w,p'_{i,m_l}}$ comes from $\boldc_{p_{i,m_l}+|\boldsymbol{\delta}_w\cap[p_{1,1,},p_{i,m_l}-1]|}$ after deletions for $w\in\{1,2\}$, by definitions of $p_{i,m}$ and $p'_{i,m}$. In addition, $\boldsymbol{D}_{w,[p'_{i,m_l},q'_{i,m_l}]}$ comes from $\boldc_{[p_{i,m_l}+|\boldsymbol{\delta}_w\cap[p_{1,1,},p_{i,m_l}-1]|,p_{i,m_l}+|\boldsymbol{\delta}_w\cap[p_{1,1,},p_{i,m_l}-1]|+T+k]}$. 
Let $x=|\boldsymbol{\delta}_1\cap [p_{1,1},p_{i,m_l}-1]|-|\boldsymbol{\delta}_2\cap [p_{1,1},p_{i,m_l}-1]|$, we have that
\begin{align}\label{equation:shiftx}
  &\boldsymbol{D}_{1,[p'_{i,m_l},q'_{i,m_l}-x]}\nonumber\\
  =&\boldc_{[p_{i,m_l}+|\boldsymbol{\delta}_1\cap [p_{1,1},p_{i,m_l}-1]|,p_{i,m_l}+|\boldsymbol{\delta}_1\cap [p_{1,1},p_{i,m_l}-1]|+T+k-x]}\nonumber\\ 
  =& \boldsymbol{D}_{2,[p'_{i,m_l}+x,q'_{i,m_l}]}.
\end{align}
Therefore, the integer~$x_{i,m_l}=x$ satisfies~$\boldsymbol{D}_{1,[p'_{i,m_l},q'_{i,m_l}-x_{i,m_l}]}=\boldsymbol{D}_{2,[p'_{i,m_l}+x_{i,m},q'_{i,m_l}]}$. We show this~$x_{i,m_l}$ is unique. Suppose there exists another integer~$y> x$ for which~$\boldsymbol{D}_{1,[p'_{i,m_l},q'_{i,m_l}-y]}=\boldsymbol{D}_{2,[p'_{i,m_l}+y,q'_{i,m_l}]}$. Then we have that
\begin{align*}
  &\boldsymbol{D}_{1,[p'_{i,m_l},q'_{i,m_l}-y]} \\
  =&\boldsymbol{D}_{2,[p'_{i,m_l}+y,q'_{i,m_l}]}\\
  \overset{(a)}{=}&\boldsymbol{D}_{1,[p'_{i,m_l}+y-x,q'_{i,m_l}-x]}\\
  =&\boldc_{[p_{i,m_l}+|\boldsymbol{\delta}_1\cap [p_{1,1},p_{i,m_l}-1]|+y-x,p_{i,m_l}+|\boldsymbol{\delta}_1\cap [p_{1,1},p_{i,m_l}-1]|+T+k-x]}, 
\end{align*}
where~$(a)$ follows from~Eq.~\eqref{equation:shiftx}. 
Since, 
\begin{align*}
  \boldsymbol{D}_{1,[p'_{i,m_l},q'_{i,m_l}-y]}=  \boldc_{[p_{i,m_l}+|\boldsymbol{\delta}_1\cap [p_{1,1},p_{i,m_l}-1]|,p_{i,m_l}+|\boldsymbol{\delta}_1\cap [p_{1,1},p_{i,m_l}-1]|+T+k-y]},
\end{align*}
it follows that
\begin{align*}
  \boldc_{[p_{i,m_l}+|\boldsymbol{\delta}_1\cap [p_{1,1},p_{i,m_l}-1]|+y-x,p_{i,m_l}+|\boldsymbol{\delta}_1\cap [p_{1,1},p_{i,m_l}-1]|+T+k-x]} = \boldc_{[p_{i,m_l}+|\boldsymbol{\delta}_1\cap [p_{1,1},p_{i,m_l}-1]|,p_{i,m_l}+|\boldsymbol{\delta}_1\cap [p_{1,1},p_{i,m_l}-1]|+T+k-y]}.
\end{align*}
It then follows that
\begin{align*}
    &L(\boldc_{[p_{i,m_l}+|\boldsymbol{\delta}_1\cap [p_{1,1},p_{i,m_l}-1]|,p_{i,m_l}+|\boldsymbol{\delta}_1\cap [p_{1,1},p_{i,m_l}-1]|+T+k-x]},y-x)\\
    = &T+k-x+1 \ge T +1,
\end{align*}
which is a contradiction to the fact that~$L(\boldc,\le k)\le T$.
Similarly, such contradiction occurs when~$y<x$. Hence such~$x_{i,m_l}$ is unique.

Next, we show that~$s_{m_l}=|\boldsymbol{\delta}_1\cap \mathcal{I}_j|$ for~$l\in [1,2k+1]$. Since~$p_{i,m_l}-p_{i-1,m_l}=t_1$ for~$i\in [2,\max_{(i,m_l)\in\mathcal{IM}'}i]$, we have that
\begin{align*}
    &|\boldsymbol{\delta}_1\cap [p_{1,1},p_{i,m_l}-1]|\\ =&|\boldsymbol{\delta}_1\cap [p_{1,1},p_{1,m_l}-1]|+ \sum^{i-1}_{w=1}|\boldsymbol{\delta}_1\cap [p_{w,m_l},p_{w+1,m_l}-1]|\\
    \overset{(a)}{=}&|\boldsymbol{\delta}_2\cap [p_{2,1},p_{2,m_l}-1]|+ \sum^{i-2}_{w=1}|\boldsymbol{\delta}_2\cap [p_{w+1,m_l},p_{w+2,m_l}-1]|+|\boldsymbol{\delta}_1\cap [p_{i-1,m_l},p_{i,m_l}-1]|\\
    =&|\boldsymbol{\delta}_2\cap [p_{2,1},p_{i,m_l}-1]|+|\boldsymbol{\delta}_1\cap [p_{i-1,m_l},p_{i,m_l}-1]|\\
    \overset{(b)}{=}&|\boldsymbol{\delta}_2\cap [p_{1,1},p_{i,m_l}-1]|+|\boldsymbol{\delta}_1\cap [p_{i-1,m_l},p_{i,m_l}-1]|,
\end{align*}
where~$(a)$ hold since~$|\boldsymbol{\delta}_1\cap [p_{1,1},p_{1,m_l}-1]|=|\boldsymbol{\delta}_2\cap [p_{2,1},p_{2,m_l}-1]|$ and $|\boldsymbol{\delta}_1\cap [p_{w-1,m_l},p_{w,m_l}-1]|=|\boldsymbol{\delta}_2\cap [p_{w,m_l},p_{w+1,m_l}-1]|$ for~$w\in [2,i-1]$. Equality $(b)$ holds since $\cI_j$ is deletion isolated and hence $\boldsymbol{\delta}_2\cap[p_{1,1},p_{2,1}-1]=\emptyset$. 
It then follows that~$x_{i,m_l}=|\boldsymbol{\delta}_1\cap [p_{i-1,m_l},p_{i,m_l}-1]|$~($p_{0,m_l}=p_{1,1}$) and that
\begin{align*}
    s_{m_l}= |\boldsymbol{\delta}_1\cap [p_{1,1},p_{\max_{(i,m_l)\in\mathcal{IM}'}i,m_l}-1]|
\end{align*}
Note that~$\boldsymbol{\delta}_1\cap [p_{\max_{(i,m)\in\mathcal{IM}'}i,m_l},b_{max}]\subseteq \boldsymbol{\delta}_1\cap [b_{max}-t_1+1,b_{max}]$. Since~$\boldsymbol{\delta}_1\cap [b_{max}-t_1+1,b_{max}]=\emptyset$ because $\cI_{j}$ is deletion isolated, we have that~$s_{m_l}=|\boldsymbol{\delta}_1\cap \mathcal{I}_j|$. Then the majority rule works.

\section{Correcting~$k\ge 2d$ deletions}\label{section:greaterthan2d}
In this section we present the code for correcting~$k\ge 2d$ deletions as well as a lower bound on the redundancy when~$t_i=o(n)$. 
The code construction is similar to the one presented in Section~\ref{section:lessthan2mdeletions}. We use Lemma~\ref{lemma:synchronization} to identify the location of deletions within a set of disjoint intervals~$\mathcal{I}_j$, each with length no more than~$B$. Note that in order to apply Lemma~\ref{lemma:synchronization}, the sequence~$\boldc\in\{0,1\}^{n}$ has to be transformed into a sequence~$F(\boldc)\in\{0,1\}^{n+k+1}$ (see Lemma~\ref{lemma:periodfree}) that satisfies~$L(F(\boldc),\le k)\le T$.
Then we use a concatenated code construction. Specifically, to protect a sequence~$\boldc\in \{0,1\}^{n+k+1}$ from~$k$ deletions, we split~$\boldc$ into blocks ~$\bolda_i$,~$i\in [1,\lceil (n+k+1)/B \rceil]$ of length~$B$ as in~Eq.~\eqref{equation:block}. Then the function~$S$ defined in~Eq.~\eqref{equation:split}, which is a concatenation of hashes~$Hash$ (see Lemma~\ref{lemma:hash}) of~$\bolda_i$,~$i\in [1,\lceil (n+k+1)/B \rceil]$, can be used to corret~$k$ deletions in~$\boldc$ (see Lemma~\ref{lemma:splithash}). Finally, a Reed-Solomon code is used to protect the~$S$ hashes. The encoding function is as follows
\begin{align}\label{equation:encoding2d}
    Enc_2(\boldc)=(F(\boldc),R^{'}_2(\boldc),R^{''}_2(\boldc))
\end{align}
where~
\begin{align}\label{equation:encoding2d1}
R^{'}_2(\boldc)&=RS_{2\lfloor k/d \rfloor}(S(F(\boldc))),\nonumber\\
R^{''}_2(\boldc)&=Rep_{k+1}(Hash(R^{'}_2(\boldc))),
\end{align}
function $S(\cdot)$ is defined in \eqref{equation:split}, and $RS_{2\lfloor k/d \rfloor}$ is the systematic Reed-Solomon code given in Lemma \ref{lemma:rs}.
The length of~$R^{'}_2(\boldc)$ is~$N_1=2\lfloor k/d \rfloor\max\{\log (n+k+1),(4k\log B+o(\log B))\}=2\lfloor k/d \rfloor\log n +o(\log n)$. The length of~$R^{''}_2(\boldc)$ is~$N_2=4k(k+1)\log N_1+O(\log N_1)=o(\log n)$. The length of~$Enc_2(\boldc)$ is~$N=n+k+1+N_1+N_2=n+2\lfloor k/d \rfloor\log n+o(\log n)$.
\begin{theorem}\label{theorem:2d}
The set~$\mathcal{C}_2=\{Enc_2(\boldc):\boldc\in\{0,1\}^n\}$ is a $d$-head~$k$-deletion correcting code for~$2d\le k$, if the distance between any two consecutive heads satisfies~$t_i\ge\max\{(3k+\lceil \log n \rceil+2)[k(k-1)/2+1]+(7k-k^3)/6,(4k+1)(5k+\lceil \log n \rceil+3)\}$ for $i\in\{1,\ldots,d-1\}$. The code~$\mathcal{C}_2$ can be constructed, encoded, and decoded in~$n^{2k+2}$ time. The redundancy of~$\mathcal{C}_2$ is~$N-n=+2\lfloor k/d \rfloor\log n+o(\log n)$.
\end{theorem}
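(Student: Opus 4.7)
The plan is to mirror the proof of Theorem~\ref{theorem:lessthan2d} almost verbatim, with the single substantive change that the two-consecutive-erasure code $ER$ is replaced by the Reed--Solomon code $RS_{2\lfloor k/d\rfloor}$, which can correct up to $2\lfloor k/d\rfloor$ arbitrary symbol erasures at known positions (Lemma~\ref{lemma:rs}). Given any read matrix $\boldsymbol{D}\in\cD_k(Enc_2(\boldc),t_1,\ldots,t_{d-1})$, let $\boldd=\boldsymbol{D}_{1,[1,N-k]}$. I first peel off $R''_2(\boldc)$: its footprint in $\boldd$ is a length-$(N_2-k)$ subsequence of $Rep_{k+1}(Hash(R'_2(\boldc)))$, and a $(k+1)$-fold repetition code tolerates $k$ deletions, so $Hash(R'_2(\boldc))$ is recovered. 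Then Lemma~\ref{lemma:hash} applied to the corresponding length-$(N_1-k)$ subsequence of $\boldd$ yields $R'_2(\boldc)$.

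Next I apply Lemma~\ref{lemma:synchronization} to $F(\boldc)$, which satisfies $L(F(\boldc),\le k)\le 3k+\lceil\log n\rceil+2$ by Lemma~\ref{lemma:periodfree}, and note that the hypothesis on $t_i$ meets both the bound $T_{\min}=(4k+1)(5k+\lceil\log n\rceil+3)$ required by Lemma~\ref{lemma:synchronization} and the bound required by Lemma~\ref{lemma:kdeletionkheads}. This gives disjoint deletion-isolated intervals $\mathcal{I}_1,\ldots,\mathcal{I}_J$ with $J\le k$, $\boldsymbol{\delta}_w\subseteq\cup_j\mathcal{I}_j$, $|\mathcal{I}_j\cap[1,n+k+1]|\le B$, and the values $|\boldsymbol{\delta}_1\cap\mathcal{I}_j|$ determined. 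For every $j$ with $|\boldsymbol{\delta}_1\cap\mathcal{I}_j|\le d-1$, Lemma~\ref{lemma:kdeletionkheads} recovers $F(\boldc)_{\mathcal{I}_j}$ from the aligned submatrix of $\boldsymbol{D}$, exactly as in Section~\ref{section:lessthan2mdeletions}.

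The central counting argument is then: let $J^\star=\{j:|\boldsymbol{\delta}_1\cap\mathcal{I}_j|\ge d\}$ be the ``bad'' intervals. Since the $\mathcal{I}_j$ are disjoint and $|\boldsymbol{\delta}_1|=k$, we have $|J^\star|\le\lfloor k/d\rfloor$. Each bad $\mathcal{I}_j$ has length at most $B$, so it is contained in the union of at most two consecutive blocks $\bolda_i,\bolda_{i+1}$ from the decomposition \eqref{equation:block}. Consequently, once we discard the blocks that overlap some bad interval, we are missing at most $2\lfloor k/d\rfloor$ symbols of $S(F(\boldc))$, and their positions are known because the bad intervals are explicitly identified by Lemma~\ref{lemma:synchronization}. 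Applying $RS_{2\lfloor k/d\rfloor}$ (using $R'_2(\boldc)$ already recovered) fills in these erasures and reconstructs $S(F(\boldc))$ in full. Finally, Lemma~\ref{lemma:splithash} applied to the length-$(n+1)$ prefix-subsequence of $F(\boldc)$ present in $\boldd$ together with $S(F(\boldc))$ returns $F(\boldc)$, and Lemma~\ref{lemma:periodfree} inverts $F$ to give $\boldc$.

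The redundancy and complexity claims are then immediate from the parameters already established above the statement: $N_1=2\lfloor k/d\rfloor\log n+o(\log n)$, $N_2=o(\log n)$, and the dominant cost in both encoding and decoding is $O(n/B)$ evaluations of $Hash$ on blocks of length $B$, each costing $O(\log^{2k+1}n)$ by Lemma~\ref{lemma:hash}, for a total of $O(n\log^{2k}n)$. The only place where I expect to have to be careful is the erasure-localization step: I need to argue that the decoder can unambiguously list the $\le 2\lfloor k/d\rfloor$ erased block-indices. This is not difficult because the output of Lemma~\ref{lemma:synchronization} explicitly identifies every $\mathcal{I}_j$ together with $|\boldsymbol{\delta}_1\cap\mathcal{I}_j|$, so the bad set $J^\star$ and the pair of blocks each bad interval touches are computable; everything else is a direct transcription of the $k\le 2d-1$ argument with $ER$ replaced by $RS_{2\lfloor k/d\rfloor}$.
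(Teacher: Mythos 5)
Your proposal is correct and follows essentially the same route as the paper's own proof: recover the hash layers from the first read, invoke Lemma~\ref{lemma:synchronization} and Lemma~\ref{lemma:kdeletionkheads} to pin down and decode all but the at most $\lfloor k/d\rfloor$ intervals containing $\ge d$ deletions, observe that these touch at most $2\lfloor k/d\rfloor$ blocks whose indices are known, and correct the resulting known-location erasures in $S(F(\boldc))$ with the Reed--Solomon redundancy before applying Lemma~\ref{lemma:splithash} and inverting $F$. The redundancy and complexity accounting also matches the paper.
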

\begin{proof}
The proof is essentially the same as the proof of Theorem~\ref{theorem:lessthan2d}.
For any~$\boldsymbol{D}\in \cD_k(\boldc)$, let~$\boldd=\boldsymbol{D}_{1,[1,N-k]}$ be the first row of~$\boldsymbol{D}$. The sequence~$\boldd$ is a length~$N-k$ subsequence of~$Enc_2(\boldc)$.
Then it is possible to recover~$Hash(R^{'}_2(\boldc))$ from the last $N_2-k$ bits of $d$, which is a length $N_2-k$ subsequence of~$R^{''}_2(\boldc)$. Then, we can recover $Hash(R'_2(\boldc))$, and  recover~$R^{'}_2(\boldc)$ from $\boldd_{[n+1,n+N_1-k]}$.

It suffices to show how to use~$R'(\boldc)$ to recover~$F(\boldc)$.
According to Lemma~\ref{lemma:synchronization}, 
we can identify a set of~$J\le k$ deletion isolated intervals~$\{\mathcal{I}_j\}^J_{j=1}$, each with length not greater than~$B$, such that~$\boldsymbol{\delta}_1\subseteq(\cup^J_{j=1}\mathcal{I}_j)$. Note that according to Lemma \ref{lemma:kdeletionkheads}, the bits $\boldc_{\cI_j}$ with $|\boldsymbol{\delta}_w\cap\cI_j|\le d-1$ errors can be recovered, when $t_i\ge \max\{(3k+\lceil \log n \rceil+2)[k(k-1)/2+1]+(7k-k^3)/6,(4k+1)(5k+\lceil \log n \rceil+3)\}$. Note that
each interval~$\mathcal{I}_j$ with $|\boldsymbol{\delta}_w\cap\cI_j|\ge d$ spans over at most two blocks~$\bolda_i$. Therefore, at most~$2\lfloor k/d \rfloor$ blocks, the indices of which can be identified, contain at least $d$ deletions. Hence the sequence~$S(F(\boldc))$ can be recovered with at most~$2\lfloor k/d \rfloor$ symbol errors, with known error locations.
With the help of
the Reed-Solomon code redundancy
~$RS_{2\lfloor k/d \rfloor}(S(F(\boldc)))$,
 the sequence~$S(F(\boldc))$
 can be recovered. Then from Lemma~\ref{lemma:splithash} and Lemma~\ref{lemma:periodfree} the sequence~$F(\boldc)$ and thus~$\boldc$ can be recovered. 
The computation complexity of~$Enc_2(\boldc)$ has the same order as that of~$Enc_1(\boldc)$. It takes~$O(n^{2k+2})$ time to construct~, encode, and decode~$Enc_2(\boldc)$.
\end{proof}
Now we present a lower bound on the redundancy for small head distances~$t_i=o(n)$,~$i\in [1,d-1]$, which proves the last part of Theorem \ref{theorem:main}.
\begin{theorem}\label{theorem:lowerbound}
Let~$\mathcal{C}$ be a~$d$-head~$k$-deletion code with length~$n$. If the distance~$t_i$ satisfies~$t_i=n^{o(1)}$ for~$i\in [1,d-1]$, then we have that~$ |\mathcal{C}|\le 2^{\lfloor k/2d\rfloor \log n+o(\log n)}$. 
\end{theorem}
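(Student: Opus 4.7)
The plan is a Levenshtein-style sphere-packing count, tailored to exploit the strong cross-head correlation forced by $t_i=n^{o(1)}$. Write $k'=\lfloor k/2d\rfloor$ and $s_i=t_1+\cdots+t_{i-1}$, so that $s_{d-1}=n^{o(1)}$ and the $d$ deletion sets satisfy $\boldsymbol{\delta}_i=\boldsymbol{\delta}_1+s_i$. Let $\mathcal{R}=\bigcup_{\boldc\in\{0,1\}^n}\cD_k(\boldc,t_1,\ldots,t_{d-1})$ denote the total range of read matrices and $V_{\min}=\min_{\boldc\in\{0,1\}^n}|\cD_k(\boldc,t_1,\ldots,t_{d-1})|$. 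Since the deletion balls of distinct codewords of $\mathcal{C}$ are disjoint, the starting point is the basic inequality
\[
|\mathcal{C}|\;\le\;|\mathcal{R}|\,/\,V_{\min},
\]
and the whole task reduces to producing matching estimates on the two sides whose ratio is $2^{k'\log n+o(\log n)}$.

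I would first lower-bound $V_{\min}$ by producing many distinct read matrices from a single $\boldc$. Two patterns $\boldsymbol{\delta}_1,\boldsymbol{\delta}'_1\in\binom{[1,n]}{k}$ can yield the same read matrix from $\boldc$ only if they are related by a local shift of $\boldc$ confined to the combined deletion footprint, a set of width at most $2ks_{d-1}+k = n^{o(1)}$; a simple volume count then shows that each equivalence class has at most $n^{o(1)}$ patterns, so $V_{\min}\ge\binom{n}{k}/n^{o(1)}=n^{\,k-o(1)}$. Next I would upper-bound $|\mathcal{R}|$ by partitioning each $\boldsymbol{\delta}_1$ into maximal clusters whose consecutive elements are within $s_{d-1}$, and observing, via Lemma~\ref{lemma:kdeletionkheads} and its refinement Lemma~\ref{lemma:synchronization}, that a cluster of size at most $2d-1$ is completely disambiguated by the $d$ shifted rows through local consistency. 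Only clusters of size $\ge 2d$ inject genuine uncertainty into the read matrix, and since there are at most $k'=\lfloor k/2d\rfloor$ such heavy clusters, each contributing at most one $\log n$ of placement freedom on top of the bits of $\boldc$ outside the clusters, one gets an $|\mathcal{R}|$ estimate whose ratio with $V_{\min}$ is $2^{k'\log n+o(\log n)}$, yielding the theorem.

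The main obstacle is making the cluster analysis tight enough to close the $o(\log n)$ slack. Concretely, I would need a deterministic reverse-synchronisation procedure that, given any $\boldsymbol{D}\in\mathcal{R}$, produces a canonical description of a preimage $(\boldc,\boldsymbol{\delta}_1)$ consisting of the bits of $\boldc$ outside the heavy-cluster windows together with a list of at most $k'$ window descriptors. Verifying that this description is essentially injective and has total placement-cost $k'\log n+o(\log n)$ is the heart of the argument; it rests on the same pigeonhole underpinning the positive constructions in Sections~\ref{section:lessthan2mdeletions}--\ref{section:greaterthan2d}, namely that $2d-1$ shifted copies of a cluster across the $d$ heads already suffice to resolve it, so only the remaining heavy clusters---of which there are at most $\lfloor k/2d\rfloor$---contribute to the coding ambiguity.
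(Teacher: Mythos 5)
Your packing framework cannot be pushed through to the theorem, for two reasons. First, the quantity $V_{\min}=\min_{\boldc\in\{0,1\}^n}|\cD_k(\boldc,t_1,\ldots,t_{d-1})|$ is degenerate: for the all-zero word (or any word that is $\le k$-periodic over long stretches) all deletion patterns give the same read matrix, so $V_{\min}=1$ and the inequality $|\mathcal{C}|\le|\mathcal{R}|/V_{\min}$ is vacuous; since a code may contain such words, restricting the minimum to codewords needs a separate argument. Second, and more fundamentally, even granting the largest conceivable ball size the bound cannot reach the target. A ball is parametrized by $\boldsymbol{\delta}_1$, so $V_{\min}\le\binom{n}{k}\le 2^{k\log n+O(1)}$ always. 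On the other hand $|\mathcal{R}|\ge 2^{n+k\log n-o(\log n)}$: take any $\boldc$ with $L(\boldc,\le k)=O(\log n)$ and $k$ pairwise far-apart deletions; comparing the $d$ rows localizes each deletion to a window of width $n^{o(1)}$ (this is exactly the synchronization phenomenon of Lemma~\ref{lemma:synchronization}), so the read matrix determines $(\boldc,\boldsymbol{\delta}_1)$ up to a factor $2^{o(\log n)}$ of preimages, and these $\approx 2^{n}\binom{n}{k}$ pairs produce $2^{n+k\log n-o(\log n)}$ distinct matrices. Hence $|\mathcal{R}|/V_{\min}\ge 2^{n-o(\log n)}$, which is strictly weaker than the intended bound $|\mathcal{C}|\le 2^{n-\lfloor k/2d\rfloor\log n+o(\log n)}$ (note the theorem's displayed exponent is missing the leading $n$; your stated ratio $2^{k'\log n+o(\log n)}$ chases that typo and is impossible outright, since $|\mathcal{R}|\ge 2^{n-k}$ while $V_{\min}\le n^k$). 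Your key assertion that ``only clusters of size $\ge 2d$ inject genuine uncertainty into the read matrix'' conflates decoder ambiguity with the size of the read-matrix space: the fact that small clusters are fully resolvable makes the matrices \emph{more} numerous, which hurts, not helps, a sphere-packing count over $\mathcal{R}$.

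The paper's proof avoids packing over read matrices altogether. It subsamples $\boldc$ at positions $1+(2j-1)T_{sum}$ with $T_{sum}=\sum_i t_i=n^{o(1)}$, and exhibits, for each such position, a block of $d$ deletions aligned with the head offsets that removes that particular bit from \emph{every} read; a genie then reveals the locations and all other deleted values, so a $d$-head $k$-deletion code must, on the $\approx n/(2T_{sum})$ subsampled coordinates, correct $\lfloor k/d\rfloor$ erasures, i.e., have Hamming distance greater than $\lfloor k/d\rfloor$ there. The classical Hamming bound with radius $\lfloor k/2d\rfloor$ on those coordinates then gives $|\mathcal{C}|\le 2^{n-\lfloor k/2d\rfloor\log n+o(\log n)}$. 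If you want to salvage a counting proof, it has to be of this conditional type—fixing an adversarial deletion pattern that provably destroys designated information in all heads—rather than a global ball-versus-space count, which the very informativeness of multi-head reads renders too weak.
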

\begin{proof}
Let~$T_{sum}=\sum^{d-1}_{i=1}t_i$.
Sample the sequence~$\boldc$ with period~$T_{sum}$,
\begin{align*}
\boldc'=(c_{1+T_{sum}},c_{1+3T_{sum}},\ldots,c_{1+(2j+1)T_{sum}},\ldots,c_{1+(2\lfloor (n-1-T_{sum})/2T_{sum} \rfloor-1)T_{sum}})
\end{align*}
We show that correcting~$k$ deletions in~$\boldc$ is at least as hard as correcting~$\lfloor k/d\rfloor$ erasures in~$\boldc'$. It suffices to show that~$d$ deletions in heads~$i\in [1,d]$ can erase the information of any bit in~$\boldc'$.
For~$j\in [1,\lfloor (n-1-T_{sum})/2T_{sum} \rfloor]$, let   $d$ deletions occur at positions
\begin{align*}
\{1+(2j-1)T_{sum}-\sum^w_{i=1} t_i:w\in[0,d-1]\},
\end{align*}
at head~$1$. Then the corresponding~$d$ deletion in head~$m$ occur at positions
\begin{align*}
\{1+(2j-1)T_{sum}-\sum^w_{i=1} t_i+\sum^{m-1}_{i=1} t_i:w\in[0,d-1]\}
\end{align*}
for $m\in[1,d]$. 
It follows that the bit~$c_{1+(2j-1)T_{sum}}$ is deleted in all heads. Suppose a genie tells the locations and values of all the~$d$ deleted bits in each head except the value of the bit~$c_{1+(2j-1)T_{sum}}$. Then this reduces to a erasure of the bit~$c_{1+(2j-1)T_{sum}}$ in~$\boldc'$. Note that in this way,~$k$ deletions in~$\boldc$ can cause~$\lfloor k/d\rfloor$ erasures in~$\boldc'$. From the Hamming bound, the size~$|\mathcal{C}|$ is upper bounded by
\begin{align*}
    |\mathcal{C}|\le &2^n/(\sum^{\lfloor k/2d \rfloor }_{i=1}\binom{\lfloor (n-1-T_{sum})/2T_{sum} \rfloor}{i})\\
    =&2^{n-\lfloor k/2d \rfloor(\log n-\log (2T_{sum})) +o(\log n)}\\
    =&2^{n-\lfloor k/2d \rfloor \log n +o(\log n)}.
\end{align*}
\end{proof}
According to Theorem~\ref{theorem:lowerbound}, the redundancy of a~$d$-head~$k$-deletion code is lower bounded by~$\lfloor k/2d \rfloor \log n +o(\log n)$.
\section{Correcting $k$ Deletions and Insertions}\label{section:correctediterrors}
In this section we show how to correct a combination of up to $k$ deletions and insertions in the $d$-head racetrack memory. In this scenario, more challenges arise since there may not be "shifts" between different reads, as we observed in Lemma \ref{lemma:identifyintervals}, after a combination of deletions and insertions. This makes detection of errors harder. Moreover, Lemma \ref{lemma:kdeletionkheads} does not apply.

The encoding and decoding algorithms for this task can be regarded as a generalization of the algorithms for correcting $k$ deletions. Similar to the 
idea in Section \ref{section:lessthan2mdeletions} and Section \ref{section:greaterthan2d}, 
we notice that the location of errors $(\boldsymbol{\delta}_i,\boldsymbol{\gamma}_i)$, $i\in[1,d]$ are contained in a set of disjoint edit isolated intervals (the definition of edit isolated intervals will be given later), each with bounded length. Yet, different from the cases in Section \ref{section:lessthan2mdeletions} and Section \ref{section:greaterthan2d}, some of the edit isolated intervals cannot be detected and identified from the reads. Fortunately, the intervals that cannot be detected contain at least $2d$ errors in each read. In addition, the "shift" in bits outside the edit isolated intervals, caused by the errors in those edit isolated intervals, can be determined in a similar manner to the one in Section \ref{subsection:determinenumbers}. Therefore, the bits outside the edit isolated intervals can be recovered similarly to the method in Section \ref{section:lessthan2mdeletions} and Section \ref{section:greaterthan2d}. 
In addition, we will provide a result similar to Lemma \ref{lemma:kdeletionkheads} (correcting deletion errors), for correcting both deletions and insertions. Specifically, We will show that the intervals with less than $d$ errors can be recovered using the reads. Then, by using Reed-Solomon codes to protect the deletion correcting hashes as we did in Section \ref{section:lessthan2mdeletions} and Section \ref{section:greaterthan2d}, the $2\lfloor k/d\rfloor \log n+o(\log n)$ redundancy can be achieved. We note that in this section, we let the head distances $t_i=t$ to be equal for $i\in [1,d-1]$. In the following, we provide the definition of edit isolated intervals.
\begin{definition}
Let~$\boldsymbol{\delta}_i=\{\delta_{i,1},\ldots,\delta_{i,r}\}$ and $\boldsymbol{\gamma}_i=\{\gamma_{i,1},\ldots,\gamma_{i,s}\}$ be the sets of deletion and insertion locations, respectively, in the~$i$-th head of a~$d$-head racetrack memory, i.e. $\boldsymbol{\delta}_{i+1}=\boldsymbol{\delta}_{i}+t_i$ and $\boldsymbol{\gamma}_{i+1}=\boldsymbol{\gamma}_{i}+t_i$, for~$i\in [1,d-1]$. 
An interval~$\mathcal{I}$ is \emph{edit isolated} if
\begin{align*}
\boldsymbol{\delta}_{i+1}\cap \mathcal{I}=& t_i + \boldsymbol{\delta}_{i}\cap \mathcal{I},\text{ and}\\
\boldsymbol{\gamma}_{i+1}\cap \mathcal{I}=& t_i + \boldsymbol{\gamma}_{i}\cap \mathcal{I}.
\end{align*}
for~$i\in [1,d-1]$. 
\end{definition}

We begin with the the algorithm for identifying a set of intervals $[b_{1j},b_{2j}]$, $j\in [1,J]$, such that for each $j\in[1,J]$, there is an interval $[p_{1j},p_{2j}]$ satisfying:
\begin{enumerate}
    \item[\textbf{(A)}] $[p_{1j},p_{2j}]\subseteq [b_{1j},b_{2j}]$
    \item[\textbf{(B)}] $\boldsymbol{E}_{w,i}=\boldsymbol{E}_{w',i}$ for any $w,w'\in[1,d]$ and $i\in ([b_{1j},p_{1j}-1]\cup[p_{2j}+1,b_{2j}])$
    \item[\textbf{(C)}] $\boldsymbol{E}_{[1,d],[p_{1j},p_{2j}]}\in\cE_{k'}(\boldc_{\cI_j})$ for some edit isolated interval $\cI_j$ and $k'\ge 1$.
    \item[\textbf{(D)}] $|[b_{1j},b_{2j}]|\le (2kdt+2t+1)(k+1)+kdt+2k$ for $j\in [1,J]$.
    \item[\textbf{(E)}] $\boldsymbol{E}_{w,i}=\boldsymbol{E}_{w',i}$ for any $w,w'\in[1,d]$ and $i\in [1,n+1]\backslash(\cup^J_{j=1}[b_{1j},b_{2j}])$.
\end{enumerate}
The algorithm is similar to the one in Section \ref{subsection:a}.
However, different from the intervals $\cI^*_j$, $j\in [1,J]$ generated in Section \ref{subsection:a}, which satisfy properties \textbf{(P1)} and \textbf{(P2)} in Section \ref{section:synchronization},
here we do not necessarily have an edit isolated interval $\mathcal{I}'_j$ satisfying $\boldsymbol{E}_{[1,d],[b_{1j},b_{2j}]}\in \cE_{k'}(\boldc_{\cI'_j})$ for every $j\in [1,J]$. Also, the error locations $(\boldsymbol{\gamma}_w\cup\boldsymbol{\delta}_w)$, $w\in [1,d]$ may not be contained in the collection of intervals $\cup^J_{j=1}\cI_j$. Given a read matrix $\boldsymbol{E}\in\cE_k(\boldc)$, where $\boldc\in\{0,1\}^{n+k+1}$ is a binary input. The algorithm is given as follows.
\begin{enumerate}
\item \textbf{Initialization:}
Set all integers~$m\in [1,n']$ unmarked, where $n'$ is the number of columns in $\boldsymbol{E}$.
Let~$i=1$. Find the largest positive integer~$L$ such that the sequences~$\boldsymbol{E}_{w,[i,i+L-1]}=\boldsymbol{E}_{w',[i,i+L-1]}$ for any~$w,w'\in [1,d]$. If such~$L$ exists and satisfies $L>kdt+t$, mark the integers~$m\in[1,L-(kdt+t)]$ and go to Step~$1$. Otherwise, go to Step~$1$.
\item \textbf{Step 1:} Find the largest positive integer~$L$ such that the sequences~$\boldsymbol{E}_{w,[i,i+L-1]}=\boldsymbol{E}_{w',[i,i+L-1]}$ for any~$w,w'\in [1,d]$. Go to Step~$2$. If no such~$L$ is found, set~$L=0$ and go to Step $2$. 
\item \textbf{Step 2:} If~$L\ge 2(kdt+t)+1$, mark the integers $m\in[i+kdt+t,\min\{i+L-1,n'\}-(kdt+t)]$. Set~$i=i+L+1$ and go to Step $3$. Else~$i=i+1$ and go to Step $3$.
\item \textbf{Step 3:} If~$i\le n'$, go to Step 1. Else go to Step 4.
\item \textbf{Step 4:} Output all unmarked intervals.
\end{enumerate}
We now show that the output intervals  satisfy the properties \textbf{(A)}, \textbf{(B)}, \textbf{(C)},  \textbf{(D)}, and \textbf{(E)} above.
\begin{lemma}\label{lemma:properties}
    For a read matrix $\boldsymbol{E}\in\cE_k(\boldc)\in\{0,1\}^{d\times n'}$, 
    Let $[b_{1j},b_{2j}]$, $j\in[1,J]$ be the output intervals in the above procedure such that $b_{11}<b_{12}<\ldots<b_{1J}$.
    There exists a set of intervals $[p_{1j},p_{2j}]$, $j\in[1,J]$, satisfying \textbf{(A)}, \textbf{(B)}, \textbf{(C)},  \textbf{(D)}, and \textbf{(E)} above. 
\end{lemma}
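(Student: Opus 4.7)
The plan is to mirror the two-step analysis from Subsection~\ref{subsection:a}, adapted to the edit (deletion plus insertion) setting. First I would establish an analog of Lemma~\ref{lemma:identifyintervals}: if the reads agree across all $d$ heads on an interval $[i_1,i_2]$ of length at least $2(kdt+t)+1$, then no edit errors occur at the bits of $\boldc$ that generate the trimmed middle $[i_1+kdt+t,i_2-(kdt+t)]$, and the complementary intervals in $\boldc$ on either side are edit isolated. The argument mirrors the deletion-only proof: any error in head $w$ inside the corresponding window, combined with the relations $\boldsymbol{\delta}_{w+1}=\boldsymbol{\delta}_w+t$ and $\boldsymbol{\gamma}_{w+1}=\boldsymbol{\gamma}_w+t$, forces a long error-free subwindow in two adjacent reads and a nonzero signed offset at most $k$ in absolute value, which in turn forces a periodicity of period at most $k$ and length greater than $T$ inside $\boldc_{[1,n+k+1]}$, contradicting $L(\boldc_{[1,n+k+1]},\le k)\le T$.

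Given this analog, properties \textbf{(B)} and \textbf{(E)} are immediate from the marking rule: any index not lying in some output interval was marked, and a marked index lies in the interior (by $kdt+t$ on each side) of a window of read-agreement, so all $d$ reads coincide there; for \textbf{(B)} the $kdt+t$ trim that bounds each unmarked run guarantees the same agreement on $[b_{1j},p_{1j}-1]\cup[p_{2j}+1,b_{2j}]$ once $[p_{1j},p_{2j}]$ is chosen. I would then define $[p_{1j},p_{2j}]$ inside $[b_{1j},b_{2j}]$ as the smallest subinterval outside of which all $d$ rows of $\boldsymbol{E}$ agree, which gives \textbf{(A)} by construction. For \textbf{(C)}, apply the analog of Lemma~\ref{lemma:identifyintervals} to the agreement windows immediately flanking $[b_{1j},b_{2j}]$: these certify that the portions of $\boldc$ on either side of a middle block $\cI_j$ are edit isolated, hence $\cI_j$ itself is edit isolated, and $\boldsymbol{E}_{[1,d],[p_{1j},p_{2j}]}$ is exactly the read matrix of $\boldc_{\cI_j}$ in the $d$ heads for some $k'$ edits.

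The length bound \textbf{(D)} is proved by charging each unmarked position to a nearby error. I claim that every unmarked index $i$ with $i\le n+1-(kdt+t)$ must have some error location in $\boldsymbol{\delta}_w\cup\boldsymbol{\gamma}_w$ within distance $kdt+t$ of the corresponding bit of $\boldc$ in some head $w$; otherwise the entire window $[i-(kdt+t),i+(kdt+t)]$ would be error-free across all $d$ heads (again by the cascade $\boldsymbol{\delta}_{w+1}=\boldsymbol{\delta}_w+t$, $\boldsymbol{\gamma}_{w+1}=\boldsymbol{\gamma}_w+t$ together with the observation that any shift on one side of an error-free window propagates deterministically to the next head), forcing $i$ to be marked. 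Since there are at most $k$ error locations in head $1$ and each contributes at most $2(kdt+t)+1$ unmarked indices, with boundary corrections for positions near the ends and for the split of unmarked regions between consecutive errors, the length of any one output interval is at most $(2(kdt+t)+1)(k+1)+kdt+2k$.

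The main obstacle is the analog of Lemma~\ref{lemma:identifyintervals}. In the pure-deletion case one tracks the single nonnegative offset $|\boldsymbol{\delta}_w\cap[1,i-1]|-|\boldsymbol{\delta}_{w+1}\cap[1,i-1]|$; with insertions also present, the signed offset $(|\boldsymbol{\gamma}_w\cap[1,i-1]|-|\boldsymbol{\delta}_w\cap[1,i-1]|)-(|\boldsymbol{\gamma}_{w+1}\cap[1,i-1]|-|\boldsymbol{\delta}_{w+1}\cap[1,i-1]|)$ can be of either sign and deletions and insertions across the two heads can partially cancel, so one must carefully verify that whenever the two adjacent reads agree on a long window an actual nontrivial period of length at most $k$ is induced on $\boldc_{[1,n+k+1]}$. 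Once this is isolated cleanly, the rest of the proof is a direct translation of Subsection~\ref{subsection:a}.
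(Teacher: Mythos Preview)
Your plan diverges from the paper in a way that creates a real gap. You propose to carry over the periodicity argument of Lemma~\ref{lemma:identifyintervals} to the edit setting, and you correctly flag the obstacle: with both deletions and insertions, the signed offset between adjacent heads can vanish even when errors are present, so agreement of two reads on a long window need not force any period in $\boldc$. This is not just ``unfinished work'': the claimed analog---that no edit errors occur in the trimmed middle of a long agreement window---is false in general. If head~$w$ and head~$w+1$ each have one deletion and one insertion before the window, the offsets can coincide, the reads agree for the trivial reason $\boldsymbol{E}_{w,j}=c_{j-a}=\boldsymbol{E}_{w+1,j}$, and yet errors may sit inside the trimmed middle. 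Your downstream choice of $[p_{1j},p_{2j}]$ as ``the smallest subinterval outside of which all rows agree'' then need not correspond to any edit isolated $\cI_j$, so \textbf{(C)} is not established.

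The paper avoids this entirely with a pigeonhole argument that does not use the constraint $L(\boldc,\le k)\le T$ at all. The left margin $[b_{1j},b_{1j}+kdt+t-1]$ (where all $d$ reads agree by construction of the marking rule) is partitioned into $kd+1$ consecutive subintervals of length $t$. A single error appears at $d$ locations across the heads, spaced exactly $t$ apart, and therefore hits at most $d$ of these subintervals; $k$ errors hit at most $kd$, so some length-$t$ subinterval is free of all errors in all heads. Because its length equals the head distance, any error to its left in head $w$ stays to its left in head $w+1$, which directly yields edit isolation of the portion of $\boldc$ to its left. Doing the same on the right margin produces the edit isolated $\cI_j$ and hence $[p_{1j},p_{2j}]$ satisfying \textbf{(A)}--\textbf{(C)}; the bound $k'\ge 1$ in \textbf{(C)} follows because if $\cI_j$ contained no error the reads would agree on all of $[b_{1j},b_{2j}]$ and it would have been marked. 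For \textbf{(D)} the paper also argues differently from your charging scheme: the $k$ error locations across $d$ heads lie in $k$ intervals of length at most $dt$, so any output interval exceeding $(2kdt+2t+1)(k+1)+kdt+2k$ would contain an error-free subinterval of length at least $2kdt+2t+1$, which would have been marked---a contradiction.
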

\begin{proof}
Note that for each interval $[b_{1j},b_{2j}]$, we have $\boldsymbol{E}_{w,[b_{1j},b_{1j}+kdt+t-1]}=\boldsymbol{E}_{w',[b_{1j},b_{1j}+kdt+t-1]}$ and 
$\boldsymbol{E}_{w,[b_{2j}-kdt-t+1,b_{2j}]}=\boldsymbol{E}_{w',[b_{2j}-kdt-t+1,b_{2j}]}$
for any~$w,w'\in [1,d]$, except for $j=1$, $\boldsymbol{E}_{w,[b_{1j},b_{1j}+kdt+t-1]}$ may not be equal to $\boldsymbol{E}_{w',[b_{1j},b_{1j}+kdt+t-1]}$, in which case, we let that $p_{11}=1$ and the following arguments hold. 
Consider the set of intervals $[b_{1j}+(i-1)t,b_{1j}+it-1]$ for $i\in [1,kd+1]$. Note that an error occurs in at most $d$ intervals, each in one of the $d$ heads. Therefore, at most $kd$ intervals contain errors. Then, there exists an interval $[b_{1j}+(i_1-1)t,b_{1j}+i_1t-1]$ for some $i_1\in [1,kd+1]$ such that 
$[b_{1j}+(i_1-1)t,b_{1j}+i_1t-1]\cap(\boldsymbol{\gamma}_w\cup\boldsymbol{\delta}_w)=\emptyset$ for $w\in [1,d]$. Similarly, there exists an interval $[b_{2j}-i_2t+1,b_{2j}-(i_2-1)t]$ for some $i_2\in [1,kd+1]$, such that $[b_{2j}-i_2t+1,b_{2j}-(i_2-1)t]\cap(\boldsymbol{\gamma}_w\cup\boldsymbol{\delta}_w)=\emptyset$ for $w\in[1,d]$.
This implies that $[b_{1j}+i_1t-1-k,b_{2j}-i_2t+1+k]$ is an edit isolated interval.
Let $\boldsymbol{E}_{[1,d],[p_{1j},p_{2j}]}\in \cE_{k'_j}(\boldc_{[b_{1j}+i_1t-1-k,b_{2j}-i_2t+1+k]})$, where $k'_j=|[b_{1j}+i_1t-1-k,b_{2j}-i_2t+1+k]\cap\boldsymbol{\delta}_1|+|[b_{1j}+i_1t-1-k,b_{2j}-i_2t+1+k]\cap\boldsymbol{\gamma}_1|$, be the read matrix obtained from $\boldc_{[b_{1j}+i_1t-1-k,b_{2j}-i_2t+1+k]}$ after deletion errors at locations $\boldsymbol{\delta}_w\cap[b_{1j}+i_1t-1-k,b_{2j}-i_2t+1+k]$ and insertion errors at locations $\boldsymbol{\gamma}_w\cap[b_{1j}+i_1t-1-k,b_{2j}-i_2t+1+k]$, $w\in[1,d]$.
Then we have that $p_{1j}\in [b_{1j}+t-1-2k,b_{1j}+kdt+t-1]$ and $p_{2j}\in  [b_{2j}-kdt-t+1,b_{2j}-t+1+2k]$. Therefore, the intervals $[p_{1j},p_{2j}]$, $j\in[1,J]$ satisfy \textbf{(A)}, \textbf{(B)}.
To show that $[p_{1j},p_{2j}]$, $j\in[1,J]$ satisfy
\textbf{(C)}, we need to show $k'_j\ge 1$ for each $j$. Suppose on the contrary, $k'_j=0$. Then since $\boldsymbol{E}_{[1,d],[p_{1j},p_{2j}]}\in \cE_{k'_j}(\boldc_{[b_{1j}+i_1t-1-k,b_{2j}-i_2t+1+k]})$, we have that $\boldsymbol{E}_{w,[p_{1j},p_{2j}]}=\boldsymbol{E}_{w',[p_{1j},p_{2j}]}$ for any $w,w'\in [1,d]$. Then we have $\boldsymbol{E}_{w,[b_{1j},b_{2j}]}=\boldsymbol{E}_{w',[b_{1j},b_{2j}]}$ for any $w,w'\in [1,d]$, and $b_{1j}+kdt+t$ should have been marked, a contradiction to the fact that $[b_{1j},b_{2j}]$ is an unmarked interval.

Next, we show that $|[b_{1j},b_{2j}]|< (2kdt+2t+1)(k+1)+kdt+2k$.
Note that an error that occurs at location $i$ in the first head also occurs at $i+(w-1)t$ in the $w$-th head. These locations are contained in an interval $[i,i+(d-1)t]$ of length less than $dt$. The locations of $k$ errors in $d$ heads are contained in $k$ intervals, each of length at most $dt$. If $|[b_{1j},b_{2j}]|\ge (2kdt+2t+1)(k+1)+kdt+2k$, there exists a sub-interval $[b'_{1j},b'_{2j}]\subseteq [b_{1j}+k,b_{2j}-k]$ with length at least $2kdt+2t+1$, that is disjoint with the $k$ intervals that contain locations of all errors in all heads. Therefore, $[b'_{1j},b'_{2j}]\cap(\boldsymbol{\delta}_w\cup\boldsymbol{\gamma}_w)=\emptyset$ for $w\in[1,d]$. 
Since the interval $[b'_{1j},b'_{2j}]$ has length more than $t$, the intervals $[1,b'_{1j}-1]$ and $[b'_{2j}+1,n+k+1]$ are edit isolated, where $n+k+1$ is the length of $\boldc$. Moreover,
$\boldsymbol{E}_{w,i}=\boldsymbol{E}_{w',i}$ for any $w,w'\in[1,d]$ and $i\in [b'_{1j}-|\boldsymbol{\delta}_1\cap[1,b'_{1j}-1]|+|\boldsymbol{\gamma}_1\cap[1,b'_{1j}-1]|,b'_{2j}-|\boldsymbol{\delta}_1\cap[1,b'_{1j}-1]|+|\boldsymbol{\gamma}_1\cap[1,b'_{1j}-1]|]$. This implies that $i=b'_{1j}-|\boldsymbol{\delta}_1\cap[1,b'_{1j}-1]|+|\boldsymbol{\gamma}_1\cap[1,b'_{1j}-1]+kdt+t$ should be marked, contradicting to the fact that $[b'_{1j},b'_{2j}]$ is unmarked, $j\in [1,J]$. Therefore, we proved \textbf{(D)}. Finally, for marked indices $i$, we have that $\boldsymbol{E}_{w,i}=\boldsymbol{E}_{w',i}$ for any $w,w'\in[1,d]$. Theorefore, we have \textbf{(E)}. 
\end{proof}
In the remaining of this section, we first show how to determine the shifts caused by errors in the edit isolated intervals that can be detected. This provides a way to correct most of the bits in $\boldc$. Then, we show how to correct $k<d$ deletions and insertions in total, and show that when $k\ge d$ and the the errors are not corrected, there is a constrait on the number of errors that occur. Finally, we present our encoding and decoding algorithms for the general cases when $k\ge d$. The code  is the same as the construction in Section \ref{section:greaterthan2d}, but with a different decoding algorithm. 
Before dealing with the $k<d$ case, we present a proposition that is repeatedly used in this section.
\begin{proposition}\label{proposition:basicshifts}
Let $\boldsymbol{E}\in \cE_k(\boldc)$ be a read matrix for some sequence $\boldc$ satisfying $L(\boldc,\le k)\le T$. 
For any integers $i\in [1,n]$ and $w,w'\in[1,d]$ such that no error occurs in interval $[i-T-2k,i]$ in the $w$-th and $w'$-th head, i.e., 
\begin{align}\label{equation:noerrornearistar}
&(\boldsymbol{\delta}_w\cup \boldsymbol{\gamma}_w)\cap [i-T-2k,i]= \emptyset,\text{ and}\nonumber\\
    &(\boldsymbol{\delta}_{w'}\cup \boldsymbol{\gamma}_{w'})\cap [i-T-2k,i]= \emptyset,
\end{align}
If 
\begin{align}\label{equation:shiftinread}
    \boldsymbol{E}_{w,[i-T-2k,i-x]}=\boldsymbol{E}_{w',[i-T-2k+x,i]}
\end{align}
for some integer $x\in[0,k]$, then
\begin{align}\label{equation:shiftediterrors}
|\boldsymbol{\gamma}_w\cap[1,i-T-2k-1]|-|\boldsymbol{\delta}_w\cap [1,i-T-2k-1]|+x=|\boldsymbol{\gamma}_{w'}\cap [1,i-T-2k-1]|- |\boldsymbol{\delta}_{w'}\cap [1,i-T-2k-1]|    
\end{align}
\end{proposition}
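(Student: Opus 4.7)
The plan is to translate the hypothesised read equality into an equality of two substrings of $\boldc$, and then argue that any discrepancy in their alignment would produce a long periodic substring of $\boldc$ contradicting $L(\boldc,\le k)\le T$.

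First, the no-error hypothesis \eqref{equation:noerrornearistar} implies that throughout the window of indices $[i-T-2k,i]$ the correspondence between columns of the read $\boldsymbol{E}_w$ and positions of $\boldc$ is a pure translation. Writing
\[
s_w \;:=\; |\boldsymbol{\gamma}_w\cap[1,i-T-2k-1]|-|\boldsymbol{\delta}_w\cap[1,i-T-2k-1]|
\]
for the net (insertion minus deletion) shift accumulated strictly before the window in head $w$, and similarly $s_{w'}$, I would deduce
\[
\boldsymbol{E}_{w,[i-T-2k,i-x]} \;=\; \boldc_{[i-T-2k-s_w,\,i-x-s_w]},
\qquad
\boldsymbol{E}_{w',[i-T-2k+x,i]} \;=\; \boldc_{[i-T-2k+x-s_{w'},\,i-s_{w'}]}.
\]

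Substituting into \eqref{equation:shiftinread} therefore yields an equality of two length-$(T+2k-x+1)$ substrings of $\boldc$ whose starting indices differ by $q:=(s_{w'}-s_w)-x$; the proposition's conclusion is precisely $q=0$. Suppose for contradiction that $q\neq 0$. Splicing the two equal substrings along their overlap produces a single contiguous substring of $\boldc$ of length at least $T+2k-x+1+|q|\ge T+k+1$ whose bits satisfy $c_j=c_{j+|q|}$ throughout, i.e., a substring of period $|q|$. If $|q|\le k$, this immediately contradicts $L(\boldc,\le k)\le T$, forcing $q=0$ as required.

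The main obstacle is thus establishing $|q|\le k$. The hypothesis gives $|x|\le k$. For the shift difference $s_{w'}-s_w$, exploiting $\boldsymbol{\gamma}_{w'}=\boldsymbol{\gamma}_w+(w'-w)t$ and $\boldsymbol{\delta}_{w'}=\boldsymbol{\delta}_w+(w'-w)t$ together with the identity $|A\cap B_1|-|A\cap B_2|=|A\cap(B_1\setminus B_2)|-|A\cap(B_2\setminus B_1)|$, one reduces $s_{w'}-s_w$, up to a boundary contribution at position $0$, to the net insertions minus deletions of head $w$ lying in the length-$(w'-w)t$ window immediately to the left of $[i-T-2k,i]$. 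Since each head carries at most $k$ error positions in total, this forces $|s_{w'}-s_w|\le k$; combined with $x\in[0,k]$ and a short sign analysis using the one-sidedness of $x$, one obtains $|q|\le k$, after which the period-versus-length contradiction with $L(\boldc,\le k)\le T$ closes the argument.
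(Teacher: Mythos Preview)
Your approach is the paper's: assume the claimed shift relation fails, translate the read equality back to two substrings of $\boldc$, and exhibit a periodic run contradicting $L(\boldc,\le k)\le T$. The paper's auxiliary quantity $x'$ is exactly your $s_{w'}-s_w$, and its period $|x-x'|$ is your $|q|$; the paper just splits into the cases $x'>x$ and $x'<x$ and writes out the index chain explicitly.

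One execution point does need repair. Your identity $\boldsymbol{E}_{w,[i-T-2k,\,i-x]}=\boldc_{[\,i-T-2k-s_w,\,i-x-s_w\,]}$ is not justified as written: the no-error hypothesis is on $\boldc$\nobreakdash-positions $[i-T-2k,i]$, which corresponds to read columns $[i-T-2k+s_w,\,i+s_w]$, not to $[i-T-2k,\,i]$. If $s_w>0$, read column $i-T-2k$ may be an inserted symbol or come from a $\boldc$\nobreakdash-position outside the clean window; if $s_w<-x$, column $i-x$ may lie beyond it. The paper avoids this by working in the opposite direction: it starts from the margined $\boldc$\nobreakdash-interval $[i-T-k,\,i-k]$ and pushes forward into the reads, invoking $|s_w|\le |\boldsymbol{\gamma}_w|+|\boldsymbol{\delta}_w|\le k$ so that the image stays inside the column range where \eqref{equation:shiftinread} applies. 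You should use the same $k$\nobreakdash-buffered window rather than the full $[i-T-2k,i]$.

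Your final paragraph on $|q|\le k$ is honest about the difficulty but the ``short sign analysis'' does not close it: from $|s_{w'}-s_w|\le k$ and $x\in[0,k]$ one only gets $q\in[-2k,k]$, and the one\nobreakdash-sidedness of $x$ alone does not force $|q|\le k$ when $s_{w'}-s_w<0$. The paper's proof is equally silent here (it simply asserts the contradiction with $L(\boldc,\le k)\le T$), so this is a shared soft spot rather than a divergence from the paper's argument.
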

\begin{proof}
Suppose on the contrary,
\begin{align}\label{equation:contrary}
|\boldsymbol{\gamma}_w\cap[1,i-T-2k-1]|-|\boldsymbol{\delta}_w\cap [1,i-T-2k-1]|+x'=|\boldsymbol{\gamma}_{w'}\cap [1,i-T-2k-1]|- |\boldsymbol{\delta}_{w'}\cap [1,i-T-2k-1]|    
\end{align}
for some $x'\ne x$. If $x'>x$, then we have that
\begin{align*}
    &c_{[i-T-k+x'-x,i-k]}\\
    \overset{(a)}{=}&\boldsymbol{E}_{w,[i-T-k+x'-x+|\boldsymbol{\gamma}_w\cap[1,i-T-2k-1]|-|\boldsymbol{\delta}_w\cap [1,i-T-2k-1]|,i-k+|\boldsymbol{\gamma}_w\cap[1,i-T-2k-1]|-|\boldsymbol{\delta}_w\cap [1,i-T-2k-1]|]}\\
    \overset{(b)}{=}&\boldsymbol{E}_{w',[i-T-k+x'+|\boldsymbol{\gamma}_w\cap[1,i-T-2k-1]|-|\boldsymbol{\delta}_w\cap [1,i-T-2k-1]|,i-k+|\boldsymbol{\gamma}_w\cap[1,i-T-2k-1]|-|\boldsymbol{\delta}_w\cap [1,i-T-2k-1]|+x]}\\
    \overset{(c)}{=}&\boldsymbol{E}_{w',[i-T-k+|\boldsymbol{\gamma}_{w'}\cap[1,i-T-2k-1]|-|\boldsymbol{\delta}_{w'}\cap [1,i-T-2k-1]|,i-k+|\boldsymbol{\gamma}_{w'}\cap[1,i-T-2k-1]|-|\boldsymbol{\delta}_{w'}\cap [1,i-T-2k-1]|+x-x']}\\
    \overset{(d)}{=}&c_{[i-T-k,i-k+x-x']},
\end{align*}
where $(a)$ and $(d)$ follows from \eqref{equation:noerrornearistar} and the fact that $|\gamma_w|+|\delta_w|\le k$ for $w\in[1,d]$, $(b)$ follows from \eqref{equation:shiftinread}, and $(c)$ follows from \eqref{equation:contrary}.

If $x'<x$, we have that
\begin{align*}
    &c_{[i-T-k,i-k-x+x']}\\
    \overset{(a)}{=}&\boldsymbol{E}_{w,[i-T-k+|\boldsymbol{\gamma}_w\cap[1,i-T-2k-1]|-|\boldsymbol{\delta}_w\cap [1,i-T-2k-1]|,i-k-x+x'+|\boldsymbol{\gamma}_w\cap[1,i-T-2k-1]|-|\boldsymbol{\delta}_w\cap [1,i-T-2k-1]|]}\\    
    =&\boldsymbol{E}_{w',[i-T-k+|\boldsymbol{\gamma}_w\cap[1,i-T-2k-1]|-|\boldsymbol{\delta}_w\cap [1,i-T-2k-1]|+x,i-k+x'+|\boldsymbol{\gamma}_w\cap[1,i-T-2k-1]|-|\boldsymbol{\delta}_w\cap [1,i-T-2k-1]|]}\\
    =&\boldsymbol{E}_{w',[i-T-k+|\boldsymbol{\gamma}_{w'}\cap[1,i-T-2k-1]|-|\boldsymbol{\delta}_{w'}\cap [1,i-T-2k-1]|+x-x',i-k+|\boldsymbol{\gamma}_{w'}\cap[1,i-T-2k-1]|-|\boldsymbol{\delta}_{w'}\cap [1,i-T-2k-1]|]}\\
    \overset{(b)}{=}&c_{[i-T-k+x-x',i-k]},
\end{align*}
In both cases, we have that $L(\boldc,|x-x'|)\ge T+1$, contradicting to the fact that $L(\boldc,\le k)\le T$. Hence, $x'=x$ and the proof is done.
\end{proof}

\subsection{Determine Bits Outside Edit Isolated Intervals}
The following lemma shows that the bit shifts caused by errors in intervals $\cI_j$, $j\in[1,J]$ can be determined.
\begin{lemma}\label{lemma:determineshiftedit}
Let $\boldsymbol{E}\in\cE_{k}(\boldc)$ be a read matrix for some sequence $\boldc$ satisfying $L(\boldc,\le k)\le T$. Let the head distance $t$ satisfy $t>(4K+1)(T+4k+1)$. If there is an interval $[b_1,b_2]$, an interval $[p_1,p_2]\subseteq [b_1,b_2]$, and an edit isolated interval $\mathcal{I}$ satisfying $\boldsymbol{E}_{[1,d],[p_1,p_2]}\in\cE_{k'}(\boldc_{\mathcal{I}})$ for some $0<k'\le d-1$, and $\boldsymbol{E}_{w,j}=\boldsymbol{E}_{w',j}$ for any $w,w'\in[1,d]$ and $j\in([b_1,p_1-1]\cup[p_2+1,b_2])$, then
	the number of bit shifts caused by errors in interval $\mathcal{I}$, which is $|\boldsymbol{\gamma}_w\cap\mathcal{I}|-|\boldsymbol{\delta}_w\cap\mathcal{I}|$, can be decided from $\boldsymbol{E}_{[1,d],[b_1,b_2]}$, for $w\in[1,d]$. Moreover, if $\boldsymbol{E}_{w,[b_1,b_2]}=\boldsymbol{E}_{w',[b_1,b_2]}$ for any $w,w'\in[1,d]$, then $|\boldsymbol{\gamma}_w\cap\mathcal{I}|=|\boldsymbol{\delta}_w\cap\mathcal{I}|$ for any $w\in[1,d]$.
\end{lemma}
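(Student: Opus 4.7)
The plan reduces the shift-determination to a length computation by leveraging an edit-correcting routine for few-error intervals. Because $\cI$ is edit isolated, the cardinalities $|\boldsymbol{\gamma}_w\cap\cI|$ and $|\boldsymbol{\delta}_w\cap\cI|$ are preserved under shifting by $t$ and hence are independent of $w$; thus a single integer $x:=|\boldsymbol{\gamma}_w\cap\cI|-|\boldsymbol{\delta}_w\cap\cI|$ governs the shift and is what must be recovered.

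For the first claim I would proceed in three steps. First, from $\boldsymbol{E}_{[1,d],[b_1,b_2]}$ alone, identify $[p_1,p_2]$ as the minimal sub-interval of $[b_1,b_2]$ outside of which all rows of $\boldsymbol{E}$ coincide (well-defined and matching the $[p_1,p_2]$ named in the hypothesis). Second, apply the edit analog of Lemma \ref{lemma:kdeletionkheads} that the paper promises to establish in Section \ref{section:correctediterrors} (any edit-isolated interval carrying at most $d-1$ edits per head is uniquely decodable from its $d$ reads under the given head-distance bound) to the submatrix $\boldsymbol{E}_{[1,d],[p_1,p_2]}\in\cE_{k'}(\boldc_{\cI})$, recovering $\boldc_{\cI}$ and in particular $|\cI|$. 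Third, the length identity for an edit-read submatrix gives $p_2-p_1+1=|\cI|+x$, and therefore $x=(p_2-p_1+1)-|\cI|$ is determined.

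For the second claim, assume $\boldsymbol{E}_{w,[b_1,b_2]}=\boldsymbol{E}_{w',[b_1,b_2]}$ for all $w,w'$. I would run the shift-detection routine of Section \ref{subsection:determinenumbers} on $\boldsymbol{E}_{[1,d],[b_1,p_1-1]}$ and on $\boldsymbol{E}_{[1,d],[p_2+1,b_2]}$: split each agreeing region into $4k+1$ parallel classes of disjoint length-$(T+4k+1)$ test sub-intervals spaced by $t$ (the hypothesis $t>(4k+1)(T+4k+1)$ permits this packing) and in each test sub-interval invoke Proposition \ref{proposition:basicshifts} to pin down the unique cross-row shift between rows $1$ and $2$. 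Since the rows coincide bit-for-bit on these regions, the detected shift is $0$ in every clean sub-interval, so by majority across the $4k+1$ classes the cross-row shift on each side of $\cI$ is $0$; combined with the length identity above this forces $x=0$, i.e., $|\boldsymbol{\gamma}_w\cap\cI|=|\boldsymbol{\delta}_w\cap\cI|$ for every $w\in[1,d]$.

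The principal obstacle is Step 2 of the first claim: proving that an edit-isolated interval with at most $d-1$ edits per head is uniquely decodable from its $d$ reads under the head-distance hypothesis, i.e., the edit analog of Lemma \ref{lemma:kdeletionkheads}, which will require an argument generalizing the deletion-only reasoning of \cite{chee2018coding}. A secondary subtlety is transferring the shift-detection routine of Section \ref{subsection:determinenumbers} to the mixed insertion/deletion setting; the constant $(4k+1)(T+4k+1)$ in the head-distance bound is calibrated precisely so that at least one of the $4k+1$ parallel classes of test sub-intervals avoids every error location across all $d$ heads, which is what makes the majority vote succeed.
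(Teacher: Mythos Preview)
Your approach to the first claim introduces a circular dependency. You propose to invoke the edit analog of Lemma~\ref{lemma:kdeletionkheads} (which in the paper is Lemma~\ref{lemma:keditkheads}) to recover $\boldc_{\cI}$ and then read off the shift from a length identity. But the paper's proof of Lemma~\ref{lemma:keditkheads}---specifically the correctness guarantee in Proposition~\ref{proposition:numberofuncorrectableerrors}---invokes Lemma~\ref{lemma:determineshiftedit} itself: it is used to argue that when the iterative decoder halts with all remaining rows equal but the output differs from $\boldc_{\cI}$, the residual error count must be even (hence at least $2d^*$ rather than $2d^*-1$), and the $d^*=1$ case goes through Lemma~\ref{lemma:determineshiftsforedits}, which also rests on Lemma~\ref{lemma:determineshiftedit}. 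So you cannot treat the edit-decoding lemma as a black box here; the obstacle you flag as ``principal'' is not just proving that lemma independently, but proving it \emph{without} the present one.

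The paper avoids this by computing the shift \emph{directly}, essentially along the lines you sketch for the second claim---but applied across the whole of $[b_1,b_2]$, not just the agreeing margins. One lays down $4k+1$ parallel classes of length-$(T+4k+1)$ test sub-intervals spaced by $t$, and in each sub-interval compares rows $1$ and $2$ to extract a local shift $x_{m,i}$ (well-defined and unique by Proposition~\ref{proposition:basicshifts} on the clean sub-intervals). For each of the at least $2k+1$ classes $m$ whose sub-intervals avoid all error locations in heads $1$ and $2$, the sum $\sum_i x_{m,i}$ telescopes to exactly $|\boldsymbol{\delta}_1\cap\cI|-|\boldsymbol{\gamma}_1\cap\cI|$, and a majority vote over $m\in[1,4k+1]$ recovers the shift. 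The second claim then falls out in one line: when all rows agree on $[b_1,b_2]$, every $x_{m,i}$ equals $0$, so the majority is $0$. In short, the machinery you propose for the second claim \emph{is} the whole proof; the detour through full decoding is both unnecessary and, in the paper's logical order, circular. Note also that the paper's argument uses only $k'\le k$ (via pigeonhole over $4k+1$ classes), not the hypothesis $k'\le d-1$; this matters because the lemma is later applied to intervals carrying many more than $d-1$ errors.
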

\begin{proof}
Similar to what we did in Section \ref{subsection:determinenumbers}. 
consider a set of intervals 
\begin{align*}
\mathcal{B}_{i,m} &= \begin{cases}
&[b_{1}+(i-1)t+(m-1)(T+4k+1),b_{1}+(i-1)t+m(T+4k+1)-1],\\ &\text{for $m\in [1,4k+1]$ and $i\in [0,\lceil\frac{b_2-b_1+1}{t}\rceil+1]$ satisfying $b_{1}+(i-1)t+m(T+4k+1)-1\le b_2$.}
\end{cases}.
\end{align*}
Note that the intervals $\mathcal{B}_{i,m}$ are disjoint when $t>(4k+1)(T+4k+1)$. 
For notation convenience, let 
\begin{align*}
    q_{i,m}\triangleq b_{1}+(i-1)t+(m-1)(T+4k+1)
\end{align*}
for $m\in [1,4k+1]$ and $i\in [0,\lceil\frac{b_2-b_1+1}{t}\rceil+1]$ satisfying $b_{1}+(i-1)t+m(T+4k+1)-1\le b_2$.
Let $$\cU_m=\cup_{i:q_{i,m}-1\le b_2,i\in [1,\lceil\frac{b_2-b_1+1}{t}\rceil+1]}\cB_{i,m},$$ for $m\in [1,4k+1]$. Since there are at most $2k$ errors in the first two heads, there are at least $(2k+1)$ choices of $m\in [1,4k+1]$, $m_1,\ldots,m_{2k+1}$, such that $\cU_{m_\ell}\cap (\boldsymbol{\delta}_1\cup \boldsymbol{\gamma}_1\cup \boldsymbol{\delta}_2\cup \boldsymbol{\delta}_2)\cap \cI=\emptyset$ for $\ell \in [1,2k+1]$.
For each $m\in [1,4k+1]$ and integer $i\ge 1$ such that $q_{i,m}-1\le b_2$, find the unique integer $x_{m,i}\in [0,k]$ such that
\begin{align}\label{equation:shiftdetermine1}
\boldsymbol{E}_{1,[q_{i,m+1}+k,q_{i,m+2}-k-1-x_{m,i}]}
=\boldsymbol{E}_{2,[q_{i,m+1}+k+x_{m,i},q_{i,m+2}-k-1]} 
\end{align}
or~$x_{m,i}\in [-k,-1]$ such that 
\begin{align}\label{equation:shiftdetermine2}
\boldsymbol{E}_{1,[q_{i,m+1}+k-x_{m,i},q_{i,m+2}-k-1]}
=\boldsymbol{E}_{2,[q_{i,m+1}+k,q_{i,m+2}-k-1+x_{\ell,i}]} 
\end{align}
If no such index or more than one exist, let $x_{m,i}=k+1$. 
Since $[q_{i,m_\ell},q_{i,m_\ell+1}-1]\cap(\boldsymbol{\delta}_1\cup \boldsymbol{\gamma}_1\cup \boldsymbol{\delta}_2\cup \boldsymbol{\delta}_2)\cap \cI=\emptyset$, we have that 
\begin{align*}
&\boldsymbol{E}_{1,[q_{i,m_\ell}+|\boldsymbol{\gamma}_1\cap[b_1,q_{i,m_\ell}-1]|-|\boldsymbol{\delta}_1\cap[b_1,q_{i,m_\ell}-1]|,q_{i,m_\ell+1}-1+|\boldsymbol{\gamma}_1\cap[b_1,q_{i,m_\ell}-1]|-|\boldsymbol{\delta}_1\cap[b_1,q_{i,m_\ell}-1]|]}\\
=&\boldc_{[q_{i,m_\ell},q_{i,m_\ell+1}-1]}\\
=&\boldsymbol{E}_{2,[q_{i,m_\ell}+|\boldsymbol{\gamma}_2\cap[b_1,q_{i,m_\ell}-1]|-|\boldsymbol{\delta}_2\cap[b_1,q_{i,m_\ell}-1]|,q_{i,m_\ell+1}-1+|\boldsymbol{\gamma}_2\cap[b_1,q_{i,m_\ell}-1]|-|\boldsymbol{\delta}_2\cap[b_1,q_{i,m_\ell}-1]|]}\\    
\end{align*}
which implies that the integer $x_{m_\ell,i}\in [-k,k]$ satisfying \eqref{equation:shiftdetermine1} and \eqref{equation:shiftdetermine2} can be found for $\ell\in [1,2k+1]$. According to Proposition \ref{proposition:basicshifts}, such $x_{m_\ell,i}$ is unique.
In the following, we show that 
\begin{align}\label{equation:determineshiftsfinal}
    |\boldsymbol{\gamma}_w\cap\mathcal{I}|-|\boldsymbol{\delta}_w\cap\mathcal{I}|
    =\sum_{i:q_{i,m}-1\le b_2,i\in [1,\frac{b_2-b_1+1}{t}+1]}x_{m_\ell,i}
\end{align}
for $\ell\in [1,2k+1]$.

For any fixed $\ell\in [1,2k+1]$, let $i^*$ be the largest integer such that $(\boldsymbol{\gamma}_1\cup \boldsymbol{\delta}_1)\cap\mathcal{I}\cap[1,q_{i^*,m+1}+k-1]=\emptyset$.
Note that $x_{m_\ell,i}=0$ for $i\in [1,i^*]$, because $\cI$ is edit isolated and $(\boldsymbol{\gamma}_w\cup \boldsymbol{\delta}_w)\cap\mathcal{I}\cap[1,q_{i^*,m+1}+k-1]=\emptyset$ for $w\in[1,d]$. Hence, we have
 $|\boldsymbol{\gamma}_w\cap\mathcal{I}\cap[1,q_{i,m+1}-1]|-|\boldsymbol{\delta}_w\cap\mathcal{I}\cap[1,q_{i,m+1}-1|=0=\sum^{i^*}_{i=1}x_{m_\ell,i}$
 for $w\in\{1,2\}$.
According to Proposition  \ref{proposition:basicshifts} and definition of $x_{m,i}$, we have that \begin{align*}
x_{m_\ell,i}=&|\boldsymbol{\gamma}_2\cap[1,q_{i,m_\ell+1}+k-1]|-|\boldsymbol{\delta}_2\cap[1,q_{i,m_\ell+1}+k-1]|\\
&-|\boldsymbol{\gamma}_1\cap[1,q_{i,m_\ell+1}+k-1]|+|\boldsymbol{\delta}_1\cap[1,q_{i,m_\ell+1}+k-1]|\\
\overset{(a)}{=}&|\boldsymbol{\delta}_1\cap[q_{i-1,m_\ell+1}+k,q_{i,m_\ell+1}+k-1]|\\
&-|\boldsymbol{\gamma}_1\cap[q_{i-1,m_\ell+1}+k,q_{i,m_\ell+1}+k-1]|
\end{align*}
for $i\ge i^*+1$, where $(a)$ follows since $|\boldsymbol{\gamma}_2\cap[1,q_{i,m_\ell+1}+k-1]|=|\boldsymbol{\gamma}_1\cap[1,q_{i-1,m_\ell+1}+k-1]|$ and $|\boldsymbol{\delta}_2\cap[1,q_{i,m_\ell+1}+k-1]|=|\boldsymbol{\delta}_1\cap[1,q_{i-1,m_\ell+1}+k-1]|$. Moreover $x_{m_\ell,i}=0$ for $q_{i,m_\ell}\ge p_2+1$.
Therefore, 
\begin{align*}
&\sum_{i:q_{i,m_\ell}-1\le b_2,i\in [1,\lceil\frac{b_2-b_1+1}{t}\rceil+1]}x_{m_\ell,i}\\
=&\sum_{i:q_{i,m_\ell+1}-1\le p_2, i\ge i^*+1}(|\boldsymbol{\delta}_1\cap[q_{i-1,m_\ell+1}+k,q_{i,m_\ell+1}+k-1]|\\
&-|\boldsymbol{\gamma}_1\cap[q_{i-1,m_\ell+1}+k,q_{i,m_\ell+1}+k-1]|) \\
=&|\boldsymbol{\delta}_1\cap\cI|-|\boldsymbol{\gamma}_1\cap\cI|,
\end{align*}
where the last equality holds since $(\boldsymbol{\gamma}_1\cup \boldsymbol{\delta}_1)\cap\mathcal{I}\cap[1,q_{i^*,m+1}+k-1]=\emptyset$ and $\cI$ is edit isolated. Therefore, we have \eqref{equation:determineshiftsfinal} for $\ell\in [1,2k+1]$. Find the majority of $\sum_{i:q_{i,m}-1\le b_2,i\in [1,\frac{b_2-b_1+1}{t}+1]}x_{m,i}$
for $m\in [1,4k+1]$, we obtain the value $|\boldsymbol{\delta}_w\cap\cI|-|\boldsymbol{\gamma}_w\cap\cI|$ for $w\in [1,d]$.

Finally, since $x_{m,i} =0$ for each pair of $(m,i)$ when $\boldsymbol{E}_{w,[b_1,b_2]}=\boldsymbol{E}_{w',[b_1,b_2]}$ for any $w,w'\in[1,d]$, we have $|\boldsymbol{\gamma}_w\cap\mathcal{I}|=|\boldsymbol{\delta}_w\cap\mathcal{I}|$ for any $w\in[1,d]$. 
\end{proof}
The next lemma shows that we can recover most of the bits in $\boldc$. Before stating the lemma, we define the notion of a minimum edit isolated interval. An interval $\cI$ is called a minimum edit isolated interval if there is no strict sub-interval $\cI'\subsetneq \cI$ of $\cI$ that is edit isolated.

We note that the error locations in all heads are contained in a disjoint set of minimum isolated intervals.
\begin{lemma}\label{lemma:determineshiftsforedits}
	Let $\{[b_{1j},b_{2j}]\}^J_{j=1}$ be the set of output intervals in the algorithm before Lemma \ref{lemma:properties} and $\{\cI_j\}^J_{j=1}$ be the corresponding edit isolated intervals.   Let $\boldsymbol{E}\in\cE_{k}(\boldc)$ be a read matrix for some sequence $\boldc$ satisfying $L(\boldc,\le k)\le T$. For an index $i$ not in 
    any
	minimum edit isolated interval that is disjoint with $\cI_j$, $j\in [1,J]$, 
	if the column index of the bit $\boldsymbol{E}_{1,i-|[1:i-1]\cap\boldsymbol{\delta}_1|+|[1:i-1]\cap\boldsymbol{\gamma}_1|}$ coming from $c_i$ in the first read is not contained in one of the output intervals $[b_{1j},b_{2j}]$, 
	 the bit $c_i$ can be correctly recovered given $\boldsymbol{E}$.
\end{lemma}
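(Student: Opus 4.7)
The plan is to compute the column index $\tau_1(i)\triangleq i-|[1,i-1]\cap\boldsymbol{\delta}_1|+|[1,i-1]\cap\boldsymbol{\gamma}_1|$ of $c_i$ in the first read directly from the observable matrix $\boldsymbol{E}$, and then output $\boldsymbol{E}_{1,\tau_1(i)}$. By the very definition of $\tau_1$, this bit equals $c_i$, so correctness reduces to showing that the decoder can evaluate $\tau_1(i)$.

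First, I would localize $i$ with respect to the detected error regions. By properties (A) and (C) of Lemma \ref{lemma:properties}, every $i'\in\cI_j$ is mapped by the first head to a column lying inside $[p_{1j},p_{2j}]\subseteq[b_{1j},b_{2j}]$. Hence the hypothesis $\tau_1(i)\notin\bigcup_j[b_{1j},b_{2j}]$ forces $i\notin\bigcup_j\cI_j$. Combined with the exclusion of $i$ from every silent (i.e., disjoint from all $\cI_j$'s) minimum edit isolated interval, $i$ is strictly separated from all error-carrying intervals, so no minimum edit isolated interval straddles $i$.

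Second, I would argue that silent minimum edit isolated intervals contribute zero net shift to $|\boldsymbol{\gamma}_1\cap[1,i-1]|-|\boldsymbol{\delta}_1\cap[1,i-1]|$. Let $M$ be any silent minimum edit isolated interval; by property (E) the $d$ reads of $\boldsymbol{E}$ agree at every column associated with $M$. Since the error positions in head $w+1$ are the $+t$ shift of those in head $w$, matching the reads of two adjacent heads on $M$ position-by-position produces a chain of forced equalities of the form $c_j=c_{j+\delta}$ for some $1\le\delta\le k$ on a window of $\boldc$ of length of order $t-O(k)$. Because $t>T$ under the hypothesis of the theorem, any such chain with $\delta\ge 1$ would violate $L(\boldc,\le k)\le T$. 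Consequently $|\boldsymbol{\gamma}_w\cap M|=|\boldsymbol{\delta}_w\cap M|$ for all heads $w$, so silent intervals drop out of the cumulative shift and $|\boldsymbol{\gamma}_1\cap[1,i-1]|-|\boldsymbol{\delta}_1\cap[1,i-1]|=\sum_{j:\cI_j\subseteq[1,i-1]}(|\boldsymbol{\gamma}_1\cap\cI_j|-|\boldsymbol{\delta}_1\cap\cI_j|)$.

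Third, for each $\cI_j\subseteq[1,i-1]$, I would apply Lemma \ref{lemma:determineshiftedit} to compute $\Delta_j\triangleq|\boldsymbol{\gamma}_1\cap\cI_j|-|\boldsymbol{\delta}_1\cap\cI_j|$ from $\boldsymbol{E}_{[1,d],[b_{1j},b_{2j}]}$; by edit isolation this quantity is head-independent. Summing gives $\tau_1(i)=i+\sum_{j:\cI_j\subseteq[1,i-1]}\Delta_j$, which the decoder can evaluate, and outputting $\boldsymbol{E}_{1,\tau_1(i)}$ recovers $c_i$. The hard step is the periodicity-forcing argument of Step 2: one has to carefully unroll the column-by-column agreement between the reads of two adjacent heads on $M$, show that each mismatch of error positions between heads forces an equality between two bits of $\boldc$ separated by a fixed shift $\delta\le k$, and iterate this over $M$ to produce a period-$\le k$ window of length of order $t$, contradicting $L(\boldc,\le k)\le T$ since $t$ strictly exceeds $T$. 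This is in the spirit of the periodicity-based contradictions already used in the proofs of Lemmas \ref{lemma:identifyintervals} and \ref{lemma:determineshiftedit}.
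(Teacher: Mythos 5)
Your overall strategy coincides with the paper's: bits whose read columns fall outside the output intervals equal bits of $\boldc$ up to a cumulative shift, silent minimum edit isolated intervals contribute zero net shift, and the shifts of the detected intervals $\cI_j$ are computed via Lemma \ref{lemma:determineshiftedit} and summed. However, there are two places where your argument has genuine gaps. First, the step ``Summing gives $\tau_1(i)=i+\sum_{j:\cI_j\subseteq[1,i-1]}\Delta_j$, which the decoder can evaluate'' is exactly the nontrivial point, and you do not justify it: the decoder does not know the intervals $\cI_j$ (they live in source coordinates and are unobservable); it only sees the read-coordinate intervals $[b_{1j},b_{2j}]$. To decide which $\cI_j$ lie before $i$ you would compare positions in read coordinates, but the read position of $c_i$ is $\tau_1(i)$, which is what you are trying to compute --- a circularity. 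The paper resolves this by introducing, for each output interval, the largest disagreement column $q_j$, proving $q_j\in[b_{1j}+k+1,b_{2j}-k-1]$, and then establishing equality $(a)$ in Eq.~\eqref{equation:shiftfornoneditisolatedintervals}, which shows that because all cumulative shifts have magnitude at most $k$, the set $\{j: q_j<i'\}$ is unchanged when $i'$ is replaced by its shifted read position; this is what makes the estimate in Eq.~\eqref{equation:editrecoverbit} computable from $\boldsymbol{E}$ alone. Your proof needs an argument of this kind; without it the ``evaluate the sum'' step does not go through.

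Second, your zero-net-shift claim for a silent minimum edit isolated interval $M$ is only sketched, and the sketch as stated is not sound: you claim the head-to-head agreement on $M$ forces equalities $c_j=c_{j+\delta}$ over a window of length of order $t-O(k)$, contradicting $L(\boldc,\le k)\le T$. But errors do occur inside $M$ in every head (that is why $M$ is edit isolated and nontrivial), so you cannot extract an error-free agreement window of length order $t$; the forced-periodicity argument only applies on a stretch where the two heads being compared are error-free, which must be located explicitly. The correct route is either to invoke the ``Moreover'' clause of Lemma \ref{lemma:determineshiftedit} directly (as the paper does, noting that all the $x_{m,i}$ there vanish when the rows agree), or to redo the argument of Proposition \ref{proposition:basicshifts} on a clean window of length $T+2k+1$ found via the $(4k+1)$-fold interval decomposition; in either case the contradiction comes from a period-$\le k$ window of length $T+1$, not one of length order $t$. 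As written, the ``hard step'' you defer is both incompletely argued and aimed at the wrong window length.
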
 
\begin{proof}
Assume that $b_{11}<b_{12}<\ldots<b_{1J}$.
For each output interval $[b_{1j},b_{2j}]$, $j\in[1,J]$, let $q_j\in [b_{1j},b_{2j}]$ be the largest integer such that there exist $w,w'\in[1,d]$ satisfying $\boldsymbol{E}_{w,q_j}\ne \boldsymbol{E}_{w',q_j}$. We show that $q_j\in[b_{1j}+k+1, b_{2j}-k-1]$ for $j\in[1,J]$, unless when $b_{11}=1$ or $b_{2J}=n'\in[n+1,n+2k+1]$, we can assume that $b_{11}=-k-1$ and $b_{2J}=n+2k+2$, which does not affect the result. 
Note that for any index $i$ such that there exist $w,w'\in[1,d]$ satisfying $\boldsymbol{E}_{w,i}\ne \boldsymbol{E}_{w',i}$, the indices $[i+1,i+kdt+t]$ are not marked and contained in some output interval. We have that $q_j\le b_{2j}-k-1$. Similarly, $q_j\ge b_{1j}+k+1$ because the intervals $[q_j-kdt-t,q_j]$, $j\in[1,J]$ is not marked.

According to Lemma \ref{lemma:properties},
each output interval $[ b_{1j},b_{2j}]$ is associated with an edit isolated interval $\cI_j$, $j\in[1,J]$. 
Note that for any minimum edit isolated interval $[i_1,i_2]$ that is disjoint with $\cI_j$  for $j\in[1,J]$, we have that
\begin{align*}
\boldsymbol{E}_{w,i}=\boldsymbol{E}_{w',i}
\end{align*}
for any $w,w'\in[1,d]$ and $i\in[i_1,i_2]$. 
By Lemma \ref{lemma:determineshiftedit}, we have that $|[i_1,i_2]\cap\boldsymbol{\delta}_1|=|[i_1,i_2]\cap\boldsymbol{\gamma}_1|$, i.e., there is no bit shift caused by errors in interval $[i_1,i_2]$. In addition, the shift caused by errors in interval $\cI_j$, $j\in[1,J]$, which is $|\cI_j\cap\boldsymbol{\gamma}_1|-|\cI_j\cap\boldsymbol{\delta}_1|=s_j$, can be determined. This implies that 
\begin{align}\label{equation:shiftfornoneditisolatedintervals}
  &|[1:i'-1]\cap\boldsymbol{\gamma}_1|-|[1:i'-1]\cap\boldsymbol{\delta}_1|\nonumber\\
  =&\sum_{j:b_{2j}<i'+|[1:i'-1]\cap\boldsymbol{\gamma}_1|-|[1:i'-1]\cap\boldsymbol{\delta}_1|}s_j \nonumber\\
  =&\sum_{j:q_{j}<i'}s_j \nonumber\\  
  \overset{(a)}{=}&\sum_{j:q_{j}<i'+|[1:i'-1]\cap\boldsymbol{\gamma}_1|-|[1:i'-1]\cap\boldsymbol{\delta}_1|}s_j  
\end{align}
for any $i'$ satisfying: (1) $i'-|[1:i'-1]\cap\boldsymbol{\delta}_1|+|[1:i'-1]\cap\boldsymbol{\gamma}_1|$ not in any output interval $[b_{1j},b_{2j}]$, $j\in [1,J]$. (2) $i'$ is 
not in any minimum edit isolated interval that is disjoint with $\cI_j$, $j\in [1,J]$ 
. The equality $(a)$ holds because $q_{j}\in [b_{1j}+k+1, b_{2j}-k-1]$, and
$i'>q_{j}$ only when $b_{2j}<i'+|[1:i'-1]\cap\boldsymbol{\gamma}_1|+|[1:i'-1]\cap\boldsymbol{\delta}_1|$
. For the same reason, $i'<q_{j}$ only when  $b_{1j}>i'+|[1:i'-1]\cap\boldsymbol{\gamma}_1|+|[1:i'-1]\cap\boldsymbol{\delta}_1|$. 

For any $\boldsymbol{E}_{1,i}$ such that $i$ is not included in any output interval $[b_{1j},b_{2j}]$, $j\in[1,J]$, let
\begin{align}\label{equation:editrecoverbit}
    c'_{i-\sum_{j:q_{j}<i}s_j}=\boldsymbol{E}_{1,i}.
\end{align}
be an estimate of the bit $c_{i-\sum_{j:q_{j}<i}s_j}$. 
Then, for any index $i'$ such that 
$i'-|[1:i'-1]\cap\boldsymbol{\delta}_1|+|[1:i'-1]\cap\boldsymbol{\gamma}_1|$ is not included in any output interval and $i'$ is
not in any minimum edit isolated interval that is disjoint with $\cI_j$, $j\in [1,J]$, we have that 
\begin{align*}
&c_{i'}\\
=&\boldsymbol{E}_{1,i'-|[1:i'-1]\cap\boldsymbol{\delta}_1|+|[1:i'-1]\cap\boldsymbol{\gamma}_1|}\\
=&c'_{i'-|[1:i'-1]\cap\boldsymbol{\delta}_1-|+|[1:i'-1]\cap\boldsymbol{\gamma}_1|-\sum_{j:q_{j}<i'-|[1:i'-1]\cap\boldsymbol{\delta}_1-|+|[1:i'-1]\cap\boldsymbol{\gamma}_1|}s_j}\\
=&c'_{i'},
\end{align*}
where the last equality follows from \eqref{equation:shiftfornoneditisolatedintervals}. 
Therefore, the proof is done.
\end{proof}

\subsection{Correcting $k<d$ Deletions and Insertions}
The cases when $k<d$ are addressed in the following lemma, which proves the first part of Theorem \ref{theorem:mainedit}, where $k<d$.
\begin{lemma}\label{lemma:keditkheads}
	Let $\boldsymbol{E}\in\cE_{k}(\boldc)$ be a read matrix for some sequence $\boldc$ satisfying $L(\boldc,\le k)\le T$. Let the distance $t$ satisfy $t>(\frac{k^2}{4}+ 3k)(T+3k+1)+T+5k+1$. 
	If there is an interval $[b_1,b_2]$, an interval $[p_1,p_2]\subseteq [b_1,b_2]$, and an edit isolated interval $\mathcal{I}$ satisfying $\boldsymbol{E}_{[1,d],[p_1,p_2]}\in\cE_{k'}(\boldc_{\mathcal{I}})$ for some $k'\le d-1$, and $\boldsymbol{E}_{w,j}=\boldsymbol{E}_{w',j}$ for any $w,w'\in[1,d]$ and $j\in([b_1,p_1-1]\cup[p_2+1,b_2])$, then we can obtain a sequence $\bolde\in\{0,1\}^{p_1-b_1+b_2-p_2+|\cI|}$ such that $\bolde_{[1,p_{1}-b_{1}]}=\boldsymbol{E}_{w,[b_1,p_1-1]}$ for $w\in[1,d]$, $\bolde_{[p_{1}-b_{1}+1,p_1-b_1+|\cI|]}=\boldc_{\cI}$, and $\bolde_{[p_1-b_1+|\cI|+1,p_1-b_1+|\cI|+b_2-p_2]}=\boldsymbol{E}_{w,[p_2+1,b_2]}$ for $w\in[1,d]$. 
\end{lemma}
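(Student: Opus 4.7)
My strategy is to parallel the deletion-only argument of Lemma~\ref{lemma:kdeletionkheads} in three stages: (a) determine the net shift that the errors inside $\cI$ impose on the read, (b) locate each individual error column across all $d$ heads, and (c) reconstruct $\boldc_\cI$ bit-by-bit by using, for each source index, a head that sees it cleanly. For stage (a), I apply Lemma~\ref{lemma:determineshiftedit} to $([b_1,b_2],[p_1,p_2])$; this yields the common net shift $\Delta=|\boldsymbol{\gamma}_w\cap\cI|-|\boldsymbol{\delta}_w\cap\cI|$ for every $w$, so from $p_2-p_1+1=|\cI|+\Delta$ I read off $|\cI|$, i.e.\ the target length of $\boldc_\cI$.

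For stage (b), I use that $\cI$ is edit isolated, so the error columns of head $w$ inside $\cI$ are translations by $(w-1)t$ of those of head $1$. The $k'\le d-1$ source errors therefore induce at most $k'd$ distinct error columns in $\boldsymbol{E}$, lying in the range $[p_1-kt,p_2+kt]$. The hypothesis $t>(k^2/4+3k)(T+3k+1)+T+5k+1$ is chosen so that between any two consecutive error columns across all heads combined there is a clean window of length at least $T+3k+1$ in which no error of any head is present; iterating Proposition~\ref{proposition:basicshifts} on each such window pins down the cumulative shift of each head at each column and thereby lets me identify the head, column, and type (insertion or deletion) of every individual error. For stage (c), with the error map in hand each column $j\in[p_1,p_2]$ that is not an insertion column of head $w$ corresponds to a source index $i(w,j)\in\cI$ given by $j-p_1$ adjusted by the cumulative shift of head $w$ at column $j$. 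Since $r+s=k'\le d-1$, for every $i\in\cI$ there is at least one head $w^*$ in which $i$ is touched by no error; setting $c_i=\boldsymbol{E}_{w^*,j(w^*,i)}$ and concatenating with the undisturbed boundary reads $\boldsymbol{E}_{w,[b_1,p_1-1]}$ and $\boldsymbol{E}_{w,[p_2+1,b_2]}$ produces the required $\bolde$.

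I expect stage (b) to be the main obstacle. In the deletion-only setting of Lemma~\ref{lemma:kdeletionkheads} every error shortens the read, so the ordering of error columns across heads is transparent and a single left-to-right pass suffices. With mixed insertions and deletions, two errors of opposite type in different heads can cancel out in net shift, so an error column in one head can be invisible in isolation and only detectable by comparing three or more heads simultaneously; verifying that the $d$ heads collectively resolve every error while the clean windows required by Proposition~\ref{proposition:basicshifts} remain pairwise disjoint is the combinatorial core of the argument, and is precisely what drives the head-distance bound on $t$.
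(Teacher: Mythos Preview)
Your stage (b) contains a genuine gap: the claim that ``between any two consecutive error columns across all heads combined there is a clean window of length at least $T+3k+1$'' is false in general. The head-distance bound $t>(\tfrac{k^2}{4}+3k)(T+3k+1)+T+5k+1$ controls only the separation between the \emph{same} source error as seen in \emph{different} heads (these copies are $t$ apart), not the separation between \emph{different} source errors in the \emph{same} head. Nothing prevents head~$1$ from having two deletions at adjacent source positions, in which case their columns in $\boldsymbol{E}$ are adjacent and no clean window lies between them. Without such windows, Proposition~\ref{proposition:basicshifts} cannot be iterated to pin down the cumulative shift of each head at each column, and the rest of your argument (in particular the error-localisation map feeding into stage~(c)) collapses.

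The paper does \emph{not} attempt to localise individual errors. Instead it runs an inductive ``head-reduction'' argument: at the first column $i^*$ where two rows disagree, a majority/minority count (Propositions~\ref{proposition:majoritybit} and~\ref{proposition:numberofheadswitherrors}) identifies a head $w^*$ that must have an error within $[i^*-T-2k-1,i^*+k-1]$. It then lays out $\tfrac{k^2}{4}+3k$ test windows of length $T+3k+1$ starting just past $i^*$ (translated by $(w-w^*)t$ for each pair of consecutive heads), computes pairwise shifts $x_{w,\ell}$ on each, and forms an indicator vector $\boldz$ recording where these shifts change. A counting argument on the number of $1$-runs in $\boldz$ versus its length forces a long $0$-run, i.e.\ a stretch of windows over which every $x_{w,\cdot}$ is constant; at some column inside this stretch, each pair $(w,w+1)$ is simultaneously error-free. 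That column serves as a splice point: for each $w\in[1,d-1]$ one concatenates the prefix of row $w+1$ with the suffix of row $w$, obtaining a $(d-1)$-row read matrix lying in $\cE_{k'-1}(\boldc_\cI)$, because the detected error near $i^*$ in head $w^*$ is eliminated from every spliced row. Iterating until all rows agree yields $\boldc_\cI$. The head-distance bound is precisely what makes the $\tfrac{k^2}{4}+3k$ windows fit before the next copy of the $w^*$-error arrives in head $w^*+1$; it is not, and need not be, a bound on inter-error gaps.
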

\begin{proof}
Let $i^*$ be the minimum index such that $i^*\ge p_1$ and there exist different $w,w'\in[1,d]$ satisfying $\boldsymbol{E}_{w,i^*}\ne \boldsymbol{E}_{w',i^*}$. Let $\boldsymbol{E}_{w^*,i^*}$ be the minority bit among $\{\boldsymbol{E}_{w,i^*}\}^d_{w=1}$, i.e., there are at most $\lfloor \frac{d}{2}\rfloor$ bits among $\{\boldsymbol{E}_{w,i^*}\}^d_{w=1}$ being equal to $\boldsymbol{E}_{w^*,i^*}$.
 We will first show that there are edit errors occur near index $i^*$ in the $w^*$-th head, unless when the numbers of $1$-bits and $0$-bits among $\{\boldsymbol{E}_{w,i^*}\}^d_{w=1}$ are equal, edit errors occur near index $i^*$ in the first head.
To this end, we begin with the following proposition.
\begin{proposition}\label{proposition:majoritybit}
Let $\boldsymbol{E}\in\cE_k(\boldc)$ be a read matrix for some sequence $\boldc$ satisfying $L(\boldc,\le k)\le T$. Let $i^*>0$ be an integer such that $\boldsymbol{E}_{w,[i^*-T-2k-1,i^*-1]}=\boldsymbol{E}_{w',[i^*-T-2k-1 ,i^*-1]}$ for any $w',w\in [1,d]$.
For any $w_1,w_2\in[1,d]$ such that no error occurs in interval $[i^*-T-2k-1,i^*+k-1]$ in the $w_1$-th and $w_2$-th head, i.e., 
\begin{align}\label{equation:noerrornearistar1}
&(\boldsymbol{\delta}_{w_1}\cup \boldsymbol{\gamma}_{w_1})\cap [i^*-T-2k-1,i^*+k-1]= \emptyset, \text{ and}\nonumber\\
&(\boldsymbol{\delta}_{w_2}\cup \boldsymbol{\gamma}_{w_2})\cap [i^*-T-2k-1,i^*+k-1]= \emptyset,
\end{align}
the bits $\boldsymbol{E}_{w_1,i^*}$ and $\boldsymbol{E}_{w_2,i^*}$ are equal.
\end{proposition}
\begin{proof}
According to Proposition \ref{proposition:basicshifts}, we have that
\begin{align}\label{equation:equalshifts}
|\boldsymbol{\gamma}_{w_1}\cap[1,i^*-T-2k-2]|-|\boldsymbol{\delta}_{w_1}\cap [1,i^*-T-2k-2]|=|\boldsymbol{\gamma}_{w_2}\cap [1,i^*-T-2k-2]|- |\boldsymbol{\delta}_{w_2}\cap [1,i^*-T-2k-2]|.   
\end{align}
Then,
\begin{align*}
\boldsymbol{E}_{w_1,i^*} \overset{(a)}{=} &\boldc_{
i^*-|\boldsymbol{\gamma}_{w_1}\cap[1,i^*-T-2k-2]|+|\boldsymbol{\delta}_{w_1}
\cap [1,i^*-T-2k-2]|}\\
\overset{(b)}{=}&\boldc_{i^*-|\boldsymbol{\gamma}_{w_2}\cap[1,i^*-T-2k-2]|+|\boldsymbol{\delta_{w_2}}
\cap [1,i^*-T-2k-2]|}\\
\overset{(c)}{=}&\boldsymbol{E}_{w_2,i^*}
\end{align*}
where $(a)$ and $(c)$ follow from \eqref{equation:noerrornearistar1} and the fact that $|\boldsymbol{\gamma}_w\cap[1,i^*-T-2k-2]|-
|\boldsymbol{\delta}_w\cap [1,i^*-T-2k-2]|\le k$. Equality $(b)$ follows from \eqref{equation:equalshifts}.
\end{proof}
From Proposition \ref{proposition:majoritybit}, we can easily conclude that there is at least one error in interval $[i^*-T-2k-1,i^*+k-1]$ in one of the heads, i.e., 
$(\boldsymbol{\delta}_w\cup \boldsymbol{\gamma}_w)\cap [i^*-T-2k-1,i^*+k-1]\ne \emptyset$ 
for some $w\in[1,d]$. Otherwise the bits $\boldsymbol{E}_{w,i^*}$ are equal for all $w\in [1,d]$, contradicting to the definition of $i^*$.

Next, we need the following proposition.
\begin{proposition}\label{proposition:numberofheadswitherrors}
Let $\boldsymbol{E}\in\cE_k(\boldc)$ be a read matrix for some sequence $\boldc$ satisfying $L(\boldc,\le k)\le T$. Let  $i^*>0$ be an integer such that $\boldsymbol{E}_{w,[1,i^*-1]}=\boldsymbol{E}_{w',[1,i^*-1]}$ for any $w',w\in [1,d]$.
If $T^*\ge T+2k+1$ and $t> (k+1)T^*$, then the number of heads where at least one error occurs in interval $[i^*-T^*,i^*+k-1]$ is at most $\lfloor \frac{k+1}{2}\rfloor$, i.e., 
\begin{align*}
|\{w: (\boldsymbol{\delta}_w\cup \boldsymbol{\gamma}_w)\cap [i^*-T^*,i^*+k-1]\ne \emptyset\}|\le \lfloor \frac{k+1}{2}\rfloor
\end{align*}
Moreover, when $|\{w: (\boldsymbol{\delta}_w\cup \boldsymbol{\gamma}_w)\cap [i^*-T^*,i^*+k-1]\ne \emptyset\}|=\frac{k+1}{2}$,  at least one error occurs in $[i^*-T^*,i^*]$ in the first head, i.e., $(\boldsymbol{\delta}_1\cup \boldsymbol{\gamma}_1)\cap [i^*-T^*,i^*]\ne \emptyset$.
\end{proposition}
\begin{proof}


Let $\{w:w\in [2,d], (\boldsymbol{\delta}_w\cup \boldsymbol{\gamma}_w)\cap [i^*-T^*,i^*+k-1]\ne \emptyset\}=\{w_1,w_2,\ldots,w_M\}$ be the set of heads (not including the first head) that contains at least one error in interval $[i^*-T^*,i^*+k-1]$. Let $w_1>w_2>\ldots>w_M$. We will show that there exist a set of integers $i_1,i_2,\ldots,i_{M}\in[0,k]$ such that $i_1\ge i_2\ge \ldots\ge i_{M}$ and
\begin{align}\label{equation:atleasttwoerrors}
    &|(\boldsymbol{\delta}_1\cap [i^*-T^*-(w_\ell-1)t-(T^*+k)i_\ell,i^*-T^*-(w_\ell-2)t-(T^*+k)i_{\ell}-1]|\nonumber\\
    &+| \boldsymbol{\gamma}_1\cap [i^*-T^*-(w_\ell-1)t-(T^*+k)i_\ell,i^*-T^*-(w_\ell-2)t-(T^*+k)i_{\ell}-1]|\nonumber\\
    \ge &2
\end{align}
for $\ell\in [1,M]$. Note that the intervals $[i^*-T^*-(w_\ell-1)t-(T^*+k)i_\ell,i^*-T^*-(w_\ell-2)t-(T^*+k)i_{\ell}-1]$ are disjoint for different $\ell\in[1,M]$ and are within the interval $[-T^*-(T^*+k)(k+1),i^*-T^*-1]$, since $t>(k+1)(T^*+1)$ for $j\in [1,d]$ and $i_{\ell}\le k$ for $\ell\in [1,M]$. Then, the number of errors in the first head is at least $2|\{w:(\boldsymbol{\delta}_w\cup \boldsymbol{\gamma}_w)\cap [i^*-T^*,i^*+k-1]\ne \emptyset,w\in [2,d]\}|+\1((\boldsymbol{\delta}_1\cup \boldsymbol{\gamma}_1)\cap [i^*-T^*,i^*+k-1]\ne \emptyset)$, where $\1(A)$ is the indicator that equals $1$ when $A$ is true and equals $0$ otherwise. 
Hence, we have that
\begin{align*}
    2|\{w:(\boldsymbol{\delta}_w\cup \boldsymbol{\gamma}_w)\cap [i^*-T^*,i^*+k-1]\ne \emptyset,w\in [2,d]\}|+\1((\boldsymbol{\delta}_1\cup \boldsymbol{\gamma}_1)\cap [i^*-T^*,i^*+k-1]\ne \emptyset)\le k
\end{align*}
Then, it can be easily verified that the proposition follows.


Now we find the set of integers $i_1\ge i_2\ge \ldots \ge i_{M}$ satisfying \eqref{equation:atleasttwoerrors}. Let $i_0=k$. Starting from $\ell=1$ to $\ell=M$, find the largest integer $i_\ell$ such that $i_{\ell}\le i_{\ell-1}$ and no errors occur in interval $[i^*-T^*-(T^*+k)(i_\ell+1),i^*-T^*-(T^*+k)i_\ell-1]$ in the $w_\ell$-th or the $(w_\ell-1)$-th heads, i.e.,
\begin{align}\label{equation:findiell}
    (\boldsymbol{\gamma}_{w_\ell}\cup \boldsymbol{\delta}_{w_\ell}\cup \boldsymbol{\gamma}_{w_\ell-1}\cup \boldsymbol{\delta}_{w_\ell-1})\cap [i^*-T^*-(T^*+k)(i_\ell+1),i^*-T^*-(T^*+k)i_\ell-1]=\emptyset.
\end{align}
We show that such an $\ell\in[1,M]$ can be found as long as $t>(T^*+k)(k+2)$. Note that in the above procedure, for each integer $i\in[i_\ell+1,k]$, there is at least an edit error occurring in interval $[i^*-T^*-(T^*+k)(i+1),i^*-T^*-(T^*+k)i-1]$ in one of the heads $w$, which corresponds to an error that occurs in interval
$[i^*-T^*-(T^*+k)(i+1)-(w-1)t,i^*-T^*-(T^*+k)i-1-(w-1)t]$ in the first head. 
In addition, the intervals $[i^*-T^*-(T^*+k)(i+1)-(w-1)t,i^*-T^*-(T^*+k)i-1-(w-1)t]$ are disjoint for different pairs $(i,w)$, as long as $t\ge (T^*+k)(k+2)$. Since there are at most $k$ errors in the first head and there are $k+1$ choices of $i_\ell$, such an $i_\ell$ satisfying \eqref{equation:findiell} can be found.

Since $\boldsymbol{E}_{w_\ell,i}=\boldsymbol{E}_{w_\ell-1,i}$ for $i\in [i^*-T^*-(T^*+k)(i_\ell+1),i^*-T^*-(T^*+k)i_\ell-1]$, by Proposition \ref{proposition:basicshifts} we have that 
\begin{align}\label{equation:equalshift1}
    &|\boldsymbol{\gamma}_{w_\ell-1}\cap[1,i^*-T^*-(T^*+k)(i_\ell+1)-1]|-|\boldsymbol{\delta}_{w_\ell-1}\cap [1,i^*-T^*-(T^*+k)(i_\ell+1)-1]|\nonumber\\
    =&|\boldsymbol{\gamma}_{w_\ell}\cap [1,i^*-T^*-(T^*+k)(i_\ell+1)-1]|- |\boldsymbol{\delta}_{w_\ell}\cap [1,i^*-T^*-(T^*+k)(i_\ell+1)-1]|. 
\end{align}
On the other hand, we have that 
\begin{align}\label{equation:equalshift2}
    &|\boldsymbol{\gamma}_{w_\ell-1}\cap[1,i^*-T^*-(T^*+k)(i_\ell+1)-1-t]|-|\boldsymbol{\delta}_{w_\ell-1}\cap [1,i^*-T^*-(T^*+k)(i_\ell+1)-1-t]|\nonumber\\
    =&|\boldsymbol{\gamma}_{w_\ell}\cap [1,i^*-T^*-(T^*+k)(i_\ell+1)-1]|- |\boldsymbol{\delta}_{w_\ell}\cap [1,i^*-T^*-(T^*+k)(i_\ell+1)-1]|.     
\end{align}
Eq. \eqref{equation:equalshift1} and Eq. \eqref{equation:equalshift2} imply that 
\begin{align}\label{equation:equalshift3}
    &|\boldsymbol{\gamma}_{w_\ell-1}\cap[i^*-T^*-(T^*+k)(i_\ell+1)-t,i^*-T^*-(T^*+k)(i_\ell+1)-1]|\nonumber\\
    =&|\boldsymbol{\delta}_{w_\ell-1}\cap [i^*-T^*-(T^*+k)(i_\ell+1)-t,i^*-T^*-(T^*+k)(i_\ell+1)-1]|     
\end{align}
Since $(\boldsymbol{\gamma}_{w_\ell}\cup \boldsymbol{\delta}_{w_\ell})\cap [i^*-T^*,i^*+k-1]\ne\emptyset$ by definition of $w_\ell$, we have that  
\begin{align*}
  &(\boldsymbol{\gamma}_{w_\ell-1}\cup \boldsymbol{\delta}_{w_\ell-1})\cap
  [i^*-T^*-t,i^*+k-1-t]\\
    \subseteq &(\boldsymbol{\gamma}_{w_\ell-1}\cup \boldsymbol{\delta}_{w_\ell-1})\cap[i^*-T^*-(T^*+k)(i_\ell+1)-t,i^*-T^*-(T^*+k)(i_\ell+1)-1]\\
  \ne &\emptyset  
\end{align*}
Together with\eqref{equation:equalshift3}, we have that 
\begin{align*}
    &|\boldsymbol{\gamma}_{w_\ell-1}\cap[i^*-T^*-(T^*+k)(i_\ell+1)-t,i^*-T^*-(T^*+k)(i_\ell+1)-1]|\\
    +&|\boldsymbol{\delta}_{w_\ell-1}\cap [i^*-T^*-(T^*+k)(i_\ell+1)-t,i^*-T^*-(T^*+k)(i_\ell+1)-1]|\\
    \ge&2,
\end{align*}
which implies \eqref{equation:atleasttwoerrors} because $\boldsymbol{\gamma}_{w_\ell-1}=\boldsymbol{\gamma}_{1}+(w_\ell-2)t$ and $\boldsymbol{\delta}_{w_\ell-1}=\boldsymbol{\delta}_{1}+(w_\ell-2)t$. Hence, the proof is done.
\end{proof}
Recall that $w^*\in[1,d]$ is a head index such that $\boldsymbol{E}_{w^*,i^*}$ is a minority bit among $\{\boldsymbol{E}_{w,i^*}\}^d_{w=1}$, i.e., there are at most $\frac{d}{2}$ bits among $\{\boldsymbol{E}_{w,i^*}\}^d_{w=1}$ that is equal to $\boldsymbol{E}_{w^*,i^*}$. 
By Proposition \ref{proposition:majoritybit} and Proposition \ref{proposition:numberofheadswitherrors}, we conclude that when $k<d$, we have that $ (\boldsymbol{\delta}_{w^*}\cup \boldsymbol{\gamma}_{w^*})\cap [i^*-T-2k-1,i^*+k-1]\ne \emptyset$, if the number of bits among $\{\boldsymbol{E}_{w,i^*}\}^d_{w=1}$ being equal to $\boldsymbol{E}_{w^*,i^*}$ is  is less than $d/2$. If $k<d$ and the number of bits among $\{\boldsymbol{E}_{w,i^*}\}^d_{w=1}$ being equal to $\boldsymbol{E}_{w^*,i^*}$ is  is exactly $d/2$, we have that $ (\boldsymbol{\delta}_{1}\cup \boldsymbol{\gamma}_{1})\cap [i^*-T-2k-1,i^*+k-1]\ne \emptyset$.

Now we have found a $w^*$ with 
\begin{align}\label{equation:nonemptywstar}
(\boldsymbol{\delta}_{w^*}\cup \boldsymbol{\gamma}_{w^*})\cap [i^*-T-2k-1,i^*+k-1]\ne \emptyset.
\end{align}
In the remaining part of the proof, we show how to use knowledge of $w^*$ to correct at least one error for each head, and reduce the $d$-head case to a $(d-1)$-head case. Then, the lemma follows by induction, since the case when $d=1$ is obvious.
Assume that $w^*\le d-1$. The procedure when $w^*=d$ will be similar.

Note that $(\boldsymbol{\delta}_{w}\cup \boldsymbol{\gamma}_{w})\cap [i^*-T-2k-1+(w-w^*)t,i^*+k-1+(w-w^*)t]\ne \emptyset$ by \eqref{equation:nonemptywstar}.
Consider the set of intervals
\begin{align*}
 [i^*+2k+(\ell-1)(T+3k+1)+(w-w^*)t,i^*+2k-1+\ell(T+3k+1)+(w-w^*)t]    
\end{align*}
for $\ell\in[1,\frac{k^2}{4}+ 3k]$ and $w\in[1,d]$. For notation convenience, denote
\begin{align}\label{equation:intervalii}
    v_{w,\ell}\triangleq i^*+2k+(\ell-1)(T+3k+1)+(w-w^*)t
\end{align}
for $\ell\in[1,\frac{k^2}{4}+ 3k]$ and $w\in[1,d]$.
For each pair $\ell\in [1,\frac{k^2}{4}+ 3k]$ and $w\in[1,d-1]$, find a unique index $x_{w,\ell}\in[0,k]$, such that 
\begin{align}\label{equation:positiveshift}
\boldsymbol{E}_{w,[v_{w,\ell},v_{w,\ell+1}-1-x_{w,\ell}]}
=\boldsymbol{E}_{w+1,[v_{w,\ell}+x_{w,\ell},v_{w,\ell+1}-1]} 
\end{align}
or~$x_{w,\ell}\in [-k,-1]$ such that 
\begin{align}\label{equation:negativeshift}
&\boldsymbol{E}_{w,[v_{w,\ell}-x_{w,\ell},v_{w,\ell+1}-1]}=\boldsymbol{E}_{w+1,[v_{w,\ell},v_{w,\ell}+x_{w,\ell+1}-1]} 
\end{align}
If no such index or more than one exist, let $x_{w,\ell}=k+1$. 

Given $x_{w,\ell}$, $\ell\in[1,\frac{k^2}{4}+ 3k]$ and $w\in[1,d]$, define a binary vector $\boldz\in\{0,1\}^{\frac{k^2}{4}+ 3k}$ as follows:
\begin{align}\label{equation:definey}
z_\ell &= \begin{cases}
&1,\text{if there exists a $w\in[1,d-1]$ such that $x_{w,\ell}=k+1$}\\
&1,\text{if there exists a $w\in[1,d-1]$ such that $x_{w,\ell}\ne x_{w,\ell-1}$ and $x_{w,\ell},x_{w,\ell-1}\in[-k,k]$}\\
&0, \text{else}\\
\end{cases}
\end{align}
for $\ell\in \frac{k^2}{4}+ 3k$.
In \eqref{equation:definey}, it is assumed that $x_{w,0}=x_{w,1}$ for  $w\in [1,d-1]$.


Let $y^*=|(\boldsymbol{\gamma}_{w^*}\cup\boldsymbol{\delta}_{w^*}\cup \boldsymbol{\gamma}_{w^*+1}\cup\boldsymbol{\delta}_{w^*+1})\cap [v_{w^*,1}-k,v_{w^*,\frac{k^2}{4}+ 3k}+T+4k]|$ be the number of errors that occur in interval $[v_{w^*,1}-k,v_{w^*,\frac{k^2}{4}+ 3k}+T+4k]$ in the $w^*$-th or $(w^*+1)$-th head.
Note that $y^*=|(\boldsymbol{\gamma}_{w}\cup\boldsymbol{\delta}_{w}\cup \boldsymbol{\gamma}_{w+1}\cup\boldsymbol{\delta}_{w+1})\cap [v_{w,1}-k,v_{w,\frac{k^2}{4}+ 3k}+T+4k]|$ for $w\in[1,d]$. Moreover,  $\boldsymbol{E}_{w,[v_{w,1},v_{w,\frac{k^2}{4}+ 3k}+T+3k]}$ can be obtained by a subsequence of $\boldc_{[v_{w,1}-k,v_{w,\frac{k^2}{4}+ 3k}+T+4k]}$ after at most $y^*$ deletions and insertions in interval $[v_{w,1}-k,v_{w,\frac{k^2}{4}+ 3k}+T+4k]$ in the $w$-th head, $w\in [1,d-1]$.

We first show that $y^*\le k-1$. Note that the $|(\boldsymbol{\gamma}_{w^*}\cup\boldsymbol{\delta}_{w^*})\cap [i^*+k,n']|$ errors that occur after index $i^*+k$ in the $w^*$-th head, occur after index $i^*+k+t>v_{w^*,\frac{k^2}{4}+ 3k}+T+4k+1$ in the $(w^*+1)$-th head. Moreover, the errors that occur in interval $[i^*-T-2k-1,i^*+k-1]$ in the $w^*$-th head occur after $i^*-T-2k-1+t>v_{w^*,\frac{k^2}{4}+ 3k}+T+4k+1$ in the $(w^*+1)$-th head, since $t>(\frac{k^2}{4}+ 3k)(T+3k+1)+T+5k+1$. Recall that $(\boldsymbol{\delta}_{w^*}\cup \boldsymbol{\gamma}_{w^*})\cap [i^*-T-2k-1,i^*+k-1]\ne \emptyset$. 
Hence, there are at most  $k-|(\boldsymbol{\gamma}_{w^*}\cup\boldsymbol{\delta_{w^*}})\cap [i^*+k,n']|-1+|(\boldsymbol{\gamma}_{w^*}\cup\boldsymbol{\delta_{w^*}})\cap [i^*+k,n']|=k-1$ errors that occur in interval $[i^*+k,v_{w^*,\frac{k^2}{4}+ 3k}+T+4k]$ in the $w^*$-th or $(w^*+1)$-th head.  

Next, we show that there are at most $(2k-2)$ 1 entries in $\boldz$. Note that a single error in interval $[i^*+k,v_{w^*,\frac{k^2}{4}+ 3k}+T+4k]$ in the $w^*$-th or $(w^*+1)$-th head affects the value of at most a single entry $x_{w,\ell}$ and the entries $x_{w,\ell+1},\ldots,x_{w,\frac{k^2}{4}+ 3k}$ increase or decrease by $1$ for $w\in[1,d]$. This generates at most two $1$ entries in $\boldz$. Hence there are at most $2y^*\le 2k-2$ $1$ entries in $\boldz$.


Let $y$ be the number of $1$ runs in $\boldz$.
We show that there exists a $0$-run $(z_{i+1},\ldots,z_{i+k-y+2})$ of length $k-y+2$, for some $i\in[0,\frac{k^2}{4}+ 2k+y]$, which indicates that
\begin{align}\label{equation:longshift1}
\boldsymbol{E}_{w,[v_{w,i+1},v_{w,i+k-y+3}-x_{w,i+1}-1]}=\boldsymbol{E}_{w+1,[v_{w,i+1}+x_{w,i+1},v_{w,i+k-y+3}-1]}    
\end{align}
if $x_{w,i+1}\in [0,k]$ or
\begin{align}\label{equation:longshift2}
\boldsymbol{E}_{w,[v_{w,i+1}-x_{w,i+1},v_{w,i+k-y+3}-1]}=\boldsymbol{E}_{w+1,[v_{w,i+1},v_{w,i+k-y+3}+x_{w,i+1}-1]}    
\end{align}
if $x_{w,i+1}\in [-k,-1]$, for every $w\in[1,d-1]$.

Suppose on the contrary, each $0$ run has length no more than $k-y+1$. Note that there are at most $y+1$ $0$ runs with $y$ 1 runs. Therefore, the length of $\boldz$ is upper bounded by  
\begin{align*}
    \frac{k^2}{4}+ 3k\le & (y+1)(k-y+1) +2k-2\\
   = &  -y^2+ky+3k-1 \\
   \le & \frac{k^2}{4}+ 3k-1\\
\end{align*}
a contradiction. 

We have proved the existence of a 0 run $(z_{i+1},\ldots,z_{i+k-y+2})$, which implies \eqref{equation:longshift1} and \eqref{equation:longshift2}. We now show that there are at most $k-y+1$ errors occur in interval $[v_{w,i+1},v_{w,i+k-y+3}-1]$ in the $w$ and/or $(w+1)$-th head, for $w\in [1,d-1]$. As mentioned above, a single error in interval $[i^*+k,v_{w^*,\frac{k^2}{4}+ 3k}+T+4k]$ in the $w^*$-th or $(w^*+1)$-th head affects the value of at most a single entry $x_{w,\ell}$ and the entries $(x_{w,\ell+1},\ldots,x_{w,\frac{k^2}{4}+ 3k})$ increase or decrease by $1$ for $w\in[1,d]$. This generates at most a single $1$ run in $\boldz$.
In addition, errors in interval $[v_{w,i+1},v_{w,i+k-y+3}-1]$ in the $w$ and/or $(w+1)$-th head generate at most two $1$ runs that include $z_{i}$ and $z_{i+k-y+3}$. Therefore, there are at least $y-2$ $1$ runs in $\boldz$ that are generated by at least $y-2$ errors in $[i^*+k,[i^*+k,v_{w^*,\frac{k^2}{4}+ 3k}+T+4k]]\backslash [v_{w,i+1},v_{w,i+k-y+3}-1]$. Hence, the number of errors in interval $[v_{w,i+1},v_{w,i+k-y+3}-1]$ in the $w$ and/or $(w+1)$-th head is at most $y^*-y+2\le k-y+1$.

Therefore, there exists an integer $\ell\in [i+1,i+k-y+2]$ such that no errors occur in interval $[v_{w,\ell},v_{w,\ell+1}-1]$ in the $w$ and/or $(w+1)$-th head, which implies that $\boldsymbol{E}_{w+1,[p_1,v_{w,\ell}-|\boldsymbol{\delta}_{w+1}\cap \mathcal{I}\cap[1,v_{w,\ell}-1]|+|\boldsymbol{\gamma}_{w+1}\cap \mathcal{I}\cap[1,v_{w,\ell}-1]|+k]}$ is obtained from $\boldc_{\mathcal{I}\cap [1,v_{w,\ell}+k]}$, after deletion errors at locations $\boldsymbol{\delta}_{w+1}\cap\mathcal{I}\cap[1,v_{w,\ell}-1]$ and insertion errors at locations $\boldsymbol{\gamma}_{w+1}\cap\mathcal{I}\cap[1,v_{w,\ell}-1]$. 
Moreover, 
\begin{align*}
&\boldsymbol{E}_{w,[v_{w,\ell}+k+1-|\boldsymbol{\delta}_{w+1}\cap \mathcal{I}\cap[1,v_{w,\ell}-1]|+|\boldsymbol{\gamma}_{w+1}\cap \mathcal{I}\cap[1,v_{w,\ell}-1]|-x_{w,\ell},p_2]} \\
\overset{(a)}{=}&\boldsymbol{E}_{w,[v_{w,\ell}+k+1-|\boldsymbol{\delta}_{w}\cap \mathcal{I}\cap[1,v_{w,\ell}-1]|+|\boldsymbol{\gamma}_{w}\cap \mathcal{I}\cap[1,v_{w,\ell}-1]|,p_2]},
\end{align*}
where $(a)$ follows from Proposition \ref{proposition:basicshifts},
can be obtained from $\boldc_{\mathcal{I}\cap [v_{w,\ell}+k+1,n+k+1]}$, after deletion errors at locations $\boldsymbol{\delta}_{w}\cap\mathcal{I}\cap[v_{w,\ell},n+k+1]$ and insertion errors at locations $\boldsymbol{\gamma}_{w}\cap\mathcal{I}\cap[v_{w,\ell},n+k+1]$.
Therefore, by concatenating $$\boldsymbol{E}_{w+1,[p_1,v_{w,\ell}-|\boldsymbol{\delta}_{w+1}\cap \mathcal{I}\cap[1,v_{w,\ell}-1]|+|\boldsymbol{\gamma}_{w+1}\cap \mathcal{I}\cap[1,v_{w,\ell}-1]|+k]}$$ and
$$\boldsymbol{E}_{w,[v_{w,\ell}+k+1-|\boldsymbol{\delta}_{w+1}\cap \mathcal{I}\cap[1,v_{w,\ell}-1]|+|\boldsymbol{\gamma}_{w+1}\cap \mathcal{I}\cap[1,v_{w,\ell}-1]|-x_{w,\ell},p_2]},$$ we have a sequence obtained from $\boldc_{\mathcal{I}}$ by deletion errors with locations $(\boldsymbol{\delta}_{w+1}\cap\mathcal{I}\cap [1,v_{w,\ell}-1])\cup (\boldsymbol{\delta}_{w}\cap\mathcal{I}\cap [v_{w,\ell},n+k+1])$ and insertion errors at locations $(\boldsymbol{\gamma}_{w+1}\cap\mathcal{I}\cap [1,v_{w,\ell}-1])\cup (\boldsymbol{\gamma}_{w}\cap\mathcal{I}\cap [v_{w,\ell},n+k+1])$, $w\in [1,d-1]$ in $\boldc_{\mathcal{I}}$. Note that there are at most $|\boldsymbol{\delta}_{w}\cap\mathcal{I}|+|\boldsymbol{\gamma}_{w}\cap\mathcal{I}|-1$ errors in total in the concatenation, since $$|\boldsymbol{\delta}_{w+1}\cap\mathcal{I}\cap [1,v_{w,\ell}-1]|=|\boldsymbol{\delta}_{w}\cap\mathcal{I}\cap [1,v_{w,\ell}-1-t]|,$$ and
$$|\boldsymbol{\gamma}_{w+1}\cap\mathcal{I}\cap [1,v_{w,\ell}-1]|=|\boldsymbol{\gamma}_{w}\cap\mathcal{I}\cap [1,v_{w,\ell}-1-t]|,$$
and the errors occur in $[(\boldsymbol{\delta}_w\cup\boldsymbol{\gamma}_{w})\cap [v_{w,1}-T-4k-1,v_{w,1}-k-1]|]\ne\emptyset$ (see \eqref{equation:nonemptywstar}) are not included in the concatenation.
Finally, since $(z_{i+1},\ldots,z_{i+k-y+2})$ is a $0$ run, we have that $x_{w,i+1}=x_{w,i+2}=\ldots=x_{w,i+k-y+2}$ for $w\in [1,d-1]$. Hence, concatenating $\boldsymbol{E}_{w+1,[p_1,v_{w,i+1}+k]}$ and $\boldsymbol{E}_{w,[v_{w,i+1}+k+1-x_{w,i+1},p_2]}$ results in the same sequence as
concatenating $\boldsymbol{E}_{w+1,[p_1,v_{w,\ell}-|\boldsymbol{\delta}_{w+1}\cap \mathcal{I}\cap[1,v_{w,\ell}-1]|+|\boldsymbol{\gamma}_{w+1}\cap \mathcal{I}\cap[1,v_{w,\ell}-1]|+k]}$ and $\boldsymbol{E}_{w,[v_{w,\ell}+k+1-|\boldsymbol{\delta}_{w+1}\cap \mathcal{I}\cap[1,v_{w,\ell}-1]|+|\boldsymbol{\gamma}_{w+1}\cap \mathcal{I}\cap[1,v_{w,\ell}-1]|-x_{w,\ell},p_2]}$, $w\in[1,d-1]$. Note that $p_1,p_2$ and $x_{w,\ell}$ are not known by the algorithm. We concatenate $\boldsymbol{E}_{w+1,[b_1,v_{w,i+1}+k]}$ and $\boldsymbol{E}_{w,[v_{w,i+1}+k+1-x_{w,i+1},b_2]}$ for each $w\in[1,d-1]$.
Let the $d-1$ concatenated sequences be represented by a read matrix $\boldsymbol{E}'\in\{0,1\}^{(d-1)\times m'}$. Then from the above arguments, we have that $\boldsymbol{E}'_{[1,d-1],[p_1-b_1+1,m'-(b_2-p_2)]}\in \cE_{k''}(\boldc_{\cI})$ for some $k''\le k'-1$. In addition, we have that $\boldsymbol{E}'_{w,[1,p_1-b_1]}=\boldsymbol{E}_{w,[b_1,p_1-1]}$ and $\boldsymbol{E}'_{w,[m'-(b_2-p_2)+1,m']}=\boldsymbol{E}_{w,[p_2+1,b_2]}$ for any $w\in[1,d-1]$. 

For cases when $w^*=d$, the proof is similar, where instead of looking at intervals $[i^*+2k+(\ell-1)(T+3k+1)+(w-w^*)t,i^*+2k-1+\ell(T+3k+1)+(w-w^*)t]$ and defining $x_{w,\ell}$ and $z_{\ell}$, $w\in [1,d-1]$, $\ell\in[1,\frac{k^2}{4}+ 3k]$ on these intervals, we define $x_{w,\ell}$ and $z_{\ell}$ on intervals  $[i^*-T-3k-\ell(T+3k+1)+(w-w^*)t,i^*-T-3k-1-(\ell-1)(T+3k+1)+(w-w^*)t]$ for $\ell\in[1,\frac{k^2}{4}+ 3k+2]$ and $w\in[1,d-1]$. Then we find a $0$ run $(z_{i+1},\ldots,z_{i+k-y+2})$ of length $k-y+2$ in $\boldz$, where $y$ is the number of $1$ runs in $\boldz$, and concatenate 
$$\boldsymbol{E}_{w+1,[p_1,i^*-T-3k-(i+1)(T+3k+1)+(w-w^*)t+k]}$$ and $$\boldsymbol{E}_{w,[i^*-T-3k-(i+1)(T+3k+1)+(w-w^*)t+k+1-x_{w,i+1},p_2]}$$
for $w\in [1,d-1]$.

We have shown how to correct at least one error using $d$ reads. To correct all errors, we
repeat the same procedure iteratively. In each iteration, we get a read matrix with one less read. The algorithm stops when we get a read matrix where all rows are equal. Let $d^*$ be the number of rows of the read matrix after the algorithm stops. Let $\bolde$ be the first row of this matrix.

We complete the proof of Lemma \ref{lemma:keditkheads} with the following proposition, which claims that either the errors contained in the isolated interval $\cI$ are corrected in $\bolde$ and the number of errors in $\cI$ is at least $d-d^*$, or the errors contained in $\cI$ are not corrected in $\bolde$ and the number of errors in $\cI$ is at least $d+d^*$. In particular, for errors that are not detectable, i.e., when all the rows in $\boldsymbol{E}$ are equal, the number of errors contained in $\boldsymbol{E}$ is either $0$ or at least $2d$. The proposition 
will be used to prove the correctness of the encoding/decoding algorithms in Section \ref{subsection:encoding/decoding}.
\begin{proposition}\label{proposition:numberofuncorrectableerrors}
    Let the length of $\bolde$ be $m$. Then, $\bolde_{[1,p_{1}-b_{1}]}=\boldsymbol{E}_{w,[b_1,p_1-1]}$ and $\bolde_{[m-(b_2-p_2)+1,m]}=\boldsymbol{E}_{w,[p_2+1,b_2]}$ for $w\in[1,d]$. In addition, the number of errors $|(\boldsymbol{\delta}_w\cup\boldsymbol{\gamma}_w)\cap\cI|\ge d-d^*$.  
    If $\bolde_{[p_{1}-b_{1}+1,m-(b_2-p_2)]}\ne\boldc_{\cI}$, then the number of errors $|(\boldsymbol{\delta}_w\cup\boldsymbol{\gamma}_w)\cap\cI|\ge d+d^*$ for $d^*\ge 2$.
    When $d^*=1$, either $|(\boldsymbol{\delta}_w\cup\boldsymbol{\gamma}_w)\cap\cI|\ge d+d^*$ or it can be determined that $|(\boldsymbol{\delta}_w\cup\boldsymbol{\gamma}_w)\cap\cI|\ge d$. 
    In particular, if the number of errors $|(\boldsymbol{\delta}_w\cup\boldsymbol{\gamma}_w)\cap\cI|\le d-1$, then $\bolde_{[p_{1}-b_{1}+1,m-(b_2-p_2)]}=\boldc_{\cI}$.
\end{proposition}
\begin{proof}
Let $\boldsymbol{E}^i$ be the read matrix obtained after the $i$-th iteration, $i\in[1,d-d^*]$. Note that the number of rows in $\boldsymbol{E}^i$ is $d-i$. Let the number of columns in $\boldsymbol{E}^i$ be $m_i$. 

Note that $\boldsymbol{E}_{w,[b_1,p_1-1]}$ and $\boldsymbol{E}_{w,[p_2+1,b_2]}$ are equal for $w\in[1,d]$. The concatenation in the algorithm keeps the first $p_1-b_1$ bits and the last $b_2-p_2$ bits in each row. Therefore, $\boldsymbol{E}^i_{w,[1,p_{1}-b_{1}]}=\boldsymbol{E}^i_{1,[b_1,p_1-1]}$ and $\boldsymbol{E}^i_{w,[m_i-(b_2-p_2)+1,m_i]}=\boldsymbol{E}_{1,[p_2+1,b_2]}$ for $w\in[1,d]$ and $i\in[1,d-d^*]$, which implies that $\boldsymbol{s}^j_{[1,p_{1}-b_{1}]}=\boldsymbol{E}^i_{w,[b_1,p_1-1]}$ and $\boldsymbol{s}^j_{[m-(b_2-p_2)+1,m]}=\boldsymbol{E}_{w,[p_2+1,b_2]}$ for $w\in[1,d]$.

Furthermore, we proved that $\boldsymbol{E}^1_{[1,d-1],[p_1-b_1+1,m_1-(b_2-p_2)]}\in \cE_{k_1}(\boldc_{\cI})$ for some $k_1\le k'-1$. By induction, it can be proved that $\boldsymbol{E}^i_{[1,d-i],[p_1-b_1+1,m_i-(b_2-p_2)]}\in \cE_{k_i}(\boldc_{\cI})$ for some non-negative $k_i\le k_{i-1}-1$ for $i\in[2,d-d^*]$. Therefore, we have that $0\le k_{d-d^*}\le k'-(d-d^*)$ and thus, $k'\ge d-d^*$.

If $\bolde_{[p_{1}-b_{1}+1,m-(b_2-p_2)]}=\boldsymbol{E}^{d-d^*}_{w,[1,m_{d-d^*}]}\ne\boldc_{\cI}$ for $w\in[1,d^*]$, 
we let $\mathcal{I}=[i_1,i_2]$. 
When $d^*=1$, according to Lemma \ref{lemma:determineshiftsforedits}, the number of errors occur in $\cI$ can be determined. Therefore, if the parity of the number of errors occur in $\cI$ is different from the parity of $d+1$, we determine that the number of errors in $\cI$ is at least $d$.
Otherwise, the number of errors in $\cI$ is at least $d+1$. 

When $d^*\ge 2$, 
let $i'$ be the minimum index such that $i'\ge p_1-b_1+1$  satisfying $\boldsymbol{E}^{d-d^*}_{w,i'}\ne \boldc_{i_1+i'-(p_1-b_1)-1}$, i.e., $\boldsymbol{E}^{d-d^*}_{w,[p_1-b_1+1,i'-1]}=\boldc_{[i_1,i_1+i'-(p_1-b_1)-2]}$ and $\boldsymbol{E}^{d-d^*}_{w,i'}\ne \boldc_{i_1+i'-(p_1-b_1)-1}$. We show that $(\boldsymbol{\delta}_w\cup \boldsymbol{\gamma}_w)\cap [i'-T-2k'-1,i'+k'-1]\ne \emptyset$ for $w\in [1,d^*]$. Otherwise, there exists a $w^*\in [1,d^*]$, such that $(\boldsymbol{\delta}_{w^*}\cup \boldsymbol{\gamma}_{w^*})\cap [i'-T-2k'-1,i'+k'-1]= \emptyset$. Assume now that there is a virtual read $\boldsymbol{E}^{d-d^*}_{d^*+1,[1,m_{d-d^*}]}$. Assume that the distance between the $d^*$-th head and the $d^*+1$-th head is far enough so that the first error occurs after index $i'$ in the read\footnote{This might require extending the length of the heads to larger than $m_{d-d^*}$ when an error occurs near index $m_{d-d^*}$ in the $d^*$-th read in $\boldsymbol{E}^{d-d^*}$.} $\boldsymbol{E}^{d-d^*}_{d^*+1,[1,m_{d-d^*}]}$. Therefore, $\boldsymbol{E}^{d-d^*}_{d^*+1,[p_1-b_1+1,i'-1]}=\boldc_{[i_1,i_1+i'-(p_1-b_1)-2]}=\boldsymbol{E}^{d-d^*}_{w,[p_1-b_1+1,i'-1]}$ for $w\in [1,d^*]$ and $\boldsymbol{E}_{d+1,i'}=\boldc_{i_1+i'-(p_1-b_1)-1}$. Applying Proposition \ref{proposition:majoritybit} by considering a two-row matrix where the first row is $\boldsymbol{E}^{d-d^*}_{w^*,[1,m_{d-d^*}]}$ and the second row is $\boldsymbol{E}^{d-d^*}_{d^*+1,[1,m_{d-d^*}]}$, we have that $\boldsymbol{E}_{w^*,i'}=\boldsymbol{E}_{d^*+1,i'}=\boldc_{i_1+i^*-(p_1-b_1)-1}$, contradicting to the definition of $i'$. Hence, we have that $(\boldsymbol{\delta}_w\cup \boldsymbol{\gamma}_w)\cap [i'-T-2k'-1,i'+k'-1]\ne \emptyset$ for $w\in [1,d^*]$.

According to Proposition \ref{proposition:numberofheadswitherrors}, we have that $k_{d-d^*}\ge 2d^*-1$. Furthermore, by Lemma \ref{lemma:determineshiftedit}, we have that $k_{d-d^*}$ is an even number and thus $k_{d-d^*}\ge 2d^*$. Therefore, we have that $k'\ge k_{d-d^*}+d-d^*\ge d+d^*$, when $\bolde_{[p_{1}-b_{1}+1,m-(b_2-p_2)]}\ne\boldc_{\cI}$.
\end{proof}
\end{proof}

\subsection{Encoding/Decoding Algorithms}\label{subsection:encoding/decoding}
We are now ready to present the encoding and decoding algorithms. We first deal with cases when $d\ge 2d$. 
Given any input sequence $\boldc\in\{0,1\}^n$, the encoding is similar to the one in Section \ref{section:greaterthan2d}, stated in  \eqref{equation:encoding2d} and \eqref{equation:encoding2d1} and is given by
\begin{align}
    Enc_2(\boldc)=(F(\boldc),R^{'}_2(\boldc),R^{''}_2(\boldc)),
\end{align}
where~
\begin{align}
R^{'}_2(\boldc)&=RS_{2\lfloor k/d \rfloor}(S(F(\boldc))),\nonumber\\
R^{''}_2(\boldc)&=Rep_{k+1}(Hash(R^{'}_2(\boldc))).
\end{align}
The difference here is in the definition of the 
 function $S(F(\boldc))$, 
 instead of splitting $F(\boldc)$ into blocks of length $B$, as in \eqref{equation:split}, we split $F(\boldc)$ into blocks of length $B'=(2kdt+2t+1)(k+1)+kdt+3k$, which is $k$ plus the upper bound on the length of the output intervals $[b_{1j},b_{2j}]$, $j\in[1,J]$, i.e.,
 \begin{align}\label{equation:blocks}
     F(\boldc)=&(\bolda'_1,\ldots,\bolda'_{\lceil\frac{n+k+1}{B'}\rceil}), \text{ and}\nonumber\\
     S(F(\boldc))=&(Hash(\bolda'_1),\ldots,Hash(\bolda'_{\lceil\frac{n+k+1}{B'}\rceil}))
 \end{align}
It can be verified that the code has asymptotically the same redundancy $2\lfloor k/d \rfloor\log n+o(\log n)$ as in deletion only cases. In the following, we show that the codeword $Enc_2(\boldc)$ can be correctly decoded. 

We first show that any two sequences $\boldc,\boldc'\in\{0,1\}^n$ such that $F(\boldc)$ and $F(\boldc')$ differ in at least $2\lfloor k/d\rfloor+1$ blocks of length $B'$ cannot result in the same read matrix $\boldsymbol{E}\in\cE_k(Enc_2(\boldc))$.
Suppose on the contrary, $F(\boldc)$ and $F(\boldc')$ differ in at least $2\lfloor k/d\rfloor+1$ blocks. 
Note that according to Lemma \ref{lemma:determineshiftsforedits}, the bits not contained in any minimum edit isolated interval and not in any interval $[b_{1j},b_{2j}]$ after errors can be determined. Therefore, these $2\lfloor k/d\rfloor+1$ blocks either intersects a minimum edit isolated interval or contains a bit $\boldc_i$ that falls within interval $[b_{1j},b_{2j}]$ in the first head. 
Since each minimum edit isolated interval or interval $[b_{1j},b_{2j}]$ is contained within at most two blocks, then at least $2\lfloor k/d\rfloor+1$ blocks where $F(\boldc)$ and $F(\boldc')$ differ contain at least $\lfloor k/d\rfloor+1$ minimum edit isolated intervals. 
According to Proposition \ref{proposition:numberofuncorrectableerrors}, for each minimum edit isolated interval $\cI$, the number of errors $|(\boldsymbol{\delta}_1\cup\boldsymbol{\gamma}_1)\cap\cI|$ in interval $\cI$ when $Enc_2(\boldc)_{\cI}$ and $Enc_2(\boldc')_{\cI}$ is the true sequence should be at least $d-d^*$ and $d+d^*$, or $d+d^*$ and $d-d^*$, respectively, or both at least $d$. Hence, the sum of number of errors in a minimum edit isolated interval when $Enc_2(\boldc)_{\cI}$ and $Enc_2(\boldc')_{\cI}$ is the true sequence is at least $2d$, when $Enc_2(\boldc)_{\cI}\ne Enc_2(\boldc')_{\cI}$. By assumption, we have at least $\lfloor k/d\rfloor+1$ minimum edit isolated intervals $\cI$ satisfying $Enc_2(\boldc)_{\cI}\ne Enc_2(\boldc')_{\cI}$. This implies that the total number of errors occur in the first read when $Enc_2(\boldc)$ and $Enc_2(\boldc')$ is the true sequence should be at least $2d(\lfloor k/d\rfloor+1)>2k$, a contradiction.
Therefore, the encoding $Enc_2(\boldc)$ gives a valid code because any two different $Enc_2(\boldc)$ and $Enc_2(\boldc')$ have block distance at least $2\lfloor k/d\rfloor+1$.

Next, we present the decoding algorithm. 
Similar to what we did in the proof of Theorem \ref{theorem:lessthan2d} and Theorem \ref{theorem:2d}, we use the first row in $\boldsymbol{E}$ to decode  $R'_2(\boldc)$. From Lemma \ref{lemma:deletionequivalenttoinsertion}, we conclude that we can first recover $Hash(R'_2(\boldc))$ and then $R'_2(\boldc)$ using the deletion correcting hash in Lemma \ref{lemma:hash}.

Then,
We use the algorithm before Lemma \ref{lemma:properties} to obtain a set of output intervals $[b_{1j},b_{2j}]$ for $j\in[1,J]$. Then we use \eqref{equation:editrecoverbit} to recover the bits $\boldc_i$ that do not fall within output intervals $[b_{1j},b_{2j}]$, $j\in[1,J]$, in the read of the first head, i.e., the first row in $\boldsymbol{E}$. To compute \eqref{equation:editrecoverbit}, we need Lemma \ref{lemma:determineshiftedit} to determine the shifts caused by errors in each edit isolated interval $\cI_j$, $j\in[1,J]$, which is $s_j$ in \eqref{equation:editrecoverbit}.
  
Then, we apply the algorithm in Lemma \ref{lemma:keditkheads} to every output interval $[b_{1j},b_{2j}]$ and $\boldsymbol{E}_{[1,d],[b_{1j},b_{2j}]}$ and obtain an estimate sequence $\bolde^j$ for $j\in[1,J]$. The sequence $\bolde^j$ is an estimate of $\boldc_{i}$ for all $i$ such that the index $i+|[1,i-1]\cap\boldsymbol{\gamma}_1|-|[1,i-1]\cap\boldsymbol{\delta}_1|$, which can be determined using \eqref{equation:shiftfornoneditisolatedintervals}, is in $[b_{1j},b_{2j}]$. 

According to
Lemma  \ref{lemma:determineshiftsforedits}, 
$\boldc_i$ can be correctly recovered if $i$ is not in any minimum edit isolated interval and the index $i+|[1,i-1]\cap\boldsymbol{\gamma}_1|-|[1,i-1]\cap\boldsymbol{\delta}_1|$, where $\boldc_i$ locates in the first row in the read matrix $\boldsymbol{E}$, is not in the output intervals $[b_{1j},b_{2j}]$, $j\in[1,J]$. 
Moreover, 
note that there are at most $2k$ intervals $[b_{1j},b_{2j}]$ such that $\boldsymbol{E}_{[1,d],[b_{1j},b_{2j}]}$ is not correctly decoded by the algorithm in Lemma \ref{lemma:keditkheads}. We enumerate all possibilities of the set of intervals among $\{[b_{1j},b_{2j}]\}^J_{j=1}$ that are not corrected. There are at most $\sum^{2k}_{i=0}\binom{i}{J}\le (2k)J^k\le (2k)k^k$ such choices. For each choice, we assume that the set of the chosen intervals $\{[b_{1j_i},b_{2j_i}]\}^M_{i=1}$, which cover at most $2M$ blocks in $F(\boldc)$, are corrupted by erasures. In addition, we calculate the number of errors needed for a given choice of the set of intervals. Recall that for each interval $[b_{1j},b_{2j}]$, $j\in[1,J]$, we use the iterative algorithm in Lemma \ref{lemma:keditkheads}. Let $d^*_j$ be the number of rows in the read matrix after applying iterative algorithm in Lemma \ref{lemma:keditkheads} on interval $[b_{1j},b_{2j}]$. Then, according to Proposition \ref{proposition:numberofuncorrectableerrors}, if interval $[b_{1j},b_{2j}]$ is selected, let the number of errors needed in $[b_{1j},b_{2j}]$ for the given choice be $r_j=d+d^*_j$ or $r_j=d$, when $d^*_j=1$ and the parity of $d-1$ are different from the parity of number of errors in $\cI_j$. Otherwise, if interval $[b_{1j},b_{2j}]$ is not selected, then $r_j=d-d^*_j$, or $r_j=d$, when $d^*_j=1$ and the parity of $d-1$ are different from the parity of number of errors in $\cI_j$. 
Let $S=\sum^J_{j=1}r_j$ be a lower bound on the total number of errors needed in intervals $\{[b_{1j},b_{2j}]\}^J_{j=1}$ to generate the given read matrix $\boldsymbol{E}$. Then, by Proposition \ref{proposition:numberofuncorrectableerrors}, a minimum edit isolated interval that does not intersect $\{[b_{1j},b_{2j}]\}^J_{j=1}$ after deletions and insertions contain at least $2d$ errors. Therefore, there are at most $\lfloor (k-S)/2d\rfloor$ such intervals, that cause at most $2\lfloor (k-S)/2d\rfloor$ block substitutions in $F(\boldc)$.  
Therefore, we have at most $2M$ block erasures in $S(F(\boldc))$. In addition, we assume at most $2\lfloor (k-S)/2d\rfloor$ block substitution errors in $S(F(\boldc))$. We use the Reed-Solomon decoding algorithm in \cite{truong88} to correct a combination of $2M$ erasure errors and $\lfloor k/d\rfloor-M$ substitution errors in $S(F(\boldc))$ using Reed-Solomon codes of distance $2\lfloor k/d\rfloor+1$. 

When the correct choice of set of intervals $[b_{1j},b_{2j}]$ is given, i.e., the set of intervals $[b_{1j},b_{2j}]$ are exactly the intervals that are not correctly recovered by the algorithm in Lemma \ref{lemma:keditkheads}, $S(F(\boldc))$ can be correctly recovered given $R'_2(\boldc)$, using the decoder in \cite{truong88}. Then, given $S(F(\boldc))$, the algorithm to recover $\boldc$ is the same as the algorithm in Section \ref{section:greaterthan2d}. 

When the incorrect choice of set of intervals $[b_{1j},b_{2j}]$ is given, we have shown at the beginning of this section that to satisfy requirement on the number of errors $r_j$  needed to generate the given read matrix $\boldsymbol{E}$, the number of blocks where $F(\boldc)$, recovered by the correct choice, and $F(\boldc')$, recovered by the incorrect choice differ is at most $2\lfloor k/d\rfloor$. Therefore, either the sequence $S(F(\boldc))$ cannot be uniquely decoded or the decoded $S(F(\boldc))$ does not satisfy the number of errors requirement, when the incorrect interval choice is given. Therefore, when $2d\le k$, the sequence $\boldc$ can be correctly encoded and decoded. The time complexity is dominated by the time needed to compute $S(F(\boldc))$, which is $O(n\log^{2k}n)$.

For cases when $d\le k\le 2d-1 $. The codes are similar to 
those in Section \ref{section:lessthan2mdeletions}, 
which is given by
\begin{align}
    Enc_1(\boldc)=(F(\boldc),R^{'}_1(\boldc),R^{''}_1(\boldc))
\end{align}
where~
\begin{align}
R^{'}_1(\boldc)&=ER(S(F(\boldc))),\nonumber\\
R^{''}_1(\boldc)&=Rep_{k+1}(Hash(R'_1(\boldc))).
\end{align}
Similar to cases when $k\ge 2d$, we split $F(\boldc)$ into blocks of length $B'$, the same as in \eqref{equation:blocks}. The redundancy is $4k\log\log n +o(\log\log n)$. The correctness of the code is similar to cases when $k\ge 2d$.
Note that when $d\le k\le 2d-1$, there is no minimum edit isolated interval that is not within some interval $[b_{1j},b_{2j}]$. In addition, there is at most one interval $[b_{1j},b_{2j}]$ that where at least $d$ errors occur. We enumerate all choices of such interval. Similar to the case when $k\ge 2d$, to satisfy the number of errors requirement, only the correct choice of the interval gives a valid and correct decoded sequence $\boldc$.

\section{Conclusions}\label{section:conclusion}
We constructed~$d$-head~$k$-deletion racetrack memory codes for any~$k\ge d+1$, extending previous works which addressed cases when~$k\le d$. We proved that for small head distances~$t_i=n^{o(1)}$ and for~$k\ge 2d$, the redundancy of our codes is asymptotically at most four times the optimal redundancy. We also generalized the results and proved that the same redundancy results hold for $d$-head codes correcting a combination of at most $k$ deletions and insertions. 
Finding a lower bound on the redundancy for~$d\le k\le 2d-1$ would be interesting, for both deletion correcting codes and codes correcting a combination of deletions and insertions. It is also desirable to tighten the gap between the upper and lower bounds of the redundancy for cases when~$k\ge 2d$. 

\bibliographystyle{IEEEtran}

\end{document}